\numberwithin{equation}{section}
\theoremstyle{plain}
\newcommand{\lee}[1]{\textit{\textbf{{\color{red}#1}}}}
\newcommand{\ingrid}[1]{\textit{\textbf{{\color{violet}#1}}}}
\newcommand{\mobius}{M\"obius }
\newtheorem{theorem}{Theorem}[section]
\newtheorem*{theorem*}{Theorem}
\newtheorem{example}[theorem]{Example}
\newtheorem{proposition}[theorem]{Proposition}
\theoremstyle{remark}
\newtheorem{remark}[theorem]{Remark}
\newcommand{\Sym}{\operatorname{Sym}}
\newtheorem{De}[theorem]{Definition}
\title{Topology and geometry of molecular conformational spaces\\ and energy landscapes}
\author[1]{Ingrid Membrillo-Solis}
\author[1]{Mariam Pirashvili}
\author[2]{Lee Steinberg}
\author[1]{Jacek Brodzki}
\author[2]{Jeremy  G. Frey}
\affil[1]{School of Mathematical Sciences, University of Southampton}
\affil[2]{School of Chemistry, University of Southampton}
\date{July 2019}
\begin{document}

\maketitle


\begin{abstract}
Understanding the geometry and topology of configuration or conformational spaces of molecules has relevant applications in chemistry and biology such as the  proteins folding problem, drug design and the structure activity relationship problem. Despite their relevance, configuration spaces of molecules are only partially understood. In this paper we discuss both theoretical and computational approaches to the configuration spaces of molecules and their associated energy landscapes. Our mathematical approach shows that when symmetries of the molecules are taken into account, configuration spaces of molecules give rise to certain principal bundles and orbifolds. We also make use of a variety of geometric and topological tools for data analysis to study the topology and geometry of these spaces. 
\end{abstract}

\section{Introduction}


A molecule is an example of a $n$-body system formed by the nuclei and the electrons of its constituent atoms. The first step of the Born-Oppenheimer approximation allows to represent the nuclei as points in Euclidean space. 
The space defined by the degrees of freedom of the molecular system after elimination of the symmetry group actions is called the \textit{internal configuration or conformational space}. The potential energy surface (PES) describes the energy of a molecule. It is a function on the internal conformational space of the molecule. 

The study of the conformational spaces of molecules and their associated PES is of particular interest in Chemistry. It can help to understand the relationship between structure and properties of molecules. For instance, PES can give insight into the structure and dynamics of macromolecules such as proteins. It has been shown that protein native-state structures can arise from considerations of symmetry and geometry associated with the polypeptide chain \cite{ATSFM}. Furthermore, understanding of the binding pose of a drug with its potential target requires knowledge of both their underlying PES, and conformational spaces \cite{Tao2001}.
Conformer generation is regarded as a fundamental factor in drug design \cite{Confgen}. Therefore, several methods exist that produce conformer sets, sampling the conformational space. One of the challenges is obtaining an algorithm that reduces the number of duplicate conformers. This could be achieved if the symmetries of a given molecule are taken into account. Also, it has been claimed that prediction of melting points can be improved by taking molecular symmetry into account \cite{Wei,Pinal}. 

In this paper we explore the geometry and topology of the conformational space of molecules and their quotients by symmetry groups. Despite the importance of the group of symmetries and its action on the space of conformers, to our knowledge there are no works on spaces of conformers that include it. Furthermore, conformational spaces are often discussed only in terms of their torsional degrees of freedom, operating under the rigid geometry hypothesis \cite{Gibson1997}. The impact of other degrees of freedom on the conformational spaces themselves is often ignored.

\subsection{Geometric and topological methods in data analysis}
Topology-based data analysis methods have seen continued interest in recent years. Persistent homology and discrete Morse theory are two topological data analysis tools, which are closely interlinked, and which we have applied to the exploration of conformation spaces of molecules.

Persistent homology, which may be more familiar in the chemistry community, is a method of assigning numerical descriptors to data, based on topological notions of shape, which emerges through a process of creating a combinatorial structure, called a simplicial complex, from the data, together with a filter function. These descriptors satisfy robust stability results with respect to the so-called bottleneck distance. The use of this method allowed us to explore the topology of the conformation spaces.

Discrete Morse theory is mathematically very closely linked to persistent homology, but it has different applications. It is used for topological simplification, for distilling the information down to the most relevant. We used it to explore the energy landscapes via their extrema.

In the paper \emph{PHoS: Persistent Homology for Virtual Screening} \cite{PHoS}, Keller et al. use multi-dimensional persistence to investigate molecules in the context of drug discovery. Their idea is using two filter functions, one of which is a scalar field defined around the molecule. This is similar to our ideas, however we focus on the conformational space, rather than individual molecules, and instead of merely calculating the persistence of the scalar function (the energy landscape, in our case), we explore it using discrete Morse theory. 

Discrete Morse theory has recently been used to reconstruct hidden graph-like structures from potentially noisy data. This has found application in vastly diverse areas. For example, Sousbie et al. used the simulated density field of dark matter to reconstruct the network of filaments in the large scale distribution of matter in the universe, the so-called cosmic web \cite{Sousbie1}. Given a collection of GPS trajectories, Dey et al. recovered the hidden road network by modelling it as reconstructing a geometric graph embedded in the plane \cite{Wang1}. Another paper by Delgado-Friedrichs et al. defines skeletons and partitions of greyscale digital images by modelling a greyscale image as a cubical complex with a real-valued function defined on its vertices and using discrete Morse theory \cite{cubical}.

A more fundamental application of discrete Morse theory in topological data analysis is topological simplification. Here, the link with persistent homology allows a topology-based denoising of data, as explored in \cite{Edelsbrunner2000} and \cite{Bauer2012}.

This methodology has already been introduced to the chemical setting as well. Gyulassi et al. \cite{porous} used Morse-theoretic approaches to investigate the properties of a simulated porous solid as it is hit by a projectile by generating distance fields containing a minimal number of topological features, and using them to identify features of the material. In \cite{Beketayev2011}, the authors construct the Morse-Smale complex of 

Our approach relies on these results, however our focus is somewhat different. We use the connection between Morse theory and persistent homology to construct a combinatorial summary of the conformation space of a given molecule, which takes into account both the topological properties of the conformation space, as well as the energy landscape defined on it.
\subsection{The workflow and outline of the paper}
The general outline of the paper is shown in Figure \ref{fig:workflow}. Starting with the computational details in the second section, we explain the conformer generation procedure used, followed by the calculations of the potential and free energy surfaces. Next, we move on to outline the geometric and topological methods used for the analysis of the conformational space.

Section 3 discusses our mathematical framework with regards to the conformational spaces arising from the molecular graphs and group actions on these spaces, as well as the metrics defined on them. This section contains our original theoretical results.

The fourth section discusses the results of our analyses performed on several benchmark molecules. These are based on the mathematical framework of the previous section, and use the aforementioned mathematical data analysis methods.

Finally, we end with our conclusions in Section 5.

In the Appendix, we explain in more detail the mathematical methods applied in the paper.
\begin{figure}
\centering
\includegraphics[width=\linewidth]{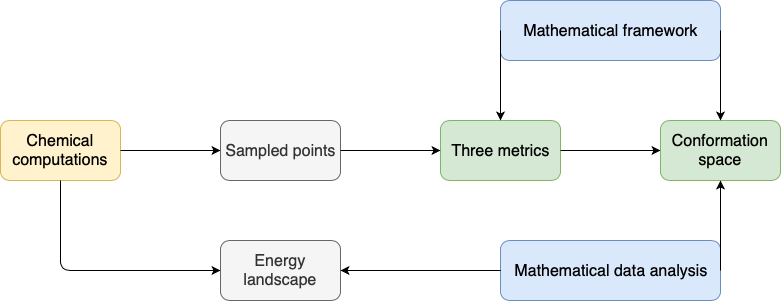}
\caption{The workflow of the paper.}
\label{fig:workflow}
\end{figure}

\section{Computational details}
\subsection{Conformer Generation Procedure}
The task of creating sets of molecular conformations is inherently complex due to the large number of degrees of freedom in a molecule. Furthermore, it is often the case that in reality what is actually desired is a set of low-energy structures, and often the ability of an algorithm or program to create these conformers is used as its quality metric \cite{Ebejer2012}. In general conformer generation procedures can be separated into knowledge-based, grid search, or distance geometry based approaches. Knowledge-based approaches use known low-energy conformers, such as crystal structures, to define rules which can then be used to generate conformers for a new molecule. Grid search approaches simply enumerate combinations of different degrees of freedom. Finally, distance geometry uses upper and lower bounds to create sets of conformations that satisfy these bounds. The reader is directed to \cite{Ebejer2012} for more information regarding these methods.

For this work, we are in general unrestricted by energy, instead using a more general \emph{physicality} criteria. This is because we would like to sample the conformational space as best as we can, rather than simply obtain representative low energy conformers. However, if energy were totally unrestricted, this would lead to conformers that we would consider unphysical, in particular caused by clashes between atoms. This can be rectified by recognising that any commonly used forcefield would give such a configuration an energy orders of magnitude higher than any other, due to the near exponential scaling of any reasonable Pauli repulsion approximation. 

RDKit \cite{Landrum2018} has been used in this work to create conformation sets, using the ETKDG method \cite{Riniker2015}. This creates a set of conformers with reasonable distributions in bond lengths and angles, but fairly fixed torsions. To ensure that the conformational space was covered, each conformer had its torsions determined from independent uniform distributions on $(-\pi,\pi]$. Lastly, conformers with energies in excess of $200$kcal/mol were removed, ensuring there were no atomic clashes.

As well as studying conformer sets with variation in all molecular degrees of freedom, we have also generated sets with only torsional variation. Firstly, conformer sets have been created through a grid search in the torsion degrees of freedom. This set also requires energy pruning to remove unphysical molecules. Secondly, a metadynamics \cite{Laio2002} approach was used. This also allowed the creation of free energy landscapes and do not require energy pruning. The drawback from this approach is that we no longer have information as to all degrees of freedom, instead reducing our space to the reaction coordinates.

Lastly, we have studied the conformational space of cyclooctane, using data obtained from \cite{Martin2010}. This is a set of 6040 conformations of cyclooctane, varying in torsion angles. Hydrogen coordinates were found through a constrained geometry optimisation. The reader is referred to \cite{Martin2010,Brown2008} for more information. Table \ref{tab:molecule_betti} contains information as to the size of the generated conformer sets.



\subsection{Potential energy and free energy surfaces}
Operating within the Born-Oppenheimer approximation, we can consider the molecular energy to be a function of the atomic coordinates. There are many different methods for calculating this molecular energy, broadly split into those that are classical and quantum mechanical. Here, we use a classical forcefield to calculate the potential energy of a single conformer. These forcefields contain parameters describing the relative strength of bond bending, bond stretching, and torsional rotations within a molecule, broadly written as:
\begin{equation}
    E_{molecule} = \sum_{bonds} k(x-x^\prime)^2 + \sum_{angles} t(\theta - \theta^\prime)^2 + \sum_{torsions} \left( 1+\sum_n V_n \cos(n\omega)\right)
\end{equation}
Where $k$, $t$ and $V$ modulate the relative strength of interactions, and non-bonded terms have been dropped as we are not studying systems with more than one molecule in this work. We use the MMFF94 forcefield as implemented in RDKit \cite{Halgren1996} to calculate the potential energy of a single conformer.

Often, a more useful quantity to study is the free energy. The free energy can be thought of as a 'smoothed out' potential energy, where various degrees of freedom integrated out through an appropriately weighted Boltzmann average. In our work, we use metadynamics \cite{Laio2002} to calculate free energy surfaces over the torsional degrees of freedom for a molecule.

Both the potential and free energy functions can be considered as maps from a conformational space to a subset of the real line. The potential energy would be a map from the full conformational space, whereas the free energy is a map from the torsion-only subspace of the conformational space.

\subsection{Local PCA and orientability}
 Computations of the distance matrix associated to the internal configuration spaces $({C_{\mathcal M}^{int}}, d_P)$ and $({C_{\mathcal M}^{int}}, d_\mathcal O)$ were carried out in Matlab using the in-built function `Procrustes'. 
We estimated the local dimension of ${C_{\mathcal M}^{int}}$ using local PCA. The algorithm was implemented in Matlab, and it is shown in Figure \ref{pca}. A more detailed exposition on local PCA is presented in Appendix \ref{a:lpca}. Let $\mathcal S=\{C_i\}_{i=1}^N$ be a data set of $N$ conformers of a molecule $\mathcal M$. Given a conformer $C_i\in\mathcal S$ we computed its $k$-nearest neighbours, where $k \ll N$ using the the Matlab function `knn' together with the results of Procrustes. We gave a matrix representation to each element of the permutation invertion group (GPI) which is defined in Section 3. We used this representation to compute the distance matrix associated to the conformational spaces $(\mathcal C^{int}_\mathcal M,d_P)$ and $(\mathcal  C_\mathcal M^{int},d_\mathcal O)$ along with its local dimension.

\begin{algorithm}
\begin{algorithmic}[1]
\caption{Local PCA with group actions }
\label{pca}
    \Require Data set $\mathcal S$ of $N$ conformers, $C_1,\ldots,C_N$, of a molecule with $n$ atoms in $\mathbb R^{3n}$, $N> 3n$
    \Require The automorphism group $P$ of the molecular graph, as a subgroup of the symmetric group $S_n$
    \Require A constant $\gamma \in (0,1)$; higher values of $\gamma$ result in higher predicted dimension
    \For {$i \in \{1,\ldots,N\}$}
    \For {$j \in \{1,\ldots,N\}$}
    \State Let $\tilde p_{j} \in P$ and $\tilde A_j \in SO(3)$ be elements minimising $d_F(C_i,A(C_j \cdot p))$, $p\in P$, $A \in SO(3)$
	\State Set $\tilde C_{ij} = d_{\mathcal O}(C_i,C_j)=d_F(C_i,\tilde A_j(C_j \cdot \tilde p_j))=d_P(C_i,C_j \cdot \tilde p_j)$ 
    \EndFor
    \State 
    Let $NN \subseteq \{1,\ldots,n\}$ be such that $\{ \tilde C_{ij} \mid j \in NN \}$ are the $k$ lowest values of $\tilde C_{ij}$ for $1 \leq j\leq N$
   \State Compute PCA for $\{ \tilde A_j(C_j \cdot \tilde p_j) \mid j \in NN \}$ and let $\lambda_1,\ldots,\lambda_{3n}$ be the resulting eigenvalues
   \State Let $d_i \in \{ 1,\ldots,3n \}$ be the smallest $d$ such that $(\sum_{k=1}^d \lambda_k) / (\sum_{k=1}^{3n} \lambda_k) > \gamma$;
   \State \quad $d_i$ is the predicted dimension of $\mathcal{OC}_\mathcal{M}^{int}$ at $C_i$
    \EndFor
    \State The predicted dimension $ld$  of $\mathcal{OC}_\mathcal M^{int}$ is the median of the $d_i$, $1 \leq i \leq N$
\end{algorithmic}
\end{algorithm}

We tested orientability of the conformational spaces. In \cite{SH} Singer et al.\ developed an algorithm to detect orientability on large data set which are sample from manifolds. In Algorithm \ref{orienta} we present a version of this algorithm that includes the group action of a discrete group.

\begin{algorithm}
\begin{algorithmic}[1]
\caption{Orientability of conformational spaces} \label{orienta}
    \Require Data set $\mathcal S$ with $N$ conformers $C_i$ of a molecule with $n$ atoms
    \State Perform Algorithm \ref{pca} to obtain the predicted dimension $ld$, and a  $3N\times ld$ matrix $O_i$, $i\in\{1,\dots, N\}$, with column vectors form an orthonormal basis that approximates the tangent space $T_{ C_i}\mathcal C_\mathcal M^{int}$.
    \State For neighbour conformers $C_i$ and $C_j$ obtain $O_{ij}=\underset{O\in O(ld)}{argmin}\|O-O^T_iO_j\|_F$.
    \State Let $Z$ be the $N\times N$ matrix with entries given by $z_{ij}=det O_{ij}$ for nearby points and 0 otherwise.
    \State Define the matrix $A=D^{-1}Z$, where $D$ is diagonal and $D_{ii}=\sum_{i=1}^N|z_{ij}|$.
    \State Compute the top eigenvector $v_{top}$ of A.
    \State Decide the orientability analysing the histogram of the coordinates of $v_{top}$.
       
\end{algorithmic}
\end{algorithm}

\subsection{Persistent homology}
Persistent homology \cite{Edelsbrunner2000,Zomorodian1} is a method of topological data analysis, which has been used to analyse different types of data sets from different areas in recent years, including chemistry.

It is a method of calculating topological, or more accurately, homological, features at different spatial resolutions. Features that persist for a wider range of the spatial parameter are deemed to be more likely to represent true features of the underlying space the data was sampled from, rather than noise, sampling errors or particular choices of some parameters.

In order to calculate persistent homology of a data set, we need to represent the data as a space with a triangulation, called a simplicial complex. For a set of points $S$, with $\vert S \vert = k$, the most common way to do this is to define the $k$-simplex (or $k$-dimensional polytope) $\Delta^S$ with the points as its vertices.

This work involves the use of persistent homology in two different contexts. First of all, we use it to investigate the homology of the points sampled from the conformation space. With some sensible assumptions on the sampling quality of these points, we can assume that we can deduce from this the homology of the conformation space, and therefore say something about its topological features.

Secondly, we use the connection between persistent homology and discrete Morse theory in our analysis of the energy landscapes defined on the conformation spaces.

\subsection{Multidimensional or parametrised persistence vs. discrete Morse theory}
In order to investigate the energy landscapes on the conformation spaces via persistent homology, we need to filter the conformation space (or its combinatorial approximation) by the real-valued energy function. In essence, we are filtering our simplex $\Delta^S$ by both the pairwise distance function $f$ defined above, and the energy function $E:\Delta^S\to \mathbb{R}$.
However, we are not actually interested in the two-parameter persistence module we get this way.

Instead, we wish to construct a combinatorial structure, called the \emph{Morse-Smale complex} of the energy function, which represents the associated gradient flow and summarises it according to its critical points. This construction is described in more detail in the appendix.

In order to construct the Morse-Smale complex with the correct number of critical points, we need to filter a simplicial complex which has the Euler characteristic of the conformational space.
Given a value of $f$ for which the preimage $f^{-1}$ is a triangulation with the 'correct' topology of our conformational space, we can use the lower-star filtration on this complex using the filter function $E$, and compute one-parameter persistence.

In practice, the construction we use to get the triangulation, are $3D$ alpha complexes, which can substantially reduce the computational complexity from having to construct Rips complexes with a threshold value. In fact, alpha complexes are thresholded Rips complexes, but there are easier implementations available, e.g. in Matlab.

To describe this construction, first let us say a word about Voronoi diagrams. Given a set $S$ of points in Euclidean space $\mathbb{R}^n$, one defines convex polytopes $V_s$, $s\in S$ called Voronoi cells, which consist of all points $x\in\mathbb{R}^n$ such that the distance between $x$ and $s$ is less than the distance between $x$ and $s'$ for any other $s'\in S$. The subsets $V_s$ give a tessellation of $\mathbb{R}^n$.

Given a finite set of points $S\subset \mathbb{R}^n$ and a real number $r\in \mathbb{R}$, one defines the region $R_s(r) = \bar{B}_s(r)\cap V_s$, where $\bar{B}_s(r)$ is the closure of the ball of radius $r$ centred at $s$. Now we can form the $\alpha$-complex (or alpha complex) $K_r$ as follows: a subset $\sigma\subset S$ is called an $\alpha$-simplex if
$$\bigcap_{s\in\sigma} R_s(\sigma)\neq\emptyset.$$

After we have discovered the topology of the conformational space from which we sampled our point set $S$, we can safely choose a triangulation $T$ of the point cloud that reflects this topology (i.e. choose a radius for the Rips complex) and linearly extend the energy function $E:S\to\mathbb{R}$ to the entire simplicial complex, which gives us a piecewise-linear function $\tilde{E}:T\to \mathbb{R}$. This allows us to explore the energy landscape of the given molecule.

Let $M$ be a compact manifold and let $f$ be a Morse function defined on $M$. Then the alternating sum of the number of critical points of index $k$ of $f$ equals the Euler characteristic of the manifold. The same goes for a simplicial complex. Therefore, we have a linear relation between the minima of our energy landscape and the other critical points, which is unique to the conformation space (unique in the sense that it depends on the topology of the conformation space).

The Morse-Smale complex is commonly applied for surface segmentation \cite{Sousbie1,cubical,Wang1}. Each segment of the surface has uniform integral lines. This leads to a reduction of information. Instead of the entire energy landscape, we are left with a cell complex which accounts for the unique features of the energy function. Each cell has uniform flow, meaning that we can read off the Morse-Smale complex directly which energy minimum a given conformer will flow to as it loses energy. It also shows the unstable equilibria, the maxima and saddle points where there is no flow, but the slightest perturbation can lead to a more drastic change of the conformation.

Moreover, this cell complex still has the same homology as the conformation space, combining the characteristics of both the conformation space and the energy landscape in a compact, combinatorial structure, which can be regarded as a unique descriptor of the molecule.

\section{A mathematical framework for conformational spaces of molecules}\label{MF}
\subsection{Molecular graphs and conformational spaces}
Our model of conformation spaces of molecules will be given by embeddings of graphs in $\mathbb{R}^3$. In such graphs, we combinatorially encode atoms and bonds in a molecule as vertices and edges of a graph, respectively. To introduce this, we need to define molecular graphs.

\begin{De}
\label{d:mg}
A \emph{molecular graph} is a tuple $\mathcal{M} = (V,E,c_V,L,\Theta)$ consisting of the following data.
\begin{enumerate}
\item $\Gamma = (V,E)$ is a finite undirected \emph{graph}: that is, $V$ is a finite set, and $E \subset V \times V$ is a subset such that, for any $v,w \in V$, we have $(v,v) \notin E$, and $(v,w) \in E$ if and only if $(w,v) \in E$. We refer to $V$ as the set of \emph{vertices} of $\mathcal{M}$, and to $E$ as the set of \emph{edges} of $\mathcal{M}$.
\item $c_V: V \to \mathbb{N}$ is a \emph{vertex colouring}; for $v \in V$, we will think of $c_V(v)$ as the chemical element the atom corresponding to $v$ represents, and will set $c_V(v)$ to be the atomic number of this element.
\item $L: E \to (0,\infty)$ is a set of \emph{length constraints}; $L(e)$ will be called the \emph{length} of an edge $e \in E$.
\item $\Theta: E_2 \to (0,\pi]$ is a set of \emph{angle constraints}, where
\[
E_2 = \{ (v,w_1,w_2) \in V \times V \times V \mid (v,w_1),(v,w_2) \in E, w_1 \neq w_2 \}
\]
is viewed as the set of \emph{adjacent edges} in $\Gamma$; we will refer to $\Theta(v,w_1,w_2)$ as the \emph{angle} between edges $(v,w_1) \in E$ and $(v,w_2) \in E$.
\end{enumerate}
For consistency, we also require that $L(v,w) = L(w,v)$ for every $(v,w) \in E$ and $\Theta(v,w_1,w_2) = \Theta(v,w_2,w_1)$ for every $(v,w_1,w_2) \in E_2$.
\end{De}

\begin{remark}
In Definition \ref{d:mg} the length and angle constraints have the following chemical meanings. The bond constraint is associated to the average of a bond length between two atoms A and B, whose centres of mass are modelled as points in $\mathbb R^n$.  The angle constraint is associated to the angle given by the hybridation type of the atom around which the bond angle is defined. Therefore in this definition we assume that the bond lengths and bond angles are rigid. That is, they are constant for every conformer.
\end{remark}

To define an embedding of a molecular graph $\mathcal{M}$, note that the length constraints can be used to make $\mathcal{M}$ into a geodesic metric space. Indeed, we may define the \emph{geometric realisation} of $\mathcal{M}$ as a metric space defined by gluing an interval $[0,L(e)]$ for each edge $e \in E$ along the vertices $V$ in the obvious way. From now on, slightly abusing notation, we will identify a molecular graph $\mathcal{M}$ with its geometric realisation.

\begin{De} \label{d:emb}
An \emph{configuration} of a molecular graph $\mathcal{M} = (V,E,c_V,L,\Theta)$ is an embedding (injective continuous map) $\varphi: \mathcal{M} \to \mathbb{R}^3$ from the geometric realisation of $\mathcal{M}$ to $\mathbb{R}^3$ such that
\begin{enumerate}
\item \label{i:demb-isom} for any edge $e \subseteq \mathcal{M}$, the restriction $\varphi|_e: e \to \mathbb R^3$ is an isometric embedding; and
\item for any adjacent edges $(v,w_1),(v,w_2) \in E$ (that is, for any $(v,w_1,w_2) \in E_2$), we have $\angle(\varphi(w_1),\varphi(v),\varphi(w_2)) = \Theta(v,w_1,w_2)$, where $\angle(x,y,z)$ is the angle between the line $yx$ and the line $yz$ in $\mathbb{R}^3$ for $x,y,z \in \mathbb{R}^3$.
\end{enumerate}

The \emph{conformational space} of $\mathcal{M}$, denoted $\mathcal C_\mathcal M$, is the set of all embeddings $\varphi$ of $\mathcal{M}$.
\end{De}

From the condition \eqref{i:demb-isom} in Definition \ref{d:emb} and from the fact that geodesics in $\mathbb{R}^3$ are unique, it follows that an embedding $\varphi: \mathcal{M} \to \mathbb{R}^3$ can be recovered uniquely from its values on the finite set $V$. In particular, if $V = \{ v_1,\ldots,v_n \}$ contains $n$ points, then any embedding $\varphi$ can be recovered from the tuple $C_\varphi = (\varphi(v_1),\ldots,\varphi(v_n)) \in \mathbb{R}^{3n}$, called the \emph{vector representation} of $\varphi$. The map $\mathcal C_\mathcal M \to\mathbb R ^{3n}, \varphi \mapsto C_\varphi$ is therefore injective. We use this map to realise $C_{\mathcal M}$ as a subspace of $\mathbb{R}^{3n}$; in particular, as $\mathbb R^{3n}$ is a topological space, this induces a subspace topology on $\mathcal C_\mathcal M$. Thus, from now on, we will regard $\mathcal C_\mathcal M$ as a topological space.

In order to analyse the connectivity properties of $\mathcal C_\mathcal M$, we will use the following.

\begin{De}\label{d:emg}
The \emph{orientation} of an embedding $\varphi$ of the molecular graph $\mathcal{M}$ is the map $O_\varphi: E_3 \to \{-1,0,1\}$, where
\[
E_3 = \{ (v,w_1,w_2,w_3) \in V \times V \times V \times V \mid (v,w_i) \in E, w_i \neq w_j \text{ whenever } i \neq j \},
\]
defined by
\[
O_\varphi(v,w_1,w_2,w_3) = \operatorname{sign}\ (\varphi(w_1)-\varphi(v)) \cdot [(\varphi(w_2)-\varphi(v)) \times (\varphi(w_3)-\varphi(v))],
\]
where $\operatorname{sign} c = \begin{cases} -1 & \text{if } c < 0, \\ 0 & \text{if } c = 0, \\ 1 & \text{if } c > 0, \end{cases}$ for any $c \in $.
\end{De}

Given a molecular graph $\mathcal{M} = (V,E,c_V,L,\Theta)$ and a quadruple $(v,w_1,w_2,w_3) \in E_3$, one can see that the number $|O_\varphi(v,w_1,w_2,w_3)| \in \{0,1\}$ is independent of the embedding $\varphi$ of $\mathcal{M}$. Indeed, it follows from the fact that any two points $x,y \in S^2$ on the sphere $S^2 \subseteq \mathbb R^3$ are joined by a unique geodesic on $S^2$ (unless $y = -x$) that we have
\begin{equation} \label{e:Ozero}
\begin{aligned}
O_\varphi(v,w_1,w_2,w_3) = 0 \qquad \Leftrightarrow \qquad \text{either } &\theta_{12}+\theta_{13}+\theta_{23} = 2\pi \\ \text{or } &\theta_{ij}+\theta_{ik}=\theta_{jk} \text{ for some } \{i,j,k\}=\{1,2,3\},
\end{aligned}
\end{equation}
where $\theta_{ij} = \Theta(v,w_i,w_j)$. We will call a tuple $(v,w_1,w_2,w_3) \in E_3$ \emph{planar} if $O_\varphi(v,w_1,w_2,w_3) = 0$ for some, or any, $\varphi \in \mathcal C_\mathcal M$ (that is, if both sides of \eqref{e:Ozero} are true). We will say that the molecular graph $\mathcal{M}$ is \emph{planar} if all tuples in $E_3$ are planar. Note that, by convention, a disconnected molecular graph is planar if and only if all its connected components are planar.

\begin{remark}
A planar graph $\Gamma$ is a graph for which there exists an embbeding from the geometric realisation of $\Gamma$ into $\mathbb R^2$. Thus our definition of a planar conformation is more restrictive and does not coincide with that of a planar graph. 
\end{remark}

On the other hand, the map $O_\varphi$ itself is \emph{not} independent of the embedding $\varphi$, unless all tuples in $E_3$ are planar. Indeed, given an embedding $\varphi$ of $\mathcal{M}$ and a quadruple $(v,w_1,w_2,w_3) \in E_3$, it is easy to check that $O_{\iota \circ \varphi}(v,w_1,w_2,w_3) = -O_\varphi(v,w_1,w_2,w_3)$, where $\iota: \mathbb R^3 \to \mathbb R^3, x \mapsto -x$ is the antipodal map (it is clear from the definition that $\iota \circ \varphi: \mathcal{M} \to \mathbb R^3$ is also an embedding of $\mathcal{M}$). This allows us to show the following result.

\begin{proposition} \label{p:disconn}
If a molecular graph $\mathcal{M}$ is not planar, then the conformational space $\mathcal C_\mathcal M$ is not path connected. More precisely, if $\varphi,\psi \in \mathcal C_\mathcal M$ and $(v,w_1,w_2,w_3) \in E_3$ are such that $O_\varphi(v,w_1,w_2,w_3) = -O_\psi(v,w_1,w_2,w_3) \neq 0$, then $\varphi$ and $\psi$ are in different path components of $\mathcal C_\mathcal M$.
\end{proposition}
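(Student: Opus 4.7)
The strategy is to show that, for a fixed non-planar tuple $(v,w_1,w_2,w_3) \in E_3$, the assignment $\varphi \mapsto O_\varphi(v,w_1,w_2,w_3)$ is a continuous function from $\mathcal{C}_\mathcal{M}$ to the discrete two-point space $\{-1,+1\}$. Once that is established, any path $\gamma : [0,1] \to \mathcal{C}_\mathcal{M}$ with $\gamma(0)=\varphi$ and $\gamma(1)=\psi$ would force $O_\varphi = O_\psi$, contradicting the hypothesis $O_\varphi(v,w_1,w_2,w_3) = -O_\psi(v,w_1,w_2,w_3) \neq 0$.

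First I would unwind what a path in $\mathcal{C}_\mathcal{M}$ means concretely: since we have topologised $\mathcal{C}_\mathcal{M}$ as a subspace of $\mathbb{R}^{3n}$ via the vector representation $\varphi \mapsto C_\varphi = (\varphi(v_1),\ldots,\varphi(v_n))$, a continuous path in $\mathcal{C}_\mathcal{M}$ is precisely a continuous family $t \mapsto C_{\varphi_t}$ of vector representations with each $\varphi_t$ still an embedding of $\mathcal{M}$ (respecting lengths and angles). In particular, each coordinate $\varphi_t(w_i)$ depends continuously on $t$.

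Next I would observe that the scalar triple product
\[
T(\varphi) \;=\; (\varphi(w_1)-\varphi(v)) \cdot \bigl[(\varphi(w_2)-\varphi(v)) \times (\varphi(w_3)-\varphi(v))\bigr]
\]
is a polynomial, hence continuous, function of the nine coordinates involved, and therefore $t \mapsto T(\varphi_t)$ is a continuous real-valued function on $[0,1]$. The crucial input is the characterisation \eqref{e:Ozero}: since the tuple $(v,w_1,w_2,w_3)$ is non-planar, the vanishing condition on the angles $\theta_{ij} = \Theta(v,w_i,w_j)$ fails; and because every embedding $\varphi_t$ must satisfy those same angle constraints, we conclude $T(\varphi_t) \neq 0$ for every $t \in [0,1]$. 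Hence $T \circ \gamma$ is a continuous nowhere-zero function on a connected interval, so its sign is constant, giving $O_{\varphi}(v,w_1,w_2,w_3) = O_{\psi}(v,w_1,w_2,w_3)$, contradicting the hypothesis.

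The only step that requires genuine care is the invocation of \eqref{e:Ozero}: one needs that non-planarity of the tuple forces $T(\varphi) \neq 0$ for \emph{every} embedding in $\mathcal{C}_\mathcal{M}$, not merely for one. But this is exactly the content of the equivalence given there, since the quantities $\theta_{ij}$ entering that equivalence are the fixed angle constraints $\Theta(v,w_i,w_j)$, which depend only on $\mathcal{M}$ and not on the chosen embedding $\varphi$. No further obstacle appears, and the proposition follows.
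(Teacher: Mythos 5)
Your argument is essentially the paper's own proof: both rest on the continuity of the scalar triple product $t \mapsto T(\varphi_t)$ along a path and on the equivalence \eqref{e:Ozero} guaranteeing that this quantity never vanishes for a non-planar tuple, the only cosmetic difference being that the paper invokes the Intermediate Value Theorem to produce a forbidden zero while you phrase it as constancy of sign of a nowhere-vanishing continuous function on an interval. The one point you leave implicit is the deduction of the first claim (non-path-connectedness of $\mathcal C_\mathcal M$) from the second, which the paper handles by taking $\psi = \iota \circ \varphi$ for the antipodal map $\iota$, using the earlier observation that $O_{\iota\circ\varphi} = -O_\varphi$.
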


\begin{proof}
As $\mathcal{M}$ is not planar, there exists a quadruple $(v,w_1,w_2,w_3) \in E_3$ that is not planar. Let $\varphi \in \mathcal C_\mathcal M$ be any embedding: then clearly $\iota \circ \varphi \in \mathcal C_\mathcal M$, where $\iota: \mathbb{R}^3 \to \mathbb{R}^3$ is the antipodal map. Thus the first part of the Proposition follows from the second part, by taking $\psi = \iota \circ \varphi$.

To prove the second part of the Proposition, suppose for contradiction that $\varphi,\psi \in \mathcal C_\mathcal M$ are in the same path component of $\mathcal C_\mathcal M$. Then there exists a path $\alpha: [0,1] \to \mathcal C_\mathcal M$ such that $\alpha(0) = \varphi$ and $\alpha(1) = \psi$. Consider the map
\begin{align*}
D: [0,1] &\to \mathbb{R}, \\
t &\mapsto (\alpha_t(w_1)-\alpha_t(v)) \cdot [(\alpha_t(w_2)-\alpha_t(v)) \times (\alpha_t(w_3)-\alpha_t(v))]
\end{align*}
where $\alpha_t = \alpha(t): \mathcal{M} \to \mathbb{R}^3$. The map $D$ is clearly continuous, and it follows from the assumption that $D(0) = -D(1) \neq 0$. Therefore, by the Intermediate Value Theorem, there exists $t \in (0,1)$ such that $D(t) = 0$. But this implies that $O_{\alpha(t)}(v,w_1,w_2,w_3) = 0$, contradicting the fact that $(v,w_1,w_2,w_3) \in E_3$ is not planar. Thus $\mathcal C_\mathcal M$ is not path connected, as required.
\end{proof}
\subsection{Group actions on conformational spaces}
It is known that the quantum mechanical description of a molecular systems must be invariant to the several types of transformations \cite{Bunker}, in which the following are included:
\begin{enumerate}[1)]
\item Rotation of the positions of all particles about any axis through the centre of mass.
\item Translation in space.
\item Permutation of the positions of any set of identical nuclei.
\item Simultaneous inversion of the positions of all particles in the centre of mass. 
\end{enumerate}

Therefore it is crucial to study the symmetries present in our model of the conformational space $\mathcal C_\mathcal M$ by studying actions of groups on $\mathcal C_\mathcal M$. In what follows, for any $n \in \mathbb{N}$, let $S_n$ be the symmetric group on $n$ elements (the group of all permutations of an $n$-element set), let $D_n$ be the dihedral group of order $2n$ (the group of all symmetries of a regular $n$-gon), and let $\mathbb{Z}_n$ be the group of integers modulo $n$. Let $E(3)$ the group of isometries of $\mathbb R^3$ and let $SE^+(3)$ be the subgroup of $E(3)$ of orientation preserving isometries. It is known that there are group isomorphisms $E(3)\cong O(3)\ltimes \mathbb R^3$ and $SE(3)^+\cong SO(3)\ltimes \mathbb R^3$, where $O(3)$ and $SO(3)$ are the group of real invertible matrices with determinant $\pm1$ and $+1$, respectively. 

 In \cite{Gui}, Guichardet approches to configuration spaces of molecules defining the following spaces of $n$-points, $n\geq3$:

\begin{equation}\label{gui0}
    X_\mathcal M^0=\{(x_1,\dots,x_n)\in\mathbb R^{3n}|x_i\neq x_j,\; \mathrm{if}\; i\neq j\}
\end{equation}
\begin{equation}\label{gui1}
    X_\mathcal M^{1}=\{(x_1,\dots,x_n)\in \mathbb R^{3n}|x_i\neq x_j,\; \mathrm{if}\; i\neq j,\;\sum_{i=1}^nm_ix_i=0\}
\end{equation}
The group $\mathbb R^3$ of $E(3)$ acts on $X^0_\mathcal M$ by translation and it is easy to see that $X_\mathcal M ^1=X_\mathcal M^0\times \mathbb R^3$. Adding the condition $span(x_j,\dots,x_n)=\mathbb R^3$ in \eqref{gui1} we obtain another, configuration space, $X_\mathcal M^{P}$. The group $SO(3)$ acts freely on $X_\mathcal M^{P}$ and there is a principal $SO(3)$-bundle:
\begin{equation}
\xymatrix{
SO(3)\ar[r]&X_\mathcal M^{P}\ar[r]^-{\pi_{1}}&X_\mathcal M^{int}}
\end{equation}
where $X_\mathcal M^{int}$ is homeomorphic to the orbit space $X_\mathcal M^{P}/SO(3)$. In our model there are restrictions on the nuclei positions which are imposed by the underlying molecular graph. Therefore it is expected that our  $\mathcal C_\mathcal M$ is a subspace of $X_\mathcal M^{P}$.

Let $\mathcal M=(V.E,c_V,L,\Theta)$ be a molecular graph such that $|V|=n$. Given a conformation $C_\varphi$ of a $\mathcal M$, $\mathcal C_\mathcal M=(\varphi(v_1),\dots,\varphi(v_n))\in\mathbb R^{3n}$, the groups $E(3)$ and $O(3)\cong E(3)/\mathbb R^3$ act on it by isometries. We define the following space
\begin{equation}
    \mathcal C_\mathcal M^{P}=\{C_\varphi=(\varphi(v_1),\dots,\varphi(v_n)) \in\mathbb R^3|\sum_{i=1}^nc_V(v_i)\varphi (v_i)=0\}
\end{equation}
We can also see that $\mathcal C_\mathcal M\cong C^P_\mathcal M\times\mathbb R^3$. We might assume additionally that $|V|\geq 3$. The group $SO(3)$ acts on $\mathcal M$ by rotation which induces an action $C^P_\mathcal M$. Let $\mathcal \mathcal C_\mathcal M^{int}$ be the orbit space of the action of $SO(3)$ on $\mathcal C_\mathcal M^P$. We give the space $C^{int}_\mathcal M$ the quotient topology. Then we have the following result.

\begin{theorem}\label{t:pb}
Let  $\mathcal M$ be a molecular graph. If $\Theta(E_2)\nsubseteq\{\pi\}$ then the quotient map $q:C^P_\mathcal M\to C^{int}_\mathcal M$ defines a  principal $SO(3)$-bundle over $\mathcal C_\mathcal M^{int}$. Moreover this principal $SO(3)$-bundle is trivial.   
\end{theorem}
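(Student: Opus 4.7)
The plan is to exhibit a continuous global section of $q$, which simultaneously establishes that $q$ is a locally trivial principal $SO(3)$-bundle and that this bundle is trivial.

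First, I verify that the $SO(3)$-action on $\mathcal C_\mathcal M^P$ is free. Using the hypothesis $\Theta(E_2)\nsubseteq\{\pi\}$, fix once and for all a triple $(v,w_1,w_2)\in E_2$ with $\Theta(v,w_1,w_2)\neq\pi$. For any $\varphi\in\mathcal C_\mathcal M^P$, the vectors $u_1(\varphi):=\varphi(w_1)-\varphi(v)$ and $u_2(\varphi):=\varphi(w_2)-\varphi(v)$ have fixed lengths $L(v,w_1),L(v,w_2)>0$ and make angle $\Theta(v,w_1,w_2)\in(0,\pi)$, so they are linearly independent. Hence any $R\in SO(3)$ fixing $\varphi$ fixes both $u_1(\varphi)$ and $u_2(\varphi)$, forcing $R=I$.

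Next, I construct an $SO(3)$-equivariant continuous map $F\colon\mathcal C_\mathcal M^P\to SO(3)$ via Gram-Schmidt on $(u_1,u_2)$. Set
\[
e_1=\frac{u_1}{|u_1|}, \qquad e_2=\frac{u_2-(u_2\cdot e_1)\,e_1}{\|u_2-(u_2\cdot e_1)\,e_1\|}, \qquad e_3=e_1\times e_2,
\]
and let $F(\varphi)$ be the matrix with columns $e_1,e_2,e_3$. The denominators are the positive constants $L(v,w_1)$ and $L(v,w_2)\sin\Theta(v,w_1,w_2)$; this non-vanishing is exactly where the hypothesis is used essentially. Continuity of $F$ follows, and $SO(3)$-equivariance $F(R\varphi)=R\,F(\varphi)$ is immediate because $R$ is linear, preserves the inner product, and commutes with the cross product on $\mathbb{R}^3$.

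Finally, I define $s\colon\mathcal C_\mathcal M^{int}\to\mathcal C_\mathcal M^P$ by $s([\varphi])=F(\varphi)^{-1}\cdot\varphi$. The equivariance of $F$ makes this independent of the orbit representative, and continuity follows from the universal property of the quotient topology once one observes that $s\circ q$ is continuous. The associated map
\[
\Phi\colon\mathcal C_\mathcal M^{int}\times SO(3)\to\mathcal C_\mathcal M^P,\qquad ([\varphi],R)\mapsto R\cdot s([\varphi]),
\]
is continuous, $SO(3)$-equivariant, commutes with the projections to $\mathcal C_\mathcal M^{int}$, and is bijective by freeness together with the existence of $s$; its inverse $\varphi\mapsto (q(\varphi),F(\varphi))$ is continuous, so $\Phi$ is a homeomorphism. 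This simultaneously exhibits $q$ as a principal $SO(3)$-bundle and as the trivial one. The main technical obstacle is ensuring that the canonical frame $F$ is everywhere non-degenerate and that the induced section descends continuously to the quotient; both points reduce, as shown above, to the hypothesis $\Theta(v,w_1,w_2)\neq\pi$ supplied by the theorem.
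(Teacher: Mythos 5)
Your proof is correct and follows essentially the same route as the paper: freeness of the $SO(3)$-action from the linear independence of the two bond vectors meeting at a non-degenerate angle, followed by a global section obtained by canonically framing each configuration using that angle (your Gram--Schmidt frame $F$ is precisely the paper's canonical alignment sending $\varphi(v)$, $\varphi(w_1)$, $\varphi(w_2)$ to standard positions). If anything, your version is slightly more careful, since you make the $SO(3)$-equivariance of the frame, the well-definedness of the section on orbits, and the continuity of the inverse trivialization fully explicit.
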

\begin{proof}
Let $C_\varphi=(\varphi(v_1),\dots,\varphi(v_n))\in \mathcal C_\mathcal M ^P\subset \mathbb R^{3n-3}$. Let us write $x_i=\varphi(v_i)$ for all $0\leq i\leq n$. The elements of $ SO(3)$ act on $C_\varphi=(x_1,\dots,x_n)\in \mathcal C_\mathcal M^P$ with the canonical action. Let $\pi:\mathcal C_\mathcal M^P\to \mathcal C_\mathcal M^{int}$ be the quotient map. By assumption there exists a triple $(v,W_1,W_2)\in E_2$ such that $\Theta(v,w_1,w_2)\neq\pi$ then this action has no fixed points. It follows that $q^{-1}(C_\varphi)\cong SO(3)$ for all $C_\varphi\in \mathcal C_\mathcal M^P$. Therefore, the quotient map $q:\mathcal C_\mathcal M^P\to \mathcal C_\mathcal M^{int}$ defines a principal $SO(3)$-bundle over $\mathcal C_\mathcal M^{int}$. It is a routine calculation to check that $s$ is a continuous map. Since the map $\pi$ that defines the principal $SO(3)$-bundles has a section, then this bundle is trivial.

Now we show that the bundle $\pi:\mathcal C_\mathcal M^P\to \mathcal C_\mathcal M^{int}$ is trivial, that is $\mathcal C_\mathcal M^P\cong \mathcal C_\mathcal M^{int}\times SO(3)$. It suffices to show that there is a map $s:\mathcal C_\mathcal M^{int}\to \mathcal C_\mathcal M^P$ such that the composition $\pi\circ s$ is the identity map on $\mathcal C_\mathcal M^{int}$. This is equivalent to choosing one point from each $SO(3)$ orbit in $\mathcal C_\mathcal M^P$ in a continuous way. Now since the action of $SO(3)$ there exists a unique $A\in SO(3)$ and $D\in\mathbb R^3$ such that 
\begin{align*}
    A\varphi(v)+D&= (0,0,0)\\
    A\varphi(w_1)+D&=(L(v,w_1),0,0)\\
    A\varphi(w_2)+D&= (L(v,w_2)\cos\theta,L(v,w_2)\sin\theta,0)
\end{align*}
where $\theta=\Theta(v,w_1,w_2)$. We define a map $s:\mathcal C_\mathcal M^{int}\to \mathcal C_\mathcal M^P$ by sending  $\pi(C_\varphi)$ to $C_{A\varphi}$. By construction the map $\pi\circ s$ is the identity on $\mathcal C_\mathcal M^{int}$.
\end{proof}

We will start analysing the action of discrete groups.  In \cite{LH}, Longuet-Higgins introduces a symmetry group of non-rigid molecules, called the \emph{complete nuclear permutation inversion group}, or \emph{CNPI group} for short. Suppose we are given a molecule $M$ consisting of $n = n_1+\cdots+n_k$ atoms where $n_i \geq 1$ is the number of atoms of element $c_i$, for some distinct chemical elements $c_1,\ldots,c_k$. We let the \emph{conformational space} $\mathcal C_\mathcal M$ of $\mathcal M$ be the set of all embeddings $V \to \mathbb{R}^3$, where $V$ is a finite set of cardinality $n$ (corresponding to the $n$ atoms of $M$); we then have an obvious injective map $\mathcal C_\mathcal M \to \mathbb{R}^{3n}$, which makes $\mathcal C_\mathcal M$ into a topological space as before. The $CNPI$ group of $M$ is then $CNPI_\mathcal M = S_{n_1} \times \cdots \times S_{n_k} \times C_2$. This group acts on $\mathcal C_\mathcal M$ as follows:
\begin{enumerate}
\item for $i=1,\ldots,k$, the group $S_{n_i}$ permutes the images of the atoms of element $c_i$; and
\item the non-trivial element of $C_2$ sends $\varphi \in C_\mathcal M$ to $\iota \circ \varphi$, where $\iota: \mathbb{R}^3 \to \mathbb{R}^3$ is the antipodal map.
\end{enumerate}

This construction fits in our setting as follows. Let $\Gamma = (V,\varnothing)$ be the \emph{discrete} graph on $|V|=n=n_1+\cdots+n_k$ vertices -- that is, $\Gamma$ has no edges. Label the vertices of $\Gamma$ as $$V = \{ v_{1,1},\ldots,v_{1,n_1},v_{2,1},\ldots,v_{2,n_2},\ldots,v_{k,1},\ldots,v_{k,n_k} \},$$ and define $c_V: V \to \mathbb{N}$ by $c_V(v_{i,j}) = c_i \in \mathbb{N}$. Note that since $\Gamma$ has no edges we also have $E_2 = \varnothing$, and so the functions $L: \varnothing \to (0,\infty)$ and $\Theta: \varnothing \to (0,\pi]$ in Definition \ref{d:mg} are defined uniquely. Thus, we have a molecular graph $\mathcal{M} = (V,\varnothing,c_V,L,\Theta)$. As a topological space, $\mathcal{M}$ is a discrete space of cardinality $n$, and an embedding $\mathcal{M} \to \mathbb{R}^3$ is just a choice of $n$ distinct points in $\mathbb{R}^3$. Thus, our construction generalises the construction in \cite{Gui}. The graph permutation inversion group, introduced below, generalises the concept of the CNPI group. Loosely speaking, this is the group of automorphisms that preserve the structure of the molecular graph $\mathcal{M}$.

\begin{De} \label{d:gp}
Let $\mathcal{M} = (V,E,c_V,L,\Theta)$ be a molecular graph. A \emph{graph permutation} of $\mathcal{M}$ is a bijection $g: V \to V$ such that
\begin{enumerate}
\item for any $v,w \in V$, we have $(v,w) \in E$ if and only if $(g(v),g(w)) \in E$;
\item $c_V(v) = c_V(g(v))$ for every $v \in V$;
\item $L(v,w) = L(g(v),g(w))$ for every $(v,w) \in E$; and
\item $\Theta(v,w_1,w_2) = \Theta(g(v),g(w_1),g(w_2))$ for every $(v,w_1,w_2) \in E_2$.
\end{enumerate}
Moreover, let $\psi \in \mathcal C_\mathcal M^{int}$, and let $C_\psi \subseteq \mathcal C_\mathcal M$ be the set of embeddings $\varphi \in \mathcal C_\mathcal M$ such that $O_\varphi \equiv O_\psi$; by Proposition \ref{p:disconn}, $C_\psi$ is a union of path components of $\mathcal C_\mathcal M$. Then a graph permutation $g$ is said to be \emph{orientation preserving} with respect to $C_\psi$ if
\begin{enumerate}
\setcounter{enumi}{4}
\item $O_\psi(v,w_1,w_2,w_3) = O_\psi(g(v),g(w_1),g(w_2),g(w_3))$ for every $(v,w_1,w_2,w_3) \in E_3$,
\end{enumerate}
and \emph{orientation reversing} with respect to $C_\psi$ if
\begin{enumerate}[\quad\ (1')]
\setcounter{enumi}{4}
\item $O_\psi(v,w_1,w_2,w_3) = -O_\psi(g(v),g(w_1),g(w_2),g(w_3))$ for every $(v,w_1,w_2,w_3) \in E_3$.
\end{enumerate}
We denote by $\Sym_C^+(\mathcal{M})$ and $\Sym_C^-(\mathcal{M})$ the sets of graph permutations that are orientation preserving and orientation reversing with respect to $C = C_\psi$, respectively, and we let $\Sym_C^{\pm}(\mathcal{M}) = \Sym_C^+(\mathcal{M}) \cup \Sym_C^-(\mathcal{M})$. Clearly, the sets $\Sym_C^+(\mathcal{M})$ and $\Sym_C^\pm(\mathcal{M})$ are groups under composition.
\end{De}

It is clear that a graph permutation $g$ defines a map $g: \mathcal C_\mathcal M \to \mathcal C_\mathcal M$ by permuting the points $\{ \varphi(v) \mid v \in V \}$ for an embedding $\varphi: \mathcal{M} \to \mathbb{R}^3$. Moreover, if $\psi \in C_\psi \subseteq \mathcal C_\mathcal M$ are as in Definition \ref{d:gp}, then for a graph permutation $g$ that is orientation preserving with respect to $C_\psi$, this map restricts to a map $g: C_\psi \to C_\psi$. Similarly, if a graph permutation $g$ is orientation reversing with respect to $C_\psi$, then we have a map $\hat\iota \circ g: C_\psi \to C_\psi$, where $\hat\iota: \mathcal C_\mathcal M \to \mathcal C_\mathcal M$ is defined by $\hat\iota(\varphi) = \iota \circ \varphi$ and $\iota: \mathbb{R}^3 \to \mathbb{R}^3$ is the antipodal map. It is also easy to check given two such maps, each of which is either $g$ for $g$ orientation preserving or $\hat\iota \circ g$ for $g$ orientation reversing with respect to $C_\psi$, the composite of these two maps will also have this form. Thus, we define the \emph{graph permutation inversion group}, or \emph{GPI group}, of $\mathcal{M}$ with respect to $C = C_\psi$ to be
\[
GPI_{\mathcal{M},C} = \left\{ g \mid g \in \Sym_C^+(\mathcal{M}) \right\} \sqcup \left\{ \hat\iota \circ g \mid g \in \Sym_C^-(\mathcal{M}) \right\}.
\]
Note that
\[
GPI_{\mathcal{M},C} \cong \begin{cases} \Sym_C^\pm(\mathcal{M}) \times \mathbb{Z}_2 & \text{if $\mathcal{M}$ is planar}, \\ \Sym_C^\pm(\mathcal{M}) & \text{otherwise}. \end{cases}
\]
We will usually omit $C$ from the notation and will simply talk about the GPI group $GPI_{\mathcal{M}}$ of $\mathcal{M}$. 
The following theorem is clear because $GPI$ is a finite group. An  introduction to orbifolds is given in Appendix \ref{pgborbi}.
\begin{theorem}\label{t:orbifoldcs}
Let $\mathcal M$ be a molecular graph and $G$ be its $GPI$ group. Suppose that $\mathcal C_\mathcal M^{int}$ is a manifold. Then there is a properly discontinuous action of the group $G$ on $\mathcal C_{\mathcal M}^{int}$. In particular, if $GPI$ is non trivial, the quotient space $\mathcal C_{\mathcal M}^{int}/G$ has the structure of an orbifold.
\end{theorem}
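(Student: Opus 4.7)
The plan is to verify three ingredients: first, that the GPI group $G$ genuinely acts by homeomorphisms on $\mathcal C_\mathcal M^{int}$; second, that this action is properly discontinuous; and third, that these two facts yield an orbifold structure on the quotient. Since $G$ is constructed as a finite set (a union of two cosets of the finite group $\Sym_C^{\pm}(\mathcal M)$), the main content is really only in the first step, and the remaining two are then automatic.

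For the action, recall that $G$ acts on a union of path components $C_\psi \subseteq \mathcal C_\mathcal M$: graph permutations $g$ act by permuting the images $\varphi(v)$, and the antipodal element $\hat\iota$ acts by $\varphi \mapsto -\varphi$. Both maps are linear in the vector representation, so they preserve the mass-centre condition $\sum_i c_V(v_i)\varphi(v_i)=0$ defining $\mathcal C_\mathcal M^P$ (here using that graph permutations preserve the colouring $c_V$ and so merely reindex the terms of the sum). Moreover they commute with the $SO(3)$-action: permutations of the $n$ coordinates trivially commute with the diagonal action, and $A(-x)=-A(x)$ for any $A\in SO(3)$. Hence the $G$-action descends to a well-defined continuous action on $\mathcal C_\mathcal M^{int}=\mathcal C_\mathcal M^P/SO(3)$.

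For proper discontinuity, fix a point $[C] \in \mathcal C_\mathcal M^{int}$ and let $H \le G$ be its stabiliser (a subgroup of a finite group, hence finite). Because $\mathcal C_\mathcal M^{int}$ is assumed to be a manifold it is Hausdorff, so the finitely many points of the orbit $G \cdot [C]$ can be separated by pairwise disjoint open neighbourhoods. Intersecting the $H$-translates of the neighbourhood around $[C]$ produces an $H$-invariant open set $U \ni [C]$ with the property that $gU \cap U \neq \varnothing$ forces $g \in H$. This is the standard verification that any continuous action of a finite group on a Hausdorff space is properly discontinuous; the one mild subtlety is arranging $U$ to be $H$-invariant, handled by the intersection trick just described.

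Finally, a properly discontinuous action of a discrete group on a manifold equips the quotient with an orbifold structure whose isotropy groups are exactly the (finite) point stabilisers, a fact reviewed in Appendix \ref{pgborbi}. The theorem follows. If any step is an obstacle, it is purely bookkeeping: one must keep straight the three parallel actions on $\mathcal C_\mathcal M$, $\mathcal C_\mathcal M^P$ and $\mathcal C_\mathcal M^{int}$ and check that the descent $\mathcal C_\mathcal M \to \mathcal C_\mathcal M^P \to \mathcal C_\mathcal M^{int}$ is equivariant at each stage. Once that is verified, the finiteness of $G$ does all the remaining work.
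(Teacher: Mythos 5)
Your proposal is correct and follows exactly the route the paper itself takes: the paper gives no written proof beyond the remark that the result is ``clear because $GPI$ is a finite group,'' and your argument is precisely the fleshed-out version of that remark (descent of the $G$-action past the mass-centre normalisation and the $SO(3)$-quotient, the standard finite-group-on-Hausdorff-space argument for proper discontinuity, and the resulting orbifold charts from finite stabilisers). You in fact supply more detail than the paper does, so there is nothing to correct.
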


\begin{example}
Let $\mathcal M=(\Gamma,c_V,\Theta)$ be the molecular graph associated to pentane. If we only consider the carbon atoms with a rigid conformation, then the  orbifold conformational space is homeomorphic to the quotient of a torus $S^1\times S^1$ by the action of the group Aut$(\Gamma)=C_2$. The groups $C_2$ acts on the conformational space by permuting the ends of the molecular graph $\mathcal M$. This action  is equivalent to permute the parameters $t_1$ and $t_2$ associated to two torsion angles.  As it is shown in Figure \ref{MB}.
\end{example}
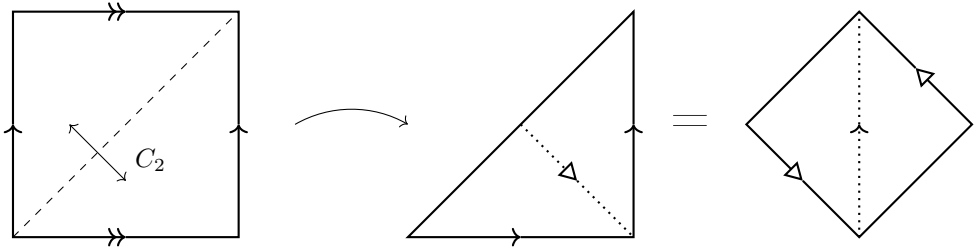
\begin{figure}[ht]
\begin{center}
\begin{tikzpicture}[scale=1.5,baseline={(current bounding box.center)}]
\begin{scope}[xshift=-3cm]
\node at (-0.1,2.1) {};
\node at (2.1,-0.1) {};
\draw [thick,<<->] (1,0) -- (0,0) -- (0,1);
\draw [thick,->] (1,0) -- (2,0) -- (2,1);
\draw [thick,->>] (0,1) -- (0,2) -- (1,2);
\draw [thick] (2,1) -- (2,2) -- (1,2);
\draw [dashed] (0,0) -- (2,2);
\draw [<->] (0.5,1) -- (1,0.5) node [above right] {$C_2$};
\end{scope}
\draw [->] (-0.5,1) arc (120:60:1);
\begin{scope}[xshift=0.5cm]
\draw [thick,->] (1,0) -- (2,0) -- (2,1);
\draw [thick,->] (2,1) -- (2,2) -- (0,0) -- (1,0);
\draw [thick,dotted,-open triangle 60] (1,1) -- (1.5,0.5);
\draw [thick,dotted] (1.5,0.5) -- (2,0);
\end{scope}
\node at (3,1) {\huge =};
\begin{scope}[xshift=2.5cm]
\draw [thick,-open triangle 60] (2.5,1.5) -- (2,2) -- (1,1) -- (1.5,0.5);
\draw [thick,-open triangle 60] (1.5,0.5) -- (2,0) -- (3,1) -- (2.5,1.5);
\draw [thick,dotted,->] (2,0) -- (2,1);
\draw [thick,dotted] (2,1) -- (2,2);
\end{scope}
\end{tikzpicture}
\end{center}
\caption{Action of $C_2$ on $S^1\times S^1$. The orbifold conformational space $\mathcal {OC}_\mathcal M^{int}=\mathcal C_\mathcal M^{int}/G$ is homeomorphic to a Moebius strip.} \label{MB}
\end{figure}

\begin{table}
\begin{center}
\begin{tabular}{r cll}
Graph && CNPI group & GPI group \\
\hline\vspace{-1em}\\
$n$-cycle & \begin{tikzpicture}[scale=0.4,baseline={(current bounding box.center)}]
\fill (0,1) circle (3pt);
\fill (0,2.5) circle (3pt);
\fill (1,3.5) circle (3pt);
\fill (2.5,3.5) circle (3pt);
\fill (3.5,2.5) circle (3pt);
\fill (3.5,1) circle (3pt);
\fill (2.5,0) circle (3pt);
\fill (1,0) circle (3pt);
\draw (0,1) -- (0,2.5) -- (1,3.5) -- (2.5,3.5) -- (3.5,2.5) -- (3.5,1) -- (2.5,0) -- (1,0) -- cycle;
\end{tikzpicture} & $S_n \times C_2$ & $D_{2n} \times C_2$ \\
$n$-path & \begin{tikzpicture}[scale=0.4,baseline={(current bounding box.center)}]
\fill (0,0) circle (3pt);
\fill (1,0.5) circle (3pt);
\fill (2,0) circle (3pt);
\fill (3,0.5) circle (3pt);
\fill (4,0) circle (3pt);
\fill (5,0.5) circle (3pt);
\fill (6,0) circle (3pt);
\fill [white] (3,-0.5) circle (3pt);
\fill [white] (3,1) circle (3pt);
\draw (0,0) -- (1,0.5) -- (2,0) -- (3,0.5) -- (4,0) -- (5,0.5) -- (6,0);
\end{tikzpicture} & $S_{n+1} \times C_2$ & $C_2 \times C_2$ \\
pentane & \begin{tikzpicture}[scale=0.4,baseline={(current bounding box.center)}]
\draw (-0.6,-0.3) -- (0,0) -- (1,0.5) -- (2,0) -- (3,0.5) -- (4,0) -- (4.6,-0.3);
\draw (-0.6,0.3) -- (0,0) -- (0,-0.6);
\draw (4.6,0.3) -- (4,0) -- (4,-0.6);
\draw (0.7,1.1) -- (1,0.5) -- (0.9,-0.1);
\draw (3.3,1.1) -- (3,0.5) -- (3.1,-0.1);
\draw (2,0.6) -- (2,-0.6);
\fill (0,0) circle (3pt);
\fill (1,0.5) circle (3pt);
\fill (2,0) circle (3pt);
\fill (3,0.5) circle (3pt);
\fill (4,0) circle (3pt);
\filldraw [fill=white] (-0.6,0.3) circle (3pt);
\filldraw [fill=white] (-0.6,-0.3) circle (3pt);
\filldraw [fill=white] (0,-0.6) circle (3pt);
\filldraw [fill=white] (4.6,0.3) circle (3pt);
\filldraw [fill=white] (4.6,-0.3) circle (3pt);
\filldraw [fill=white] (4,-0.6) circle (3pt);
\filldraw [fill=white] (0.7,1.1) circle (3pt);
\filldraw [fill=white] (0.9,-0.1) circle (3pt);
\filldraw [fill=white] (3.3,1.1) circle (3pt);
\filldraw [fill=white] (3.1,-0.1) circle (3pt);
\filldraw [fill=white] (2,0.6) circle (3pt);
\filldraw [fill=white] (2,-0.6) circle (3pt);
\end{tikzpicture} & $S_5 \times S_{12} \times C_2$ & $(C_3 \wr C_2) \rtimes C_2$ \\
\makecell{tetraethylmethane \\ \hfill (heavy atoms)} & \begin{tikzpicture}[scale=0.4,baseline={(current bounding box.center)}]
\fill (0,0) circle (3pt);
\fill (1,0.5) circle (3pt);
\fill (2,0) circle (3pt);
\fill (3,0.5) circle (3pt);
\fill (4,0) circle (3pt);
\fill (0.5,-1) circle (3pt);
\fill (1.5,-1) circle (3pt);
\fill (3,-0.5) circle (3pt);
\fill (3.5,-1.5) circle (3pt);
\draw (0,0) -- (1,0.5) -- (2,0) -- (3,0.5) -- (4,0);
\draw (0.5,-1) -- (1.5,-1) -- (2,0) -- (3,-0.5) -- (3.5,-1.5);
\end{tikzpicture} & $S_9 \times C_2$ & $S_4$
\end{tabular}
\caption{Graphs and their associated symmetry groups}
\label{t:graphs}
\end{center}
\end{table}

\subsection{Metrics on conformational spaces}


Let $\mathcal M=(V,E,c_V,\Theta)$ be a molecular graph such that $|V|=n$. We can endow a conformational space $\mathcal C_{\mathcal M}^{int}$ with the following metrics: 
\begin{enumerate}
\item Given two matrices $X,Y\in\mathrm{Mat}_{3\times n}(\mathbb R)$ representing two conformers in $\mathcal C_\mathcal M^{int}$ the Frobenius distance $d_F(X,Y)$ between them, is defined to be $$d_F(X,Y)=\|X-Y\|_F,$$ where $\|-\|_F$ is a matrix norm of an $3\times n$ matrix $A$ defined as the squared root of the sum of the absolute squares of its entries
\begin{equation}
\|A\|_F=\sqrt{\sum_{i=1}^3\sum _{i=1}^n\mid a_{ij}\mid^2}.
\end{equation}
Observe that given a matrix $A\in \mathrm{Mat}_{p\times n}$ the Frobenius norm $\mathcal C^{int}_\mathcal M$ coincides with the Euclidean metric $\|-\|_2$ in the vector representation of $\mathcal C^{int}_\mathcal M$. 
\item 
Given two conformers $X,Y\in \mathcal C_\mathcal M^{int}$ and a matrix $R\in SO(3)$, the Procrustes distance function is defined as, 
    $$d_P(X,Y)=\underset{R\in SO(3)}{\inf}\{\| X-RY\|_F\}$$

\item A common distance metric used in molecular sciences is the root-mean-square deviation (RMSD):
$$d_{RMSD}(X,Y) = \frac{1}{\sqrt{N}}d_P(X,Y)$$
which is commonly used when aligning chemical structures such as proteins \cite{Duan2014}, and can be shown to be a metric \cite{Steipe2002,Steipe2002a,Sadeghi2013}.
\end{enumerate}

Let $\mathcal M$ be a molecular graph and let $ {GPI}$ be the graph permutation inversion group of $\mathcal M$. We can give the orbifold configuration space $O\mathcal C_\mathcal M^{int}$ the following metric
\begin{equation}\label{e:distance2}
d_{\mathcal O}(X,Y)=\underset{g\in{GPI}}{min}\{d_{P}(X,g\cdot Y)\}
\end{equation}

\section{Results and discussion}

Following our discussion in Section \ref{MF}, each conformer $C_{\varphi}$ in the configuration space $\mathcal C_\mathcal M$ is uniquely  determined by the positions of the atoms in the molecules.  A conformation of a molecule $\mathcal M$ with $n$ atoms is represented by an $n$-by-$3$ matrix with real coefficients. The entries of the $i$-th row vector in this matrix are the spatial coordinates of the $i$-th atom in $C_{\varphi}$. We eliminate the action of the subgroup of translations $T$ by fixing the centre of mass of each conformer at $(0,0,0)\in\mathbb R^3$. From the set of molecular configurations $\mathcal S=\{C_i\}_{i=1}^N$ sampled from the configuration space $\mathcal C_\mathcal M^P$ we generated a data set of the following metric spaces:
\begin{itemize}
    \item The metric space $(X,d_F)$.  Assuming that the bond angles are almost constant then by Theorem \ref{t:pb}, we have that $\mathcal C_\mathcal M\cong \mathcal C_\mathcal  M^{int}\times SO(3)$. We generate a set of $n$-by-3 matrix in $\mathbb{R}^{n\times3}$, where $n$ is the number of atoms in the molecule. Each matrix is associated to an aligned conformer. The euclidean metric has been used in other studies of conformational spaces, such as in \cite{Martin2010,Martin2011}.
    \item The metric space $(X,d_P)$ defined by the Procrustes metric. We obtained a distance matrix.
    \item The metric space $(X,d_{RMSD})$ defined by the RMSD metric. We obtained a distance matrix. 
    \item The metric space $(X,d_\mathcal O)$ defined by the orbifold metric. We obtained a distance matrix.
\end{itemize}

\subsection{Local dimension and orientability}


We used geometric and topological data analysis to study the conformational spaces of small molecules: butane, pentane, alanine dipeptide, and cyclooctane.  We used principal component analysis, or PCA -- one the most common tools in data analysis.

The data set of conformers associated to a molecule $\mathcal M$,  consists of a set $\mathcal S$ of 3-by-$n$ real matrices $\mathcal S=\{A_i\}_{i=1}^N$, with $N$ the number of  conformers. We compute the distance matrix to study the geometric and topological properties of the conformational space $\mathcal C_\mathcal M$.
In Table \ref{locpca} we show the result of local dimension and orientability. In our analysis, fluoromethane is the only molecule with no torsional angles. That is, it is the only molecule that has a contractible conformational space. 

Local dimension shows that data variability depends, to a great extent, on torsional angles. Indeed, all the non-rigid molecules  have at most 2 torsional angles, and this allows to reduce the local dimension of the conformational space from $3n-6$ to 2 or 3 dimensions. In contrast, although fluoromethane is the smallest molecule, the local dimension observed in this case is the highest. For this molecule, the reduction of the local dimension depends on other parameters such as lengths of bonds and angles between them.

One interesting outcome of our local dimensionality study is the detection of singularities. This algorithm approximates the minimal dimension required to span a neighbourhood of a point. If the analysed point is a singular point, its neighbourhood  will typically have a higher dimension than the one observed for a point whose neighbourhood can locally be modelled as a subset of Euclidean space. Thus we could detect the singularities in the conformational spaces of cyclooctane and pentane after quotienting out the action of $C_2$. In the former case, the local dimension detected at a singular point is higher that the local dimension at non-singular points. In \cite{Martin2011}, it was shown that the conformational space of cyclooctane is a non-manifold. More specifically, the authors showed that the conformational space can be embedded in 5-dimensional Euclidean space and that it corresponds to the space formed by the Klein bottle and a $2$-sphere intersecting in $2$ circles. The local dimension and the orientability of the clusters was determined. Our results show that the local dimension of the set of singular points is one, whereas the local dimension of the other clusters is 2. Moreover, one of these clusters is orientable and the other is not. Persistent homology of these spaces is shown in Table \ref{locpca}.

\begin{table}[ht]
\centering
\begin{tabular}{c c c c}
\hline
Molecule &Loc. Dim.&dim. singularities &Orientable\\ [0.5ex] 
\hline
butane&2&- &Yes\\
pentane&2 &-&Yes\\
alanine dipeptide&2&-&Yes\\
cyclooctane & 2,3 &1&No\\
pentane($C_2$) & 2 & 1 &No\\ 
Cyclooctane (singular set) & 1 & -&Yes \\ 
Cyclooctane (sphere)&2&-&Yes \\
Cyclooctane (Klein bottle) & 2&-&No \\[1ex]
\hline
\end{tabular}
\caption{Local dimension and orientability} \label{locpca}
\label{table:nonlin}
\end{table}

\begin{figure}[ht]
\centering
\begin{subfigure}{0.49\textwidth}
\includegraphics[width=\textwidth]{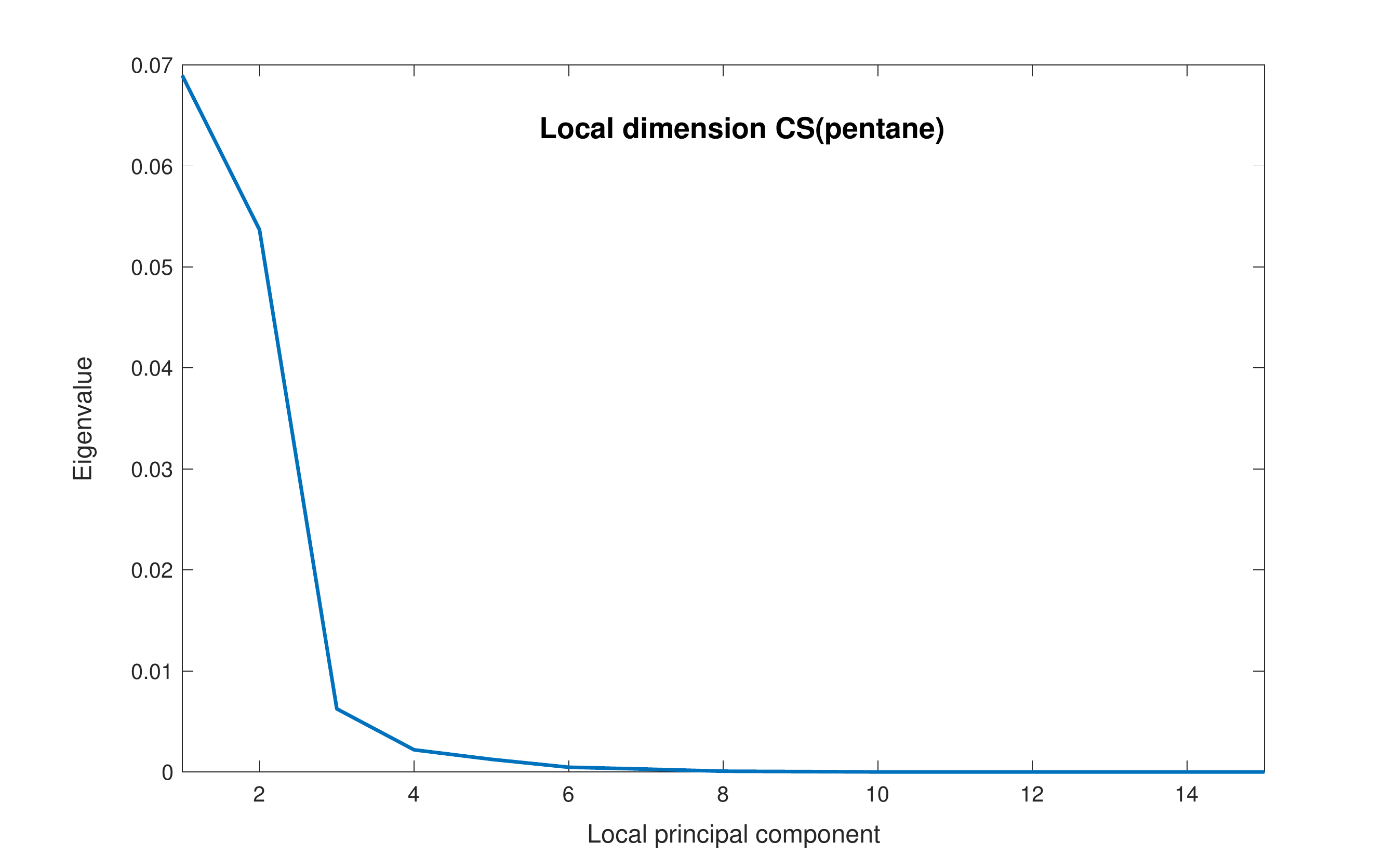}
\caption{}
\end{subfigure}
\begin{subfigure}{0.49\textwidth}
\includegraphics[width=\textwidth]{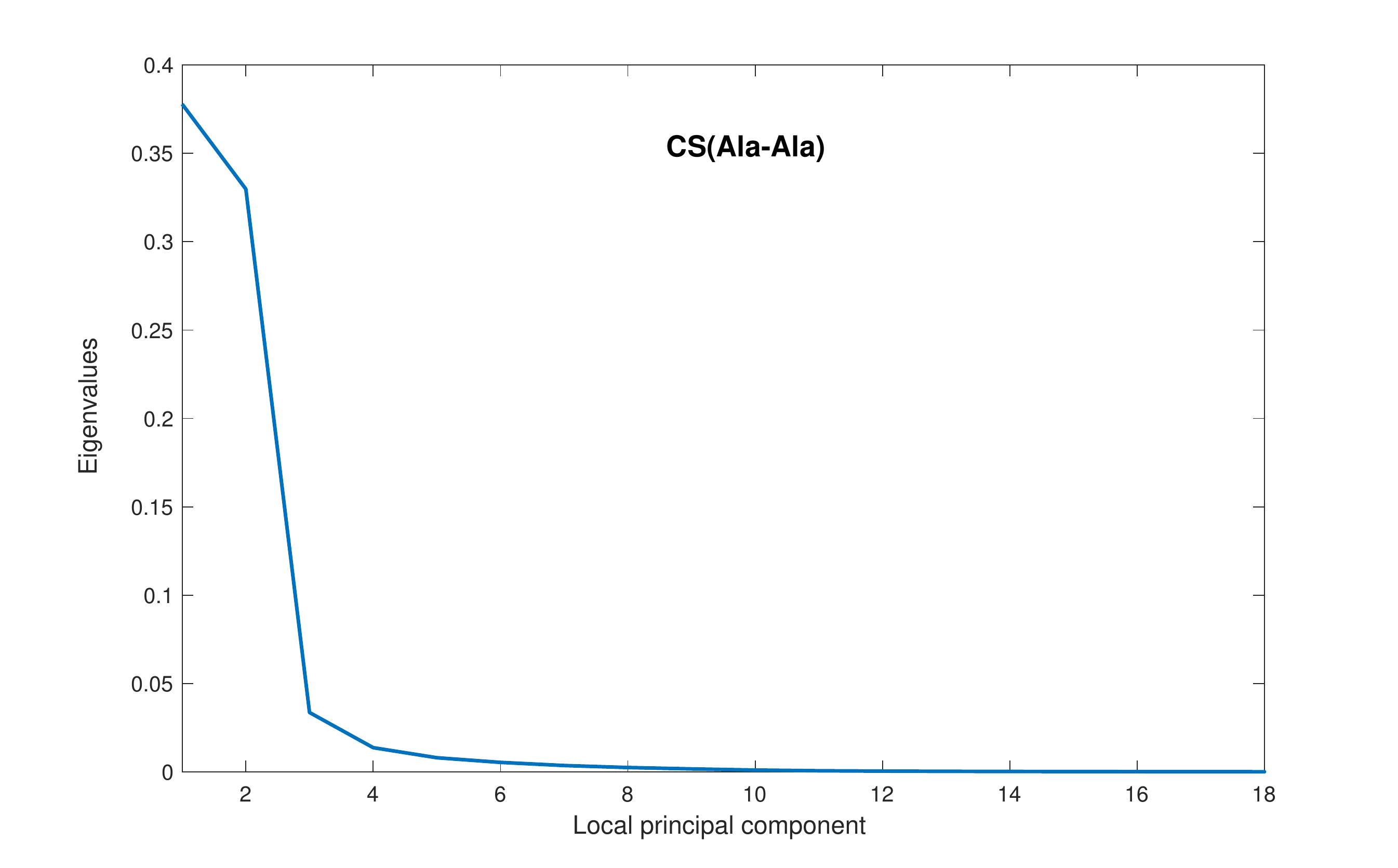}
\caption{}
\end{subfigure}
\caption{Local dimension of  $ \mathcal C_\mathcal M^{int}$ was determined using PCA at each point. Plots (a) and (b) show the local PCA at one point of $\mathcal C_\mathcal M^{int}$ of pentane and alanine dipeptide, respectively, with the euclidean metric. In both cases local PCA suggests that there are two principal components. This implies that the local dimension at a  chosen point $C_\varphi\in \mathcal C_\mathcal M^{int}$ of both pentane and alanine dipeptide is 2. }
\end{figure}

\begin{figure}[ht]
\centering
\begin{subfigure}{0.49\textwidth}
\includegraphics[width=\textwidth]{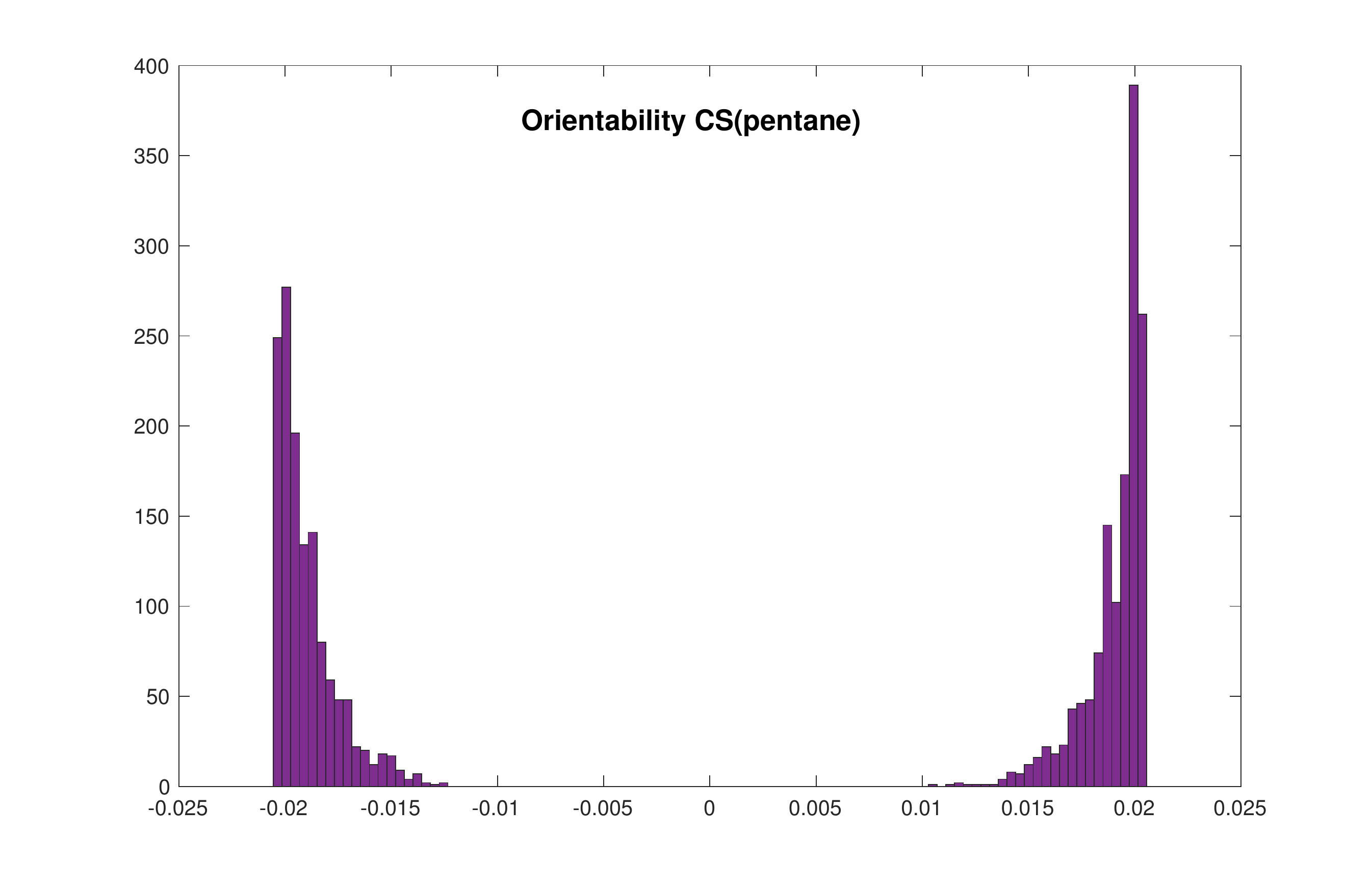}
\caption{}
\end{subfigure}
\begin{subfigure}{0.49\textwidth}
\includegraphics[width=\textwidth]{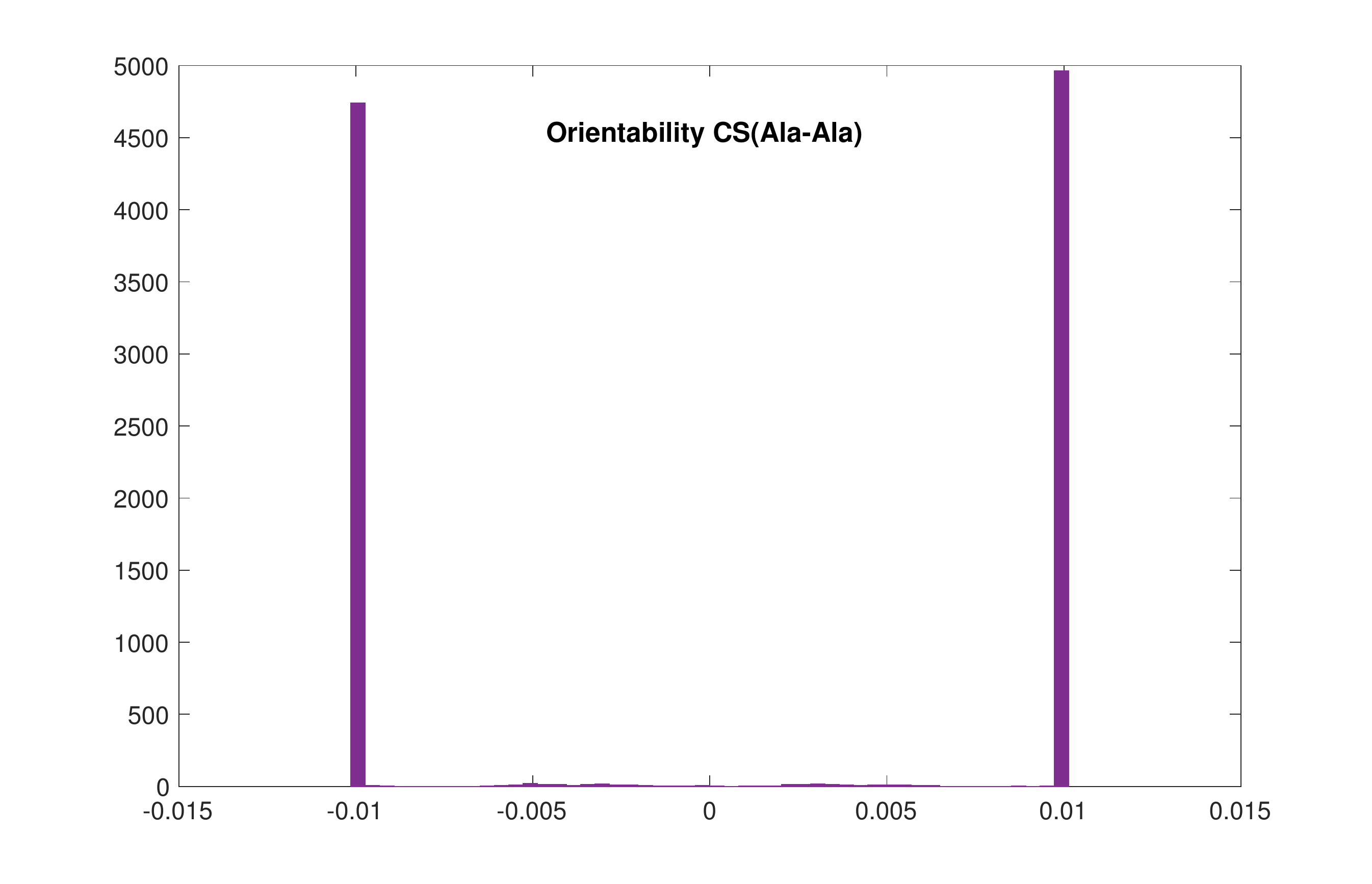}
\caption{}
\end{subfigure}
\caption{The results of the detection of orientability of pentane and dipeptide alanine are shown in figures (a) and (b), respectively. There is noise in the data set of pentane, however it is possible to distinguish two well separated region which corresponds to a choice of orientation at each point in both pentane and alanine dipeptide.}
\end{figure}

\begin{figure}[ht]
\centering
\begin{subfigure}{0.44\textwidth}
\includegraphics[width=\textwidth]{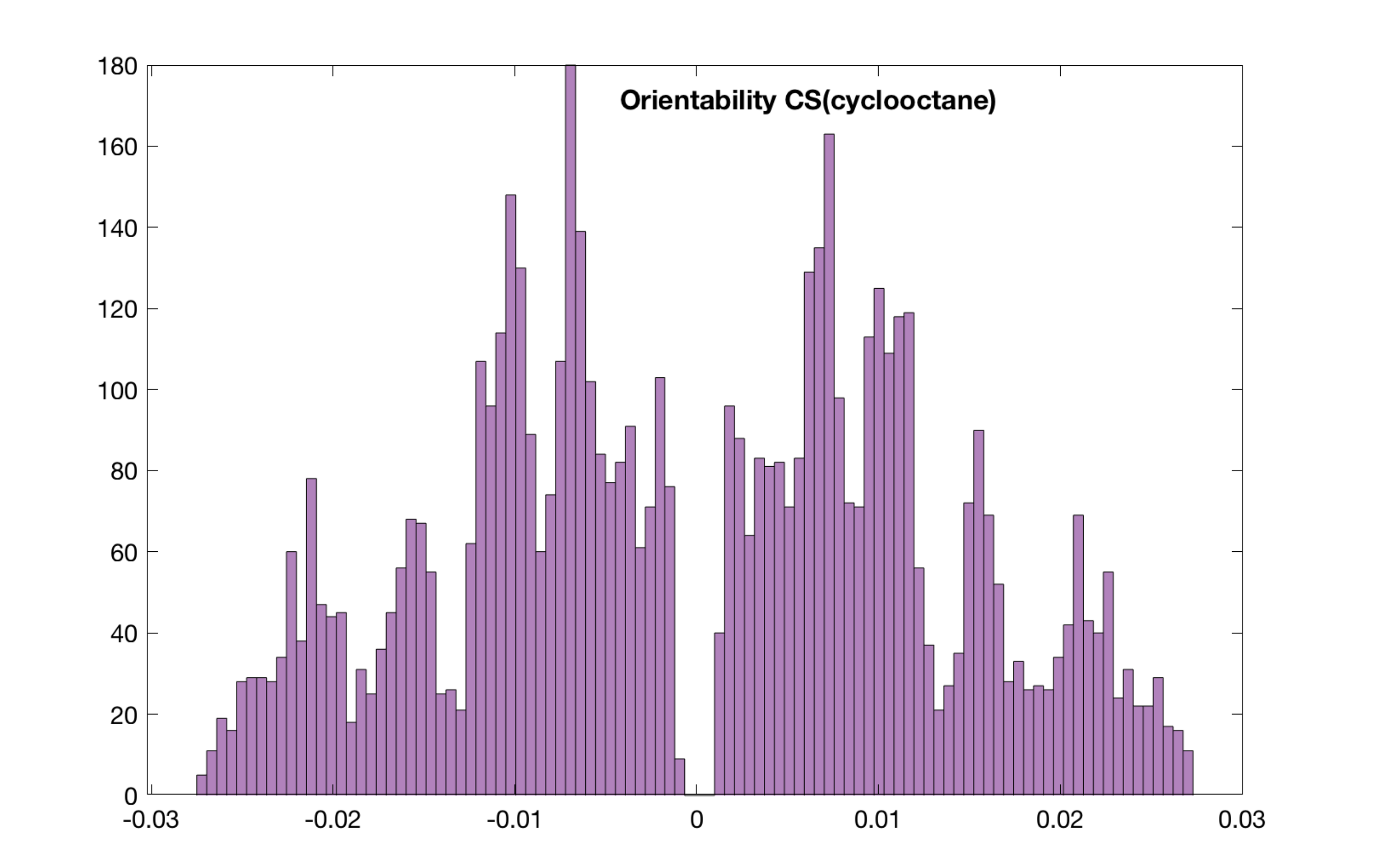}
\caption{}
\end{subfigure}
\begin{subfigure}{0.50\textwidth}
\includegraphics[width=\textwidth]{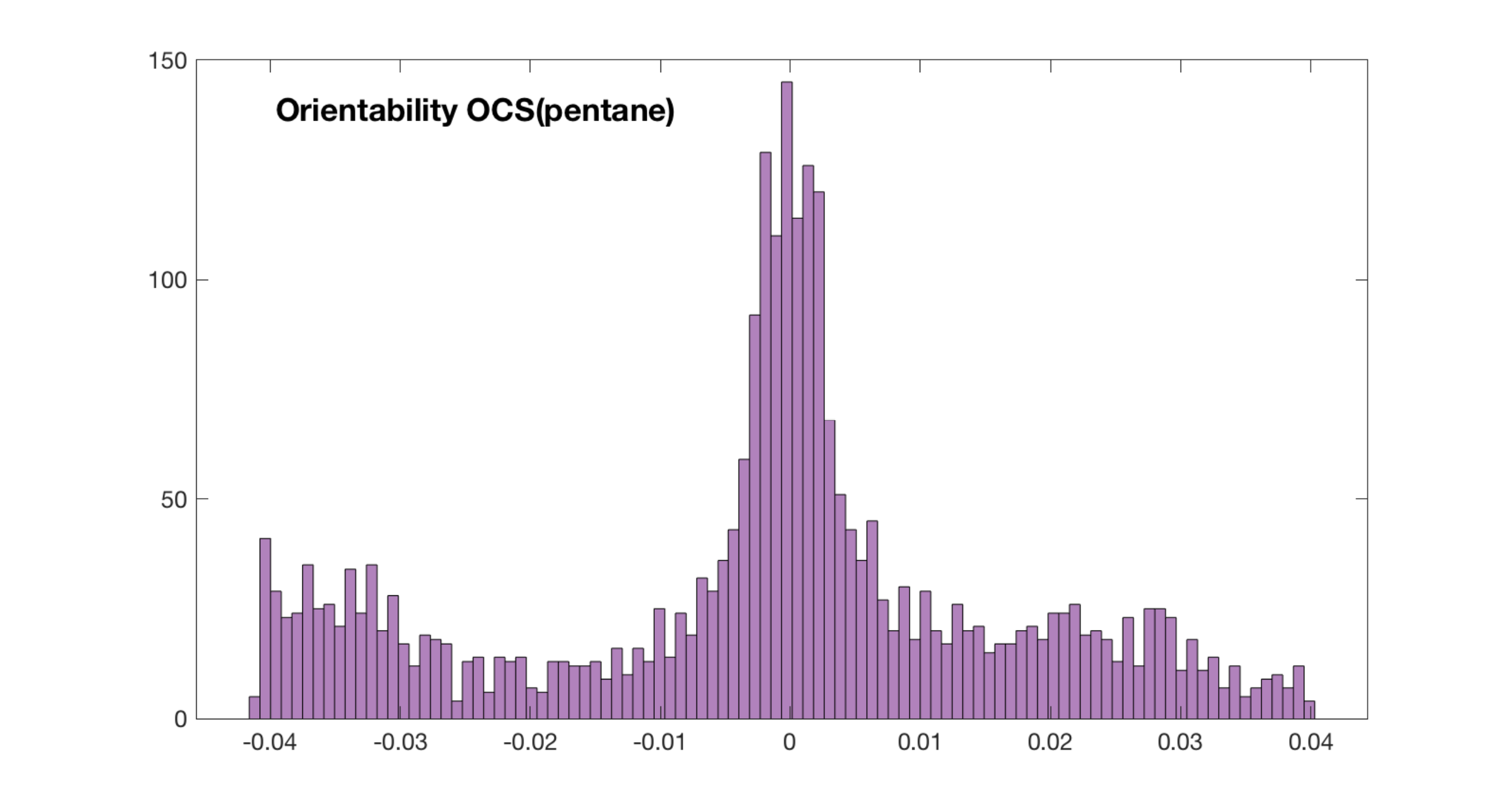}
\caption{}
\end{subfigure}
\caption{Figure (a) shows the orientability test result for $\mathcal C_\mathcal M^{int}$ of cyclooctane with the euclidean metric whereas figure (b) shows the corresponding result for $\mathcal C_\mathcal M^{int}$ of pentane with the orbifold metric.}
\end{figure}

\begin{figure}[ht]
\centering
\begin{subfigure}{0.45\textwidth}
\includegraphics[width=\textwidth]{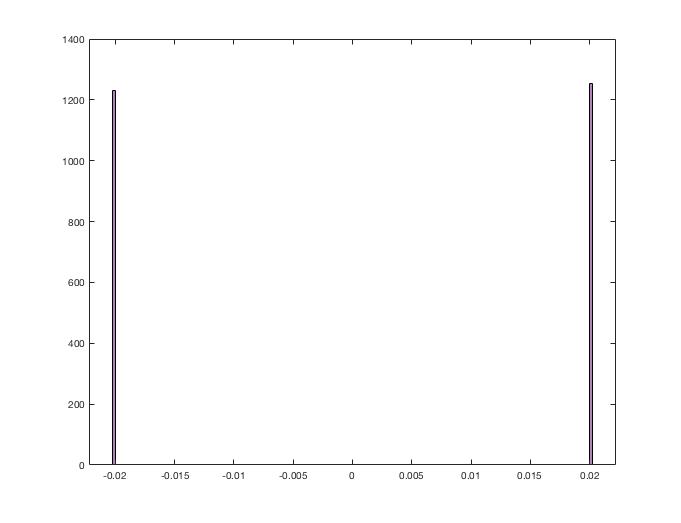}
\caption{}
\end{subfigure}
\begin{subfigure}{0.5\textwidth}
\includegraphics[width=\textwidth]{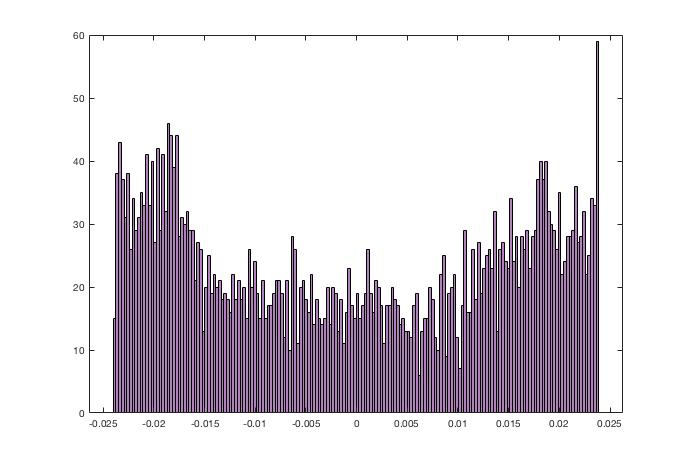}
\caption{}
\end{subfigure}
\caption{Orientability of clusters of cyclooctane: (a) orientable cluster (b) non-orientable cluster. The clustering method identified two clusters of local dimension 2 which present different orientability properties. }
\end{figure}

\begin{figure}[ht]
\begin{subfigure}{0.50\textwidth}
\includegraphics[width=\textwidth]{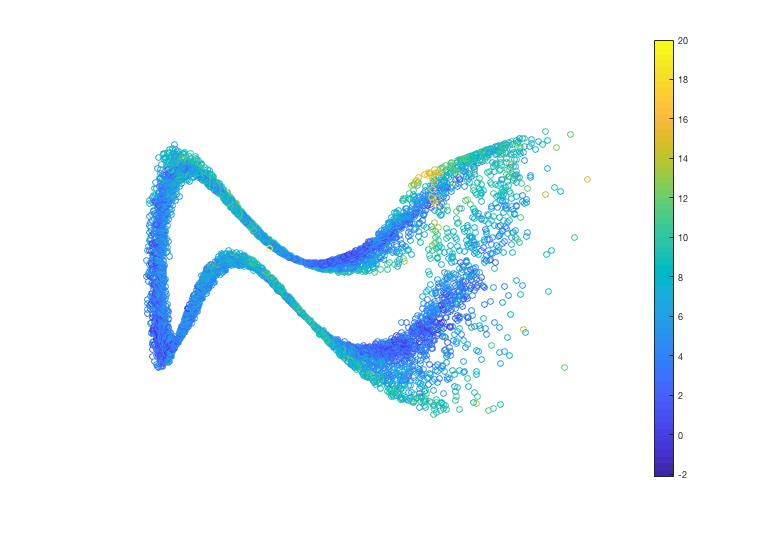}
\caption{}
\end{subfigure}
\begin{subfigure}{0.50\textwidth}
\includegraphics[width=\textwidth]{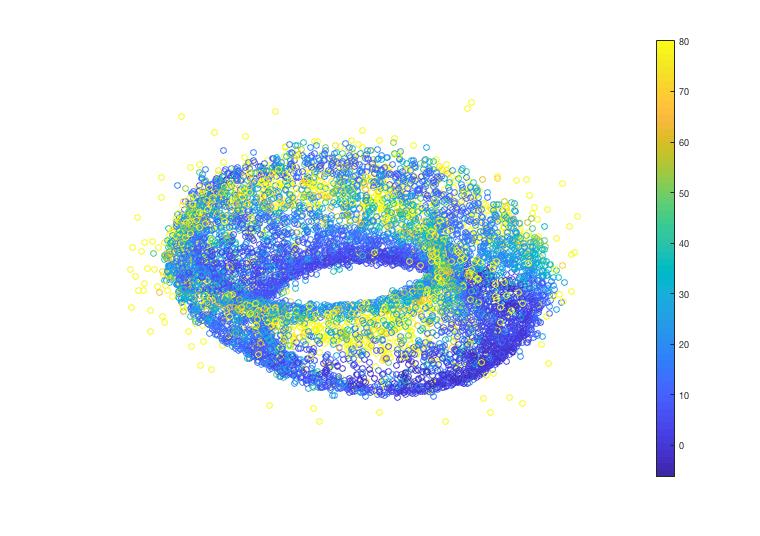}
\caption{}
\end{subfigure}
\caption{3d-embbeding of $\mathcal C_\mathcal M^{int}$ spanned by heavy atoms of (a) butane and (b) pentane. The scatter plots are coloured by the energy function.}
\end{figure}


\subsection{Persistent homology of conformational spaces}

\begin{table}[ht]
\centering
\begin{tabular}{c c c c c}
\hline 
Molecule &$N$& $\beta_0$ & $\beta_1$ & $\beta_2$ \\
\hline
Alanine dipeptide & 9112 & 1 & 2 & 1\\
Pentane & 9108 &1 & 2 & 1\\
Pentane/$C_2$ & 9108& 1 & 1 & 0\\
Cyclooctane (full) & 6040 & 1 & 1 & 2\\
Cyclooctane (sphere) & 2483 & 1 & 0 & 1\\
Cyclooctane (Klein bottle) & 4196 & 1 & 2 & 1\\
Cyclooctane (singularities) &639 &1 & 1&0\\
Cyclooctane (Klein bottle mod 3) & 4196 & 1 & 1 & 0\\
Fluoromethane & 10000 & 1 & 0 & 0\\
\hline
\end{tabular}
\caption{Betti numbers $\beta_k$ for the conformational spaces of the molecules studied in this work, calculated using the RMSD for all molecules and orbifold metric for pentane. The Betti numbers of four subspaces of the conformational space of cyclooctane are shown in the table. } 
\label{tab:molecule_betti}
\end{table}
In this section we analyse the topology of the internal configuration spaces $C_{\mathcal M}^{int}$. We investigate whether the choice of metric for the conformational space leads to a significantly different persistent homology.

\subsubsection{Alanine Dipeptide}
The alanine dipeptide molecule can be seen in Figure \ref{fig:aladip_structure}. We note that there are two free torsions in the alanine dipeptide molecule, as the peptide bonds themselves are considered to be inflexible due to its resonance stabilisation. Therefore, ignoring bond stretching and bending, alanine dipeptide would be predicted to have a conformational space of $S^1\times S^1=T^2$.
\begin{figure}[h]
\centering
\includegraphics[width=0.35\linewidth]{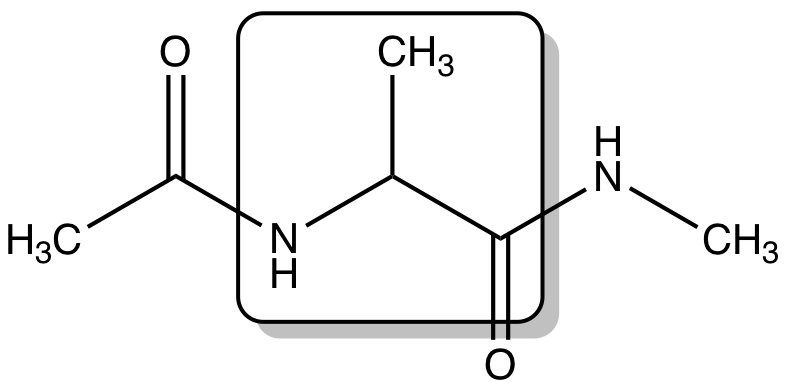}
\caption{The structure of alanine dipeptide. The alignment core refers to the heavy atoms inside the square box.}
\label{fig:aladip_structure}
\end{figure}

The vector representation of alanine dipeptide was defined by aligning the set of conformers to a minimum energy conformer calculated by density functional theory. Furthermore, we aligned each conformer to a core set of atoms. This flexibility is inherent within the vector representation of the conformational space. 

Persistent homology was calculated by using the Rips complex persistence on the Euclidean distance on the vector representation. Persistence was calculated on the vector space using all atoms, and also for heavy (non-hydrogen) atoms. These can be seen in Figure \ref{fig:aladip_vector_representations}.

\begin{figure}
\centering
\begin{subfigure}{.49\textwidth}
\includegraphics[width=\linewidth]{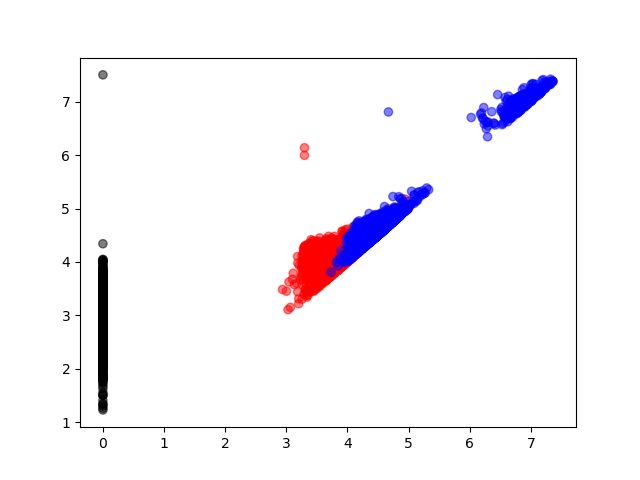}
\caption{All atoms}
\label{subfig:aladip_vector_all}
\end{subfigure}
\begin{subfigure}{.49\textwidth}
\includegraphics[width=\linewidth]{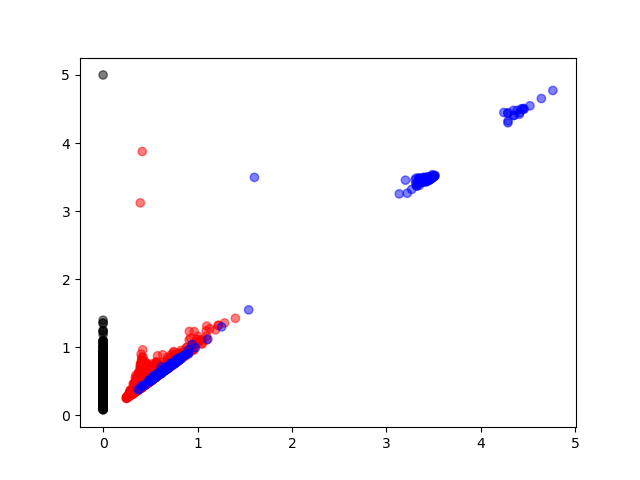}
\caption{Heavy atoms only}
\label{subfig:aladip_vector_heavy}
\end{subfigure}
\caption{Persistence of the vector space representation of alanine dipeptide}
\label{fig:aladip_vector_representations}
\end{figure}
Firstly, it is clear that both sub-representations have similar persistent Betti numbers of $(1,2,1)$. This matches those of a torus, as earlier predicted. However, these features appear at different times depending on representation. This can be explained when considering the behavior of the Euclidean metric as the number of dimensions increases. The all atom system is 66-dimensional, whereas the heavy atom system has only 30 dimensions. This leads to a shorter average distance between conformers in the conformational space in the heavy atom system. 

To create the RMSD representation for alanine dipeptide, the optimal alignment between every pair of conformers was found by optimising the RMSD between their atoms. This was calculated for both all atom and heavy atom sets. Persistent homology was then calculated using the Rips filtration on the optimum RMSD metric, with the resulting persistence diagram in Figure \ref{fig:aladip_RMSD_representations}.
\begin{figure}
\centering
\begin{subfigure}{.49\textwidth}
\includegraphics[width=\linewidth]{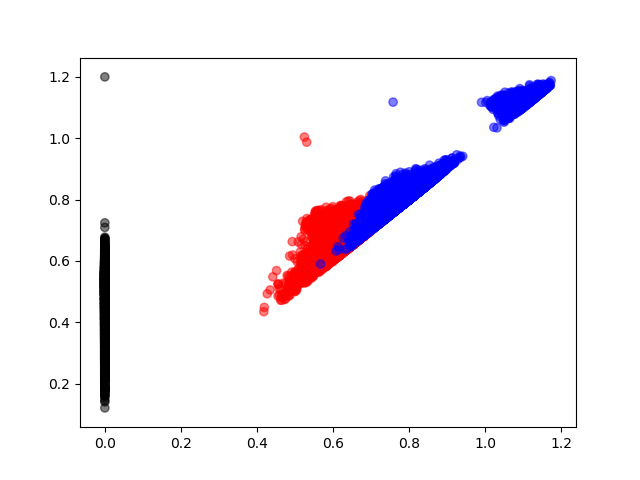}
\caption{All atoms}
\label{subfig:aladip_RMSD_all}
\end{subfigure}
\begin{subfigure}{.49\textwidth}
\includegraphics[width=\linewidth]{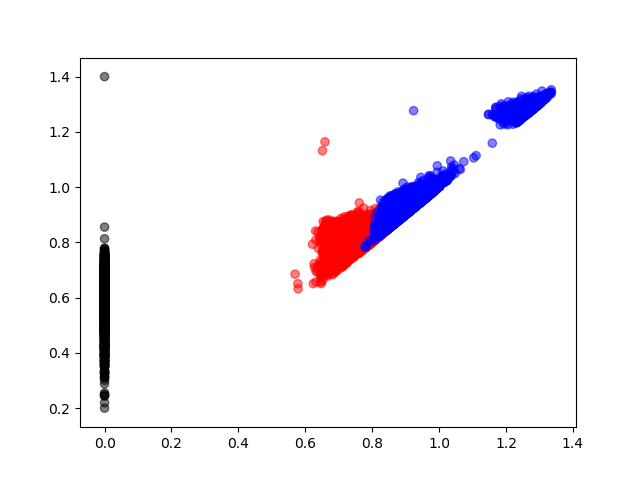}
\caption{Heavy atoms only}
\label{subfig:aladip_RMSD_heavy}
\end{subfigure}
\caption{Persistence of the RMSD representation of alanine dipeptide}
\label{fig:aladip_RMSD_representations}
\end{figure}
Again, we find similar persistent Betti numbers of (1,2,1). This suggests that the topology of our conformational space is independent of representation - however it will be shown that this is not always the case. Further, there is a much smaller difference in choice of atom subsets in the case of the RMSD representation. This is due to the difference in the behaviour of the RMSD metric itself. For example, features appear slightly earlier in the all atom system. This is because the average displacement of hydrogen atoms tends to be quite small, but the increase in the denominator of the RMSD metric causes a slightly lower metric. In the case of the vector representation, each hydrogen adds an extra 3 dimensions to the Euclidean distance - the RMSD does not suffer from this curse of dimensionality in the same way.

\subsubsection{Pentane}
The structure for pentane can be seen in Figure \ref{fig:pentane_structure}. Similarly to alanine dipeptide, there are two free torsions in pentane. For this section we are ignoring the symmetry of the pentane molecule, and therefore we expect the conformational space to have the topology of the torus $T^2$. 
\begin{figure}[h]
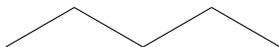

\centering
\chemfig{-[:30]-[:-30]-[:30]-[:-30]}
\caption{The structure of pentane}
\label{fig:pentane_structure}
\end{figure}

Similarly to alanine dipeptide, we define our vector representation by aligning each conformer to some reference, in this case a DFT optimised conformation. However, in contrast to alanine dipeptide, we do not align to a core, but instead align to minimise the RMSD between all carbon atoms. Persistence is then calculated analogously to alanine dipeptide.
The RMSD representation for pentane was defined by calculating the optimum RMSD distance between all carbon atoms for each pair of pentane conformers. Persistence was then calculated using this metric. The vector and RMSD representation persistent homology can be found in Figure \ref{fig:pentane_representations}. 
\begin{figure}
\centering
\begin{subfigure}{.49\textwidth}
\includegraphics[width=\linewidth]{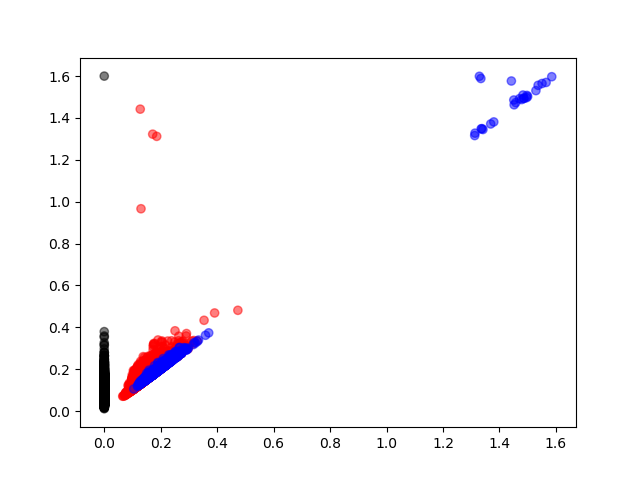}
\caption{Vector representation}
\label{subfig:pentane_vector}
\end{subfigure}
\begin{subfigure}{.49\textwidth}
\includegraphics[width=\linewidth]{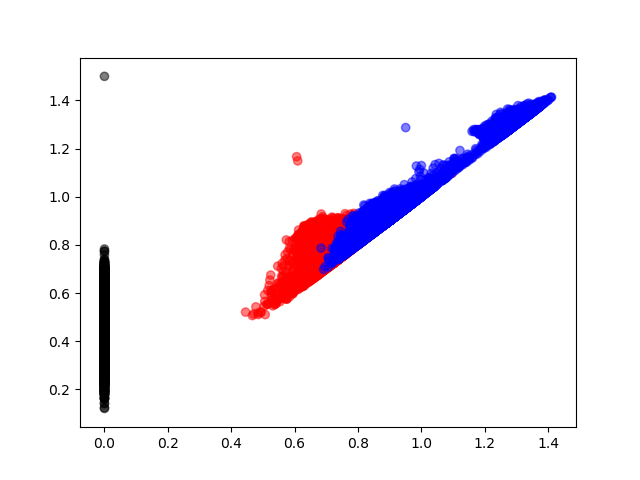}
\caption{RMSD representation}
\label{subfig:pentane_RMSD}
\end{subfigure}
\caption{Persistence of the different representations of the pentane conformational space. Symmetry is ignored for these representations}
\label{fig:pentane_representations}
\end{figure}

There is a clear difference in the persistent homology for these two representations. Whereas the RMSD representation has persistent Betti numbers of (1,2,1), those of the vector representation are far more unclear. This makes clear one of the main drawbacks of the vector representation of conformational spaces, in that they are dependent on the alignment to a reference. If we had aligned to some core of the three central carbons, we would have seen persistent Betti numbers of (1,2,1), as we did for alanine dipeptide. Furthermore, although aligning each conformer to some reference may make it easier to visualise the entire set at once (as is often done in the study of proteins etc.), it makes pairwise comparisons of different conformers hard to perform. In contrast, the RMSD representation does not suffer from this, and therefore leads to the correct persistent Betti numbers. Furthermore, as we will now see, the RMSD representation allows us to directly take into account molecular symmetry - which would be impossible for the vector representation.

The molecular graph of pentane has an inherent symmetry, as discussed earlier. In particular, the two torsions are equivalent, rather than being distinguishable. It is a standard result that leads to a \mobius band topology.

We can take this symmetry into account when calculating the RMSD metric between two conformers. This is done by performing two separate RMSD alignments. In the first, we align the two conformers such that each carbon in the first conformer is matched to the same carbon in the second conformer. In the second, we match each carbon in the first conformer to the opposite carbon in the second. We then choose the optimum RMSD to be our metric. The resulting persistent homology can be seen in Figure \ref{fig:rmsd_permute}.
\begin{figure}[h]
\centering
\includegraphics[width=0.49\textwidth]{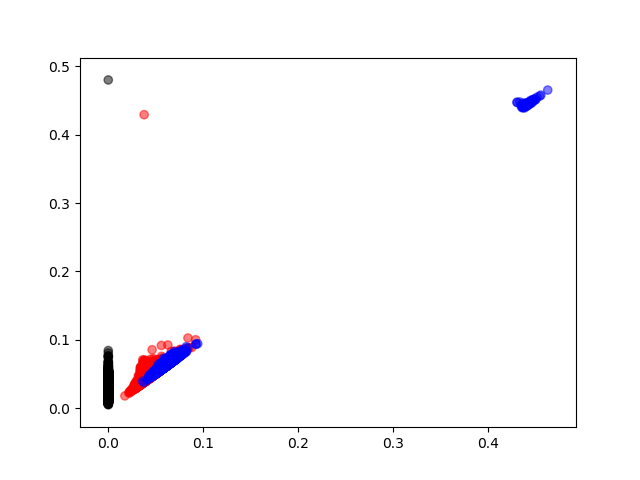}
\caption{Persistence of the RMSD representation of pentane conformational space, with symmmetry taken into account}
\label{fig:rmsd_permute}
\end{figure}
There is a significant difference in the persistent homology when compared to the original RMSD representation. In particular, the persistent Betti numbers are now (1,1,0) matching those of the \mobius band. 


\subsection{ Cyclooctane}
\subsubsection{Finding singular points and clustering}

Once singular points were identified, they were removed from the data set. In principle, this separated the data into its manifold components. These can then be found using a clustering algorithm, in our work, HDBScan \cite{Campello2013}. Subsequently, the manifolds were matched, in an attempt to recreate the spherical and Klein bottle components found in the original work. This could then be verified using persistence.

We used the software \emph{Ripser} \cite{ripser1} to compute the persistent homology of our point data sampled from the conformational spaces. 

To verify that we had correctly found the spherical component, we calculated the persistent homology of the Rips complex constructed on the RMSD metric between conformers. The resulting persistence diagram can be seen in Figure \ref{fig:CO_S2}. We can see that the persistent Betti numbers are (1,0,1), as expected for a sphere.

\begin{figure}[h]
\centering
\includegraphics[width=0.49\textwidth]{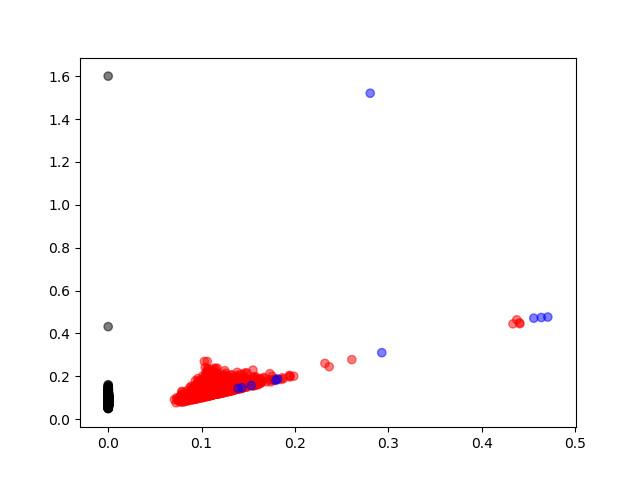}
\caption{Persistence of the RMSD representation of the spherical component of the cyclooctane conformational space}
\label{fig:CO_S2}
\end{figure}

Verifying the presence of the Klein bottle component is slightly more involved. Here, we perform persistent homology calculations in the same manner as before. However, we calculate homology over two different fields of coefficients, namely $\mathbb{Z}_2$ and $\mathbb{Z}_3$. The Klein bottle has different (persistent) Betti numbers over these fields, (1,2,1) and (1,1,0) respectively - a torus would not. This allows us to verify with more confidence the presence of a Klein bottle component. The persistence diagrams can be seen in Figure \ref{fig:CO_KB}. The correct persistent Betti numbers are found.

\begin{figure}
\centering
\begin{subfigure}{.49\textwidth}
\includegraphics[width=\linewidth]{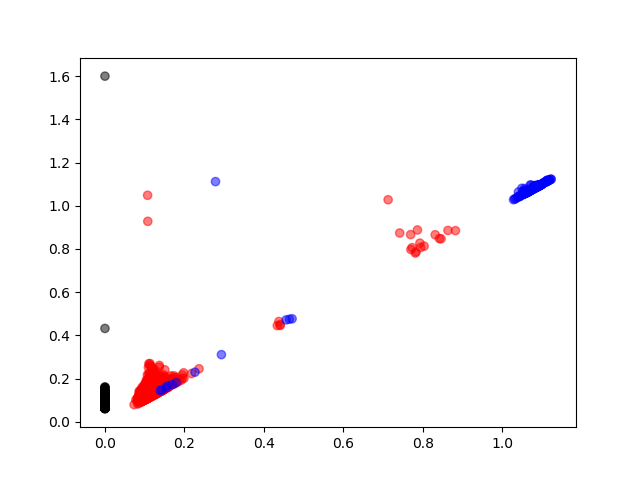}
\caption{$\mathbb{Z}_2$}
\label{subfig:KB_mod2}
\end{subfigure}
\begin{subfigure}{.49\textwidth}
\includegraphics[width=\linewidth]{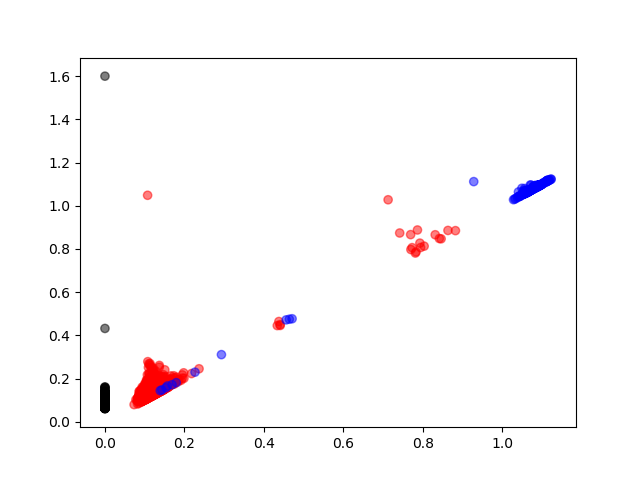}
\caption{$\mathbb{Z}_3$}
\label{subfig:KB_mod3}
\end{subfigure}
\caption{Persistence diagrams to verify the presence of a Klein bottle component to the cyclooctane conformational space.}
\label{fig:CO_KB}
\end{figure}

\subsection{Energy landscapes}


In order to compute the Morse-Smale complexes of the energy landscapes, we made use of the Topology ToolKit (ttk) \cite{ttk1}, a software platform designed for the topological analysis of scalar data. We also used Matlab's \emph{alphaShape} function for initial processing, which produces an alpha-shape triangulation from a point cloud, of a specified radius. Extracting the boundary of this triangulation gave us the surface triangulation we required. This, together with the energy values at each point was input into the Topology ToolKit software.

We analysed the potential energy landscapes of cyclooctane, alanine dipeptide, pentane and fluoropentane. The conformational spaces of alanine dipeptide, pentane and fluoropentane are all tori, which makes for a side-by-side comparison of different energy landscapes on what is topologically the same conformational space. There is also an analysis of the free energy landscape of alanine dipeptide.

\

The results for cyclooctane are compared with the results from \cite{Martin2011}.
After finding the singular points, and separating out the sphere and Klein bottle components, we did separate analyses of the energy landscapes on these two components. The points sampled from a sphere, an orientable, low-dimensional manifold, can be triangulated so that the resulting simplicial complex has the topology of a sphere. This simplicial complex can then be input into the \emph{ttk} software and filtered by the energy function.

We use the connection between persistent homology and discrete Morse theory, to smooth this energy surface by removing topological features below a certain persistence threshold. In order to do this, we compute the persistence diagram, as well as a statistical summary of it, called the \emph{ttkPersistenceCurve} in the \emph{ttk} software, which plots the distance from the diagonal against the number of persistent points in the persistence diagram. Depicting this curve in the log scale allows us to estimate a sensible level of noise, by observing a change in the gradient of this curve. In Figure \ref{cycloDenoising}, this computation is shown for the spherical part of cyclooctane, while Figure \ref{cyclo} shows the computed Morse-Smale complex.

\begin{figure}
\centering
\begin{subfigure}{.32\textwidth}
\includegraphics[width=\linewidth]{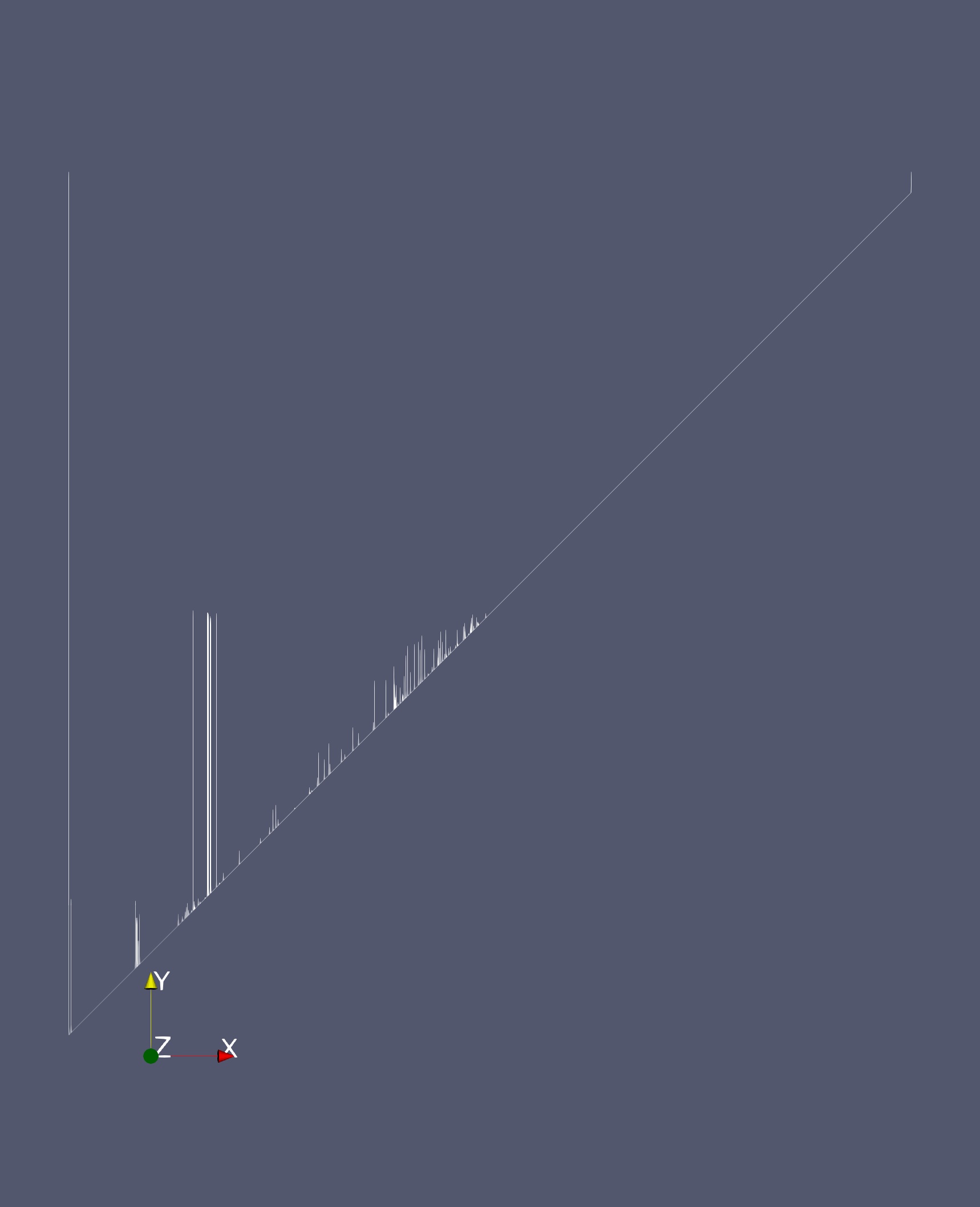}
\caption{}
\end{subfigure}
\begin{subfigure}{.32\textwidth}
\includegraphics[width=\linewidth]{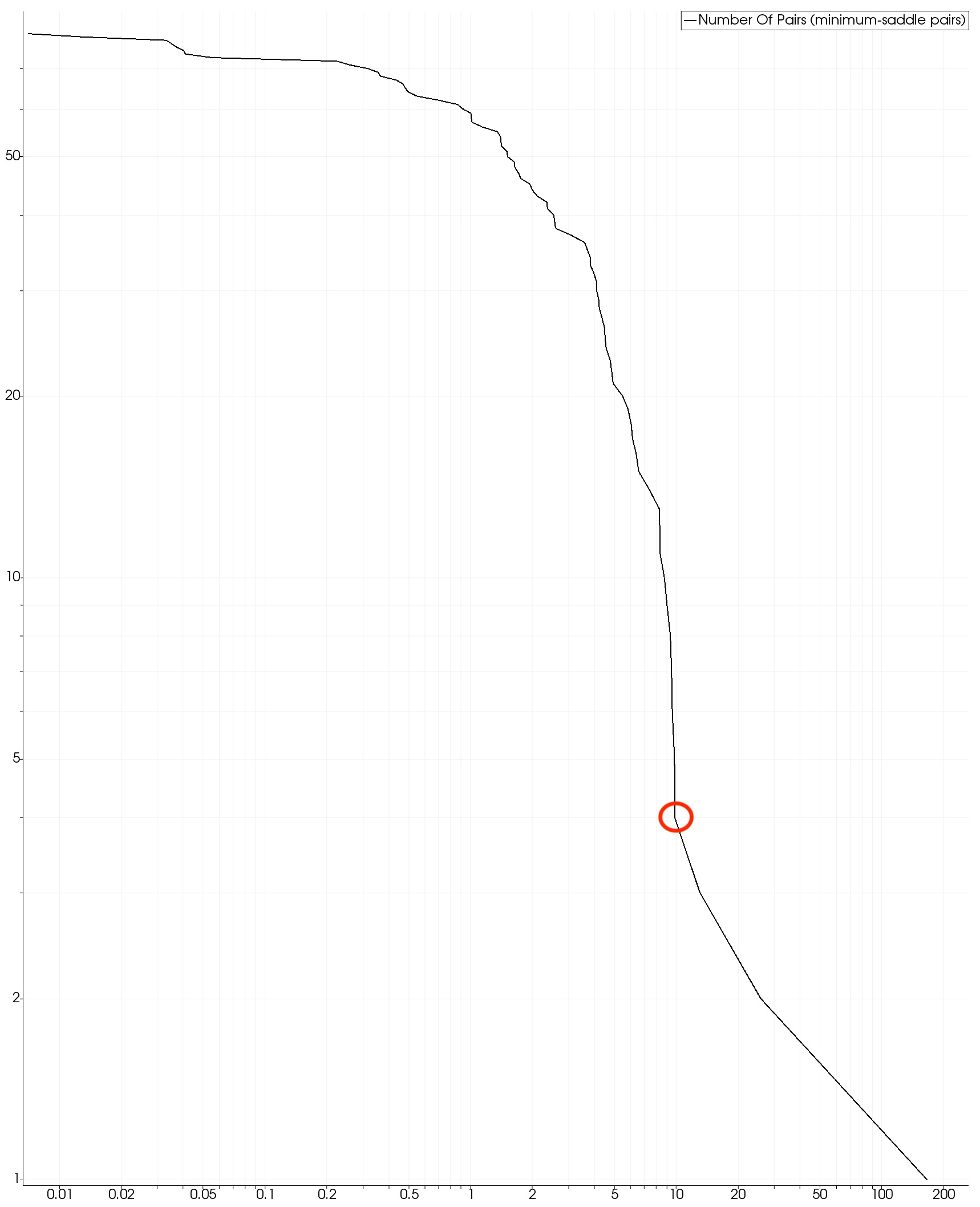}
\caption{}
\end{subfigure}
\begin{subfigure}{.32\textwidth}
\includegraphics[width=\textwidth]{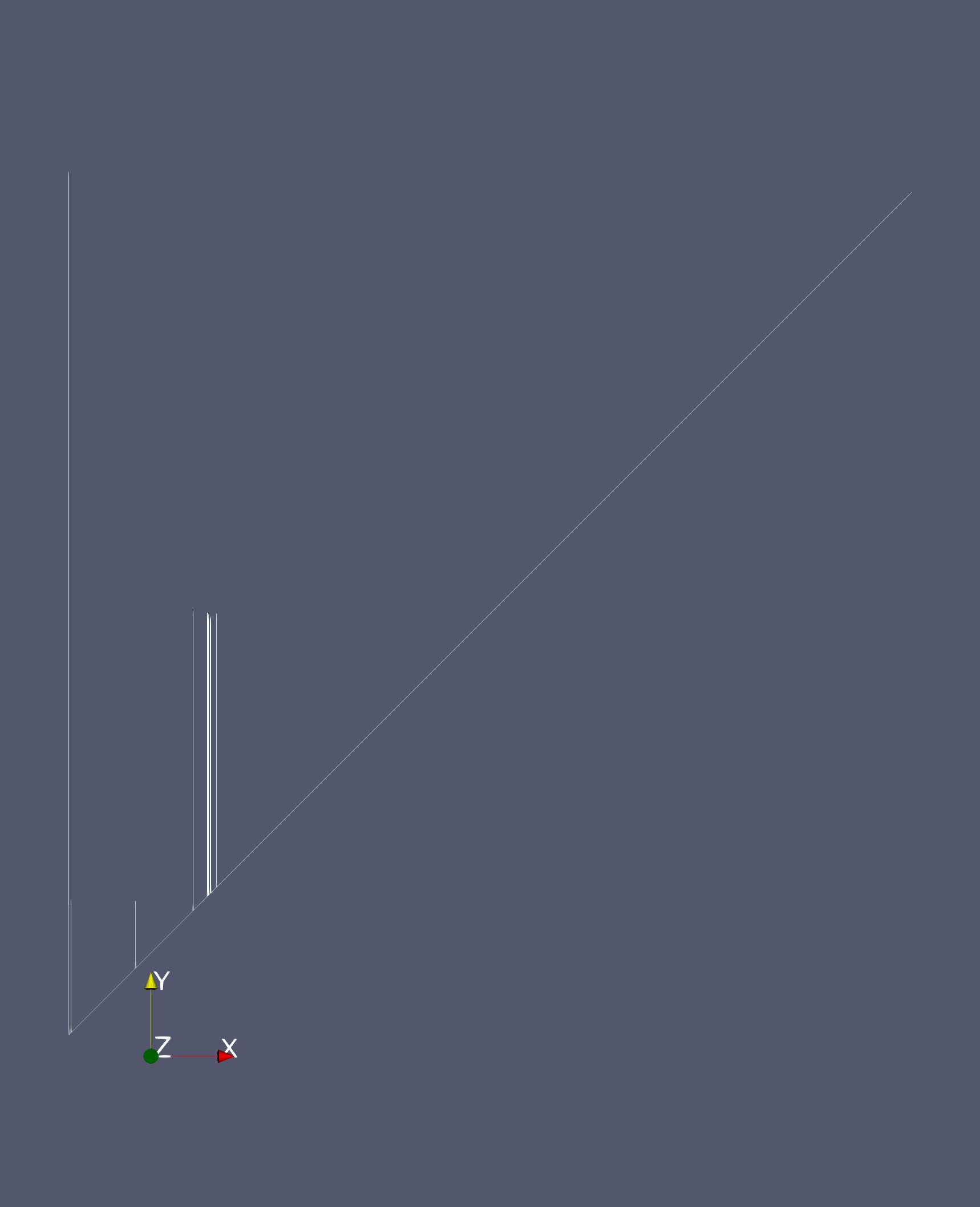}
\caption{}
\end{subfigure}
\caption{The noise detection for the spherical part of cyclooctane, computed using the \emph{ttk} software. The first figure (a) shows the zero-dimensional persistence diagram, the second (b) shows the persistence curve with the likely persistence threshold for noise circled in red, and the final (c) image shows the denoised persistence diagram, using the threshold discovered through the persistence curve.}
\label{cycloDenoising}
\end{figure}

\begin{figure}[ht]
\centering
\begin{subfigure}{.32\textwidth}
\includegraphics[width=\linewidth]{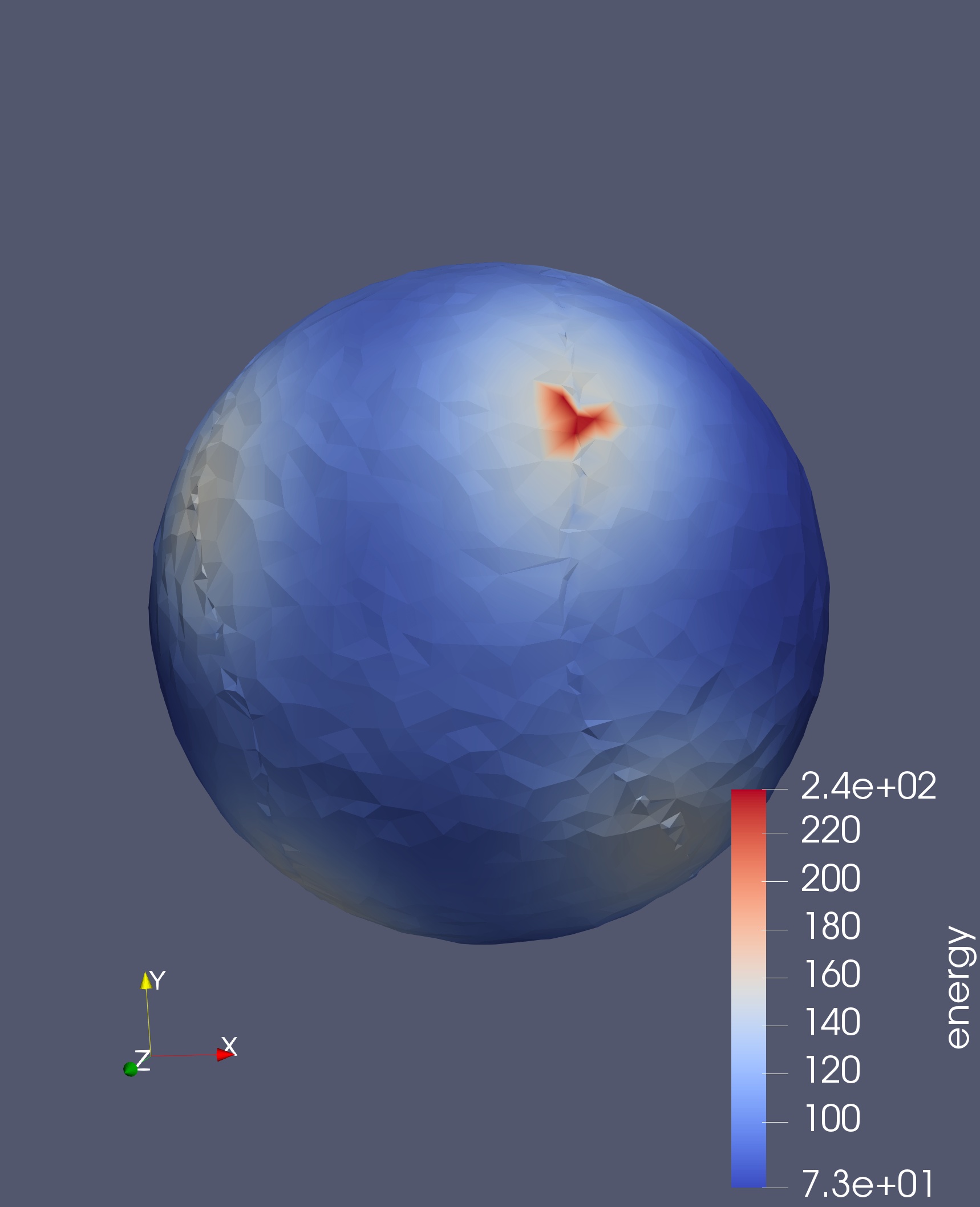}
\caption{}
\end{subfigure}
\begin{subfigure}{.32\textwidth}
\includegraphics[width=\linewidth]{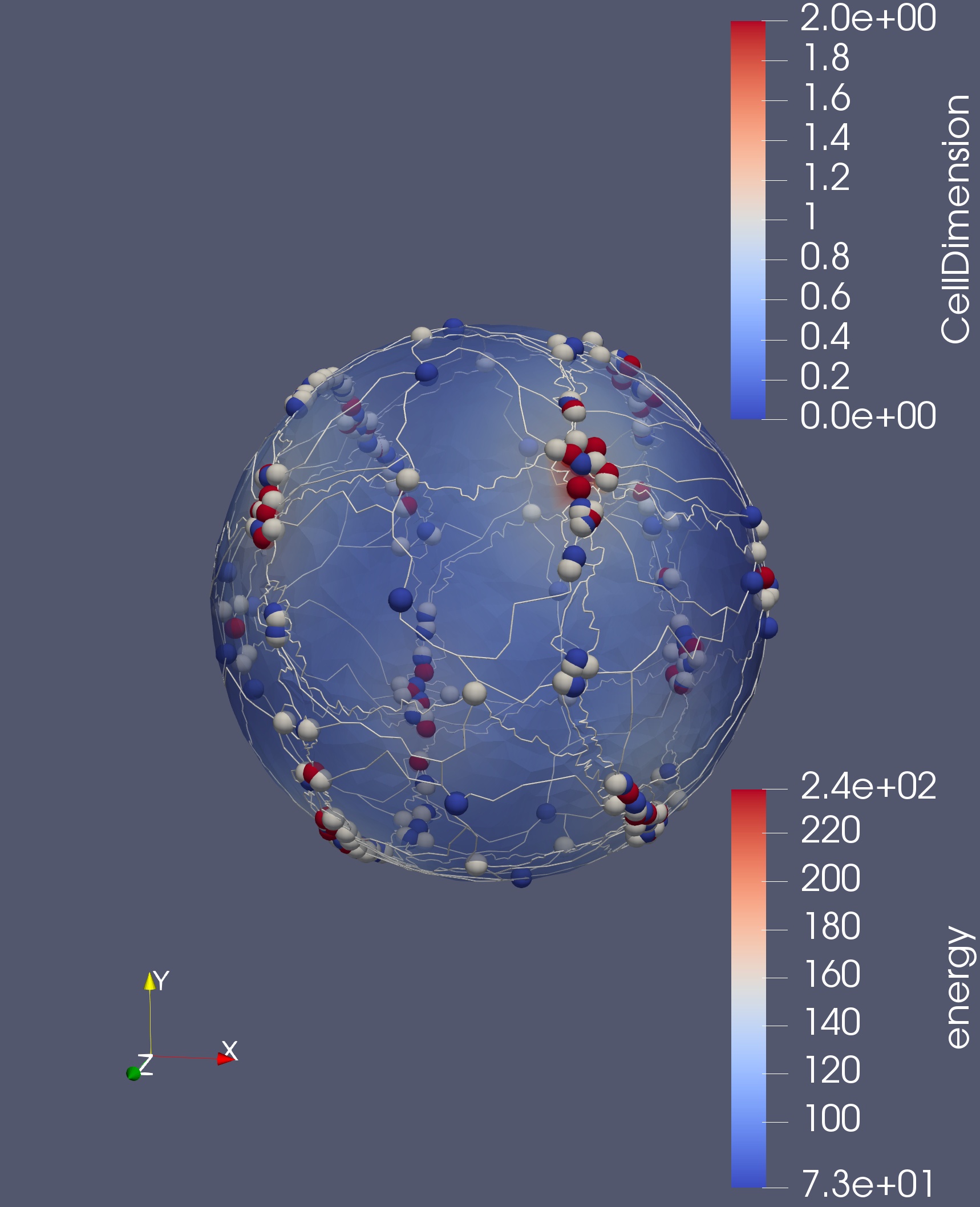}
\caption{}
\end{subfigure}
\begin{subfigure}{.32\textwidth}
\includegraphics[width=\textwidth]{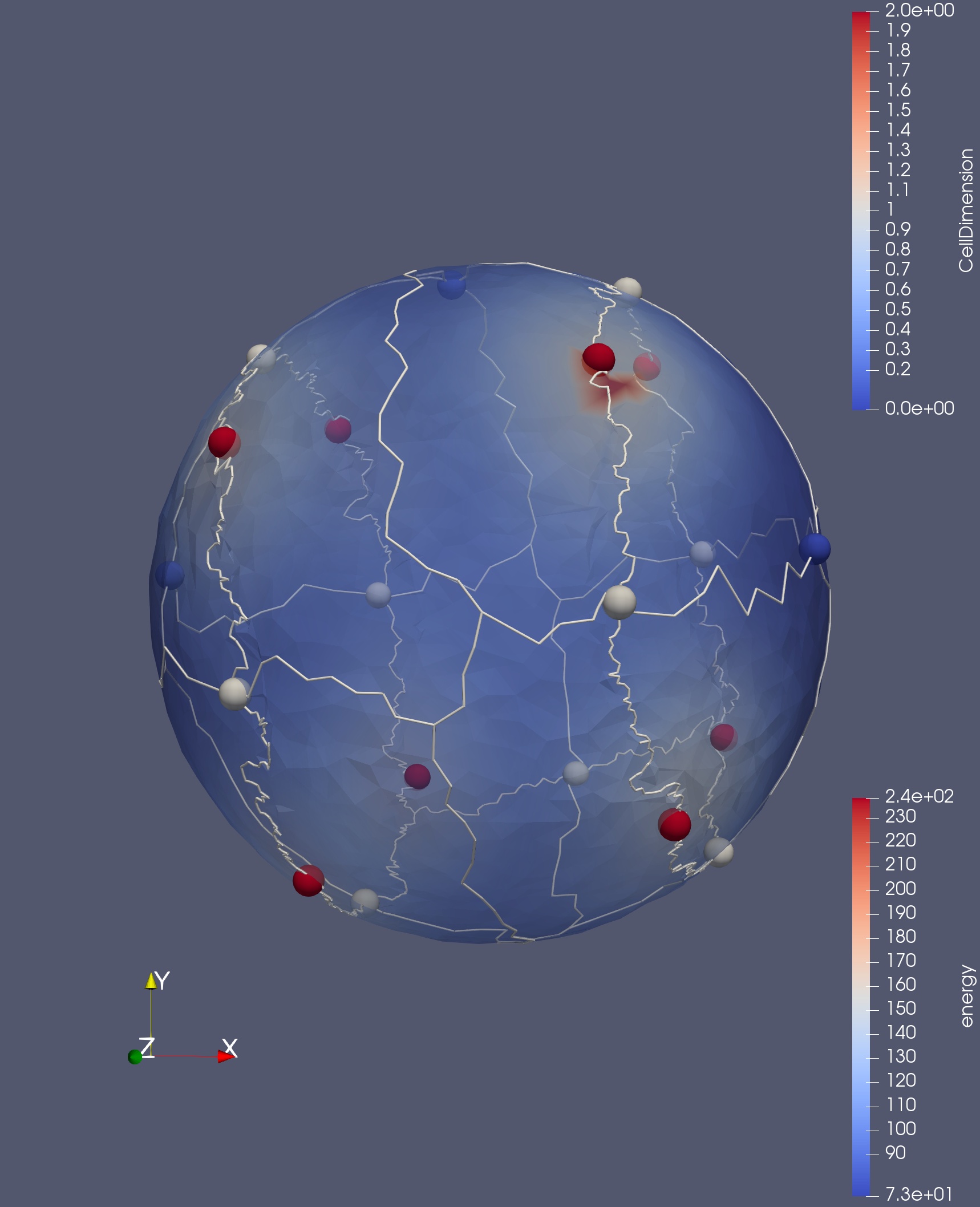}
\caption{}
\end{subfigure}
\caption{The Morse-Smale complex for the spherical component of cyclooctane, computed using the \emph{ttk} software. The first figure (a) shows the potential energy function. The second figure (b) shows the Morse-Smale complex without topological simplification and after simplification (c). The red points are the maxima, the white points are saddle points and the blue points are the energy minima.}
\label{cyclo}
\end{figure}

\

For alanine dipeptide, we consider the model of the conformational space where only the two torsional angles are allowed to rotate, giving a torus. The scalar values of the potential energy are then given on this two-dimensional surface, displayed in Figure \ref{alanine}.

\begin{figure}[ht]
\centering
\begin{subfigure}{.24\textwidth}
\includegraphics[width=\linewidth]{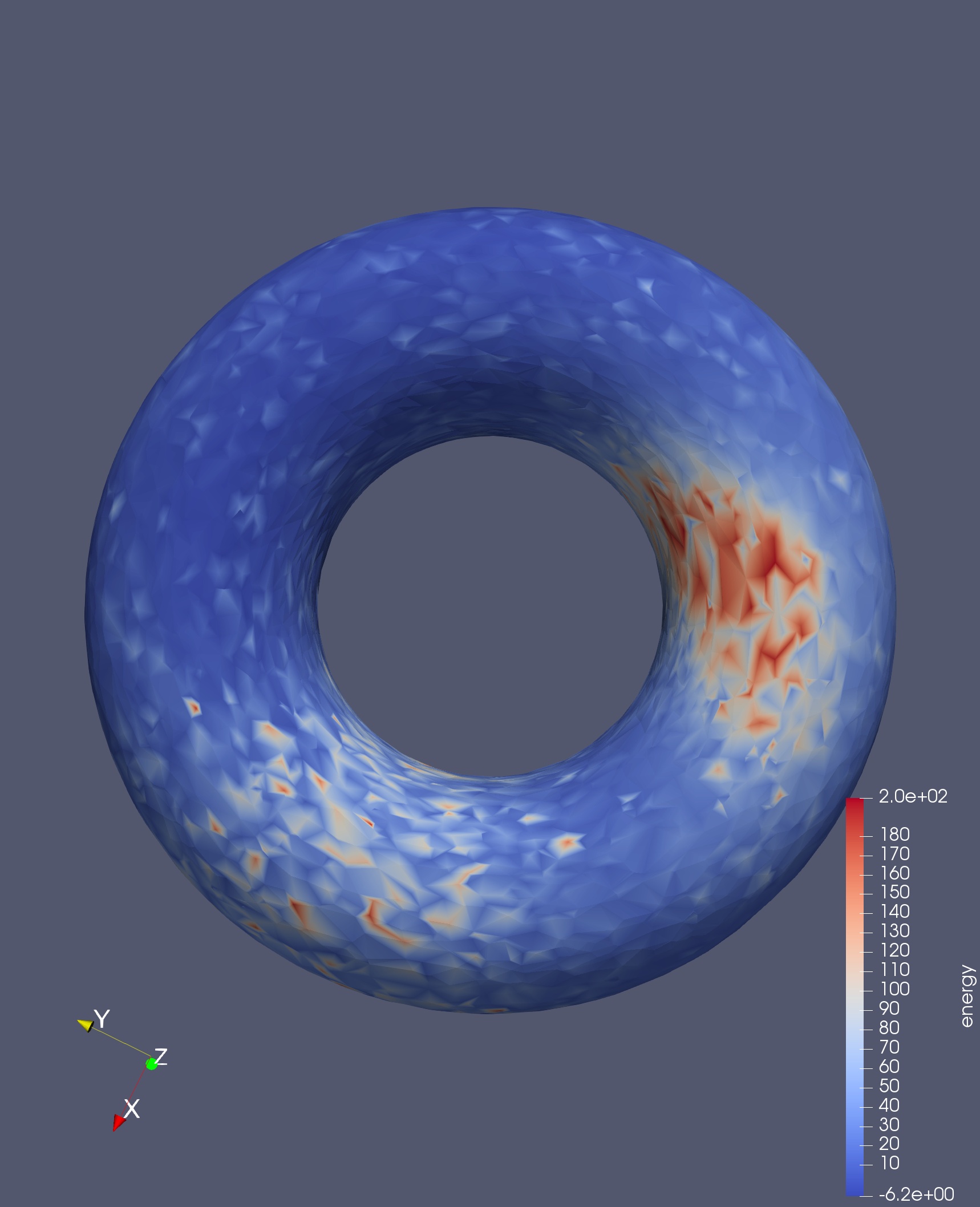}
\caption{}
\end{subfigure}
\begin{subfigure}{.24\textwidth}
\includegraphics[width=\linewidth]{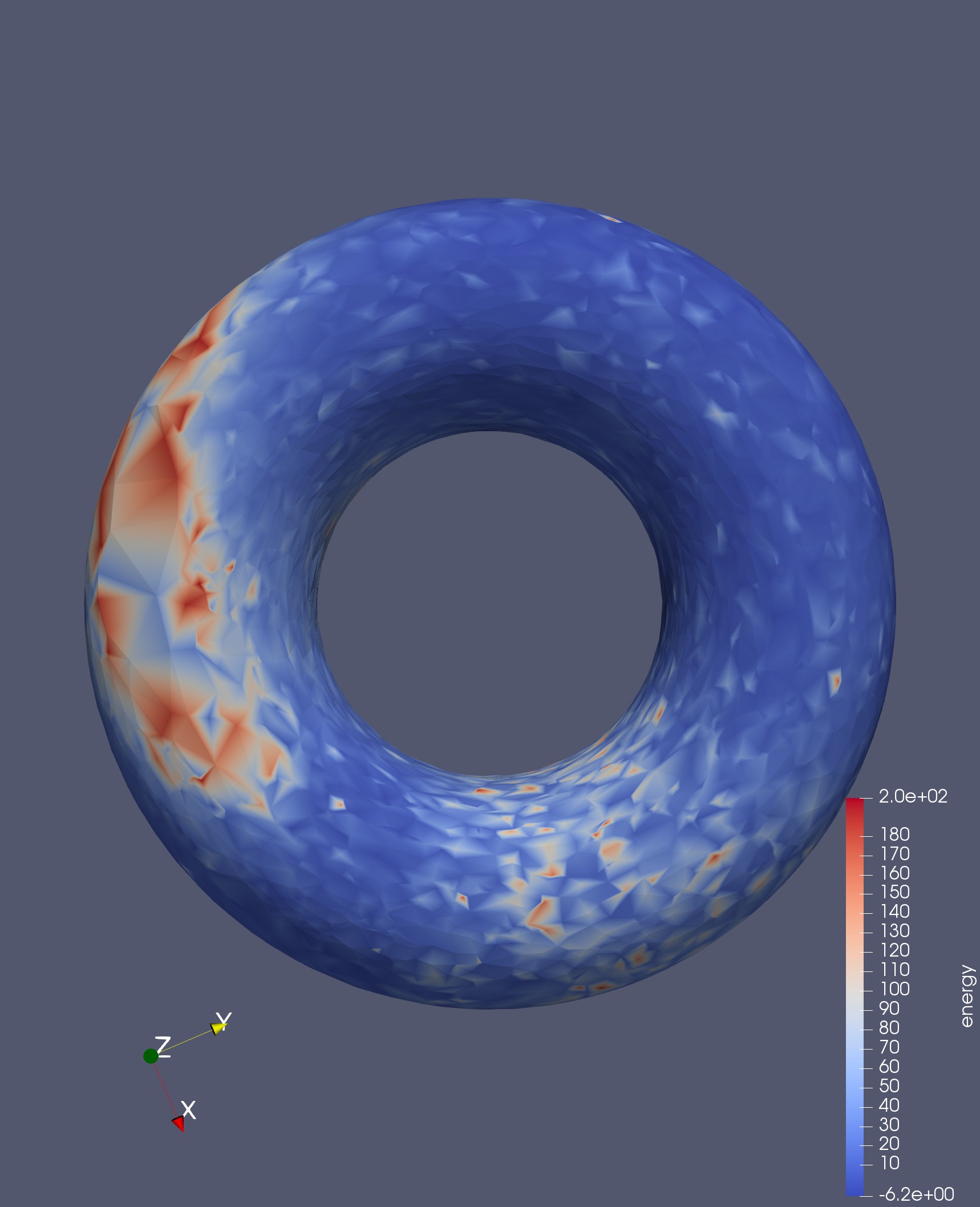}
\caption{}
\end{subfigure}
\begin{subfigure}{.24\textwidth}
\includegraphics[width=\textwidth]{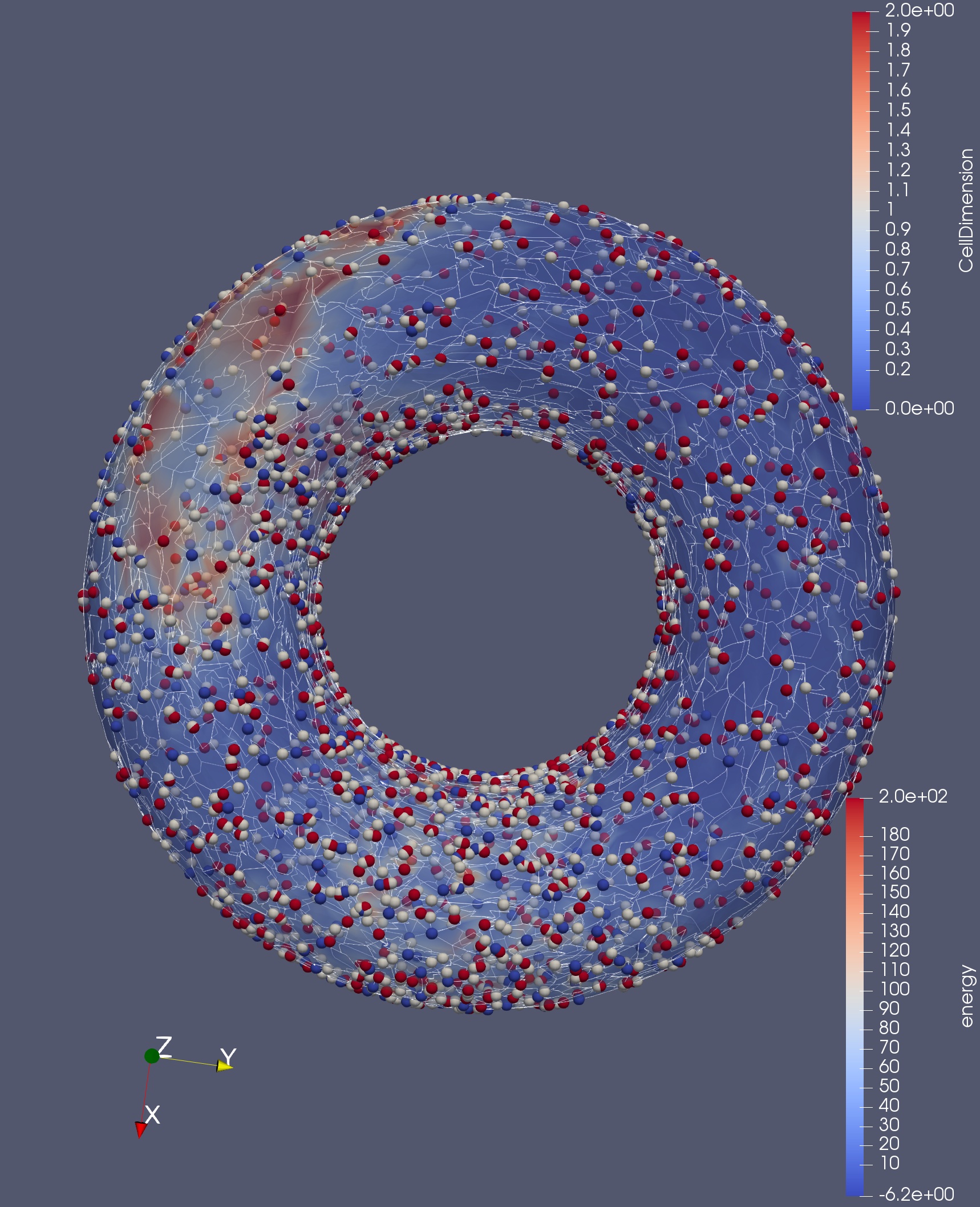}
\caption{}
\end{subfigure}
\begin{subfigure}{.24\textwidth}
\includegraphics[width=\textwidth]{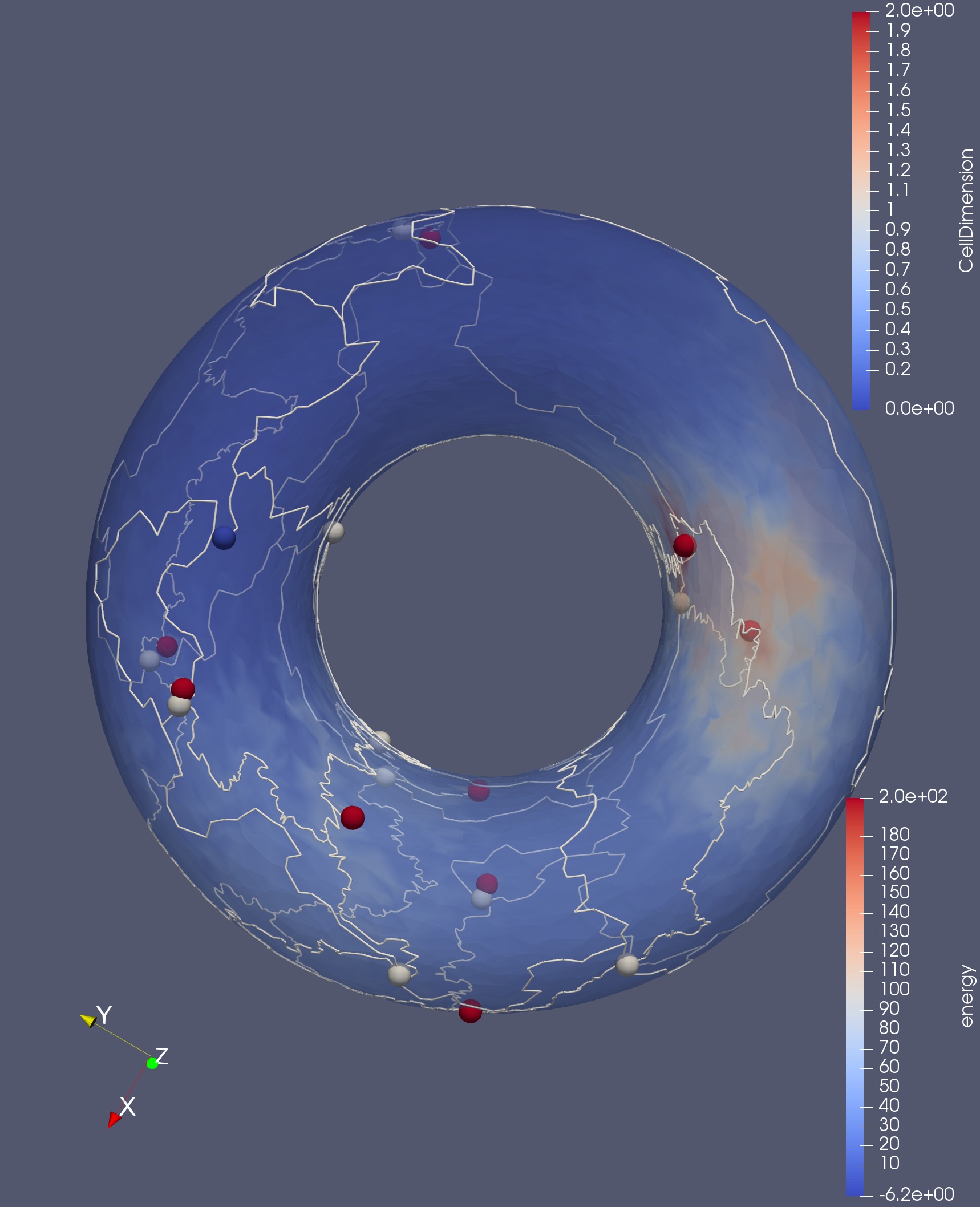}
\caption{}
\end{subfigure}
\caption{The Morse-Smale complex for Alanine Dipeptide, computed using the \emph{ttk} software. The figures (a) and (b) show the potential energy function from both sides of the torus. The image (c) shows the Morse-Smale complex, together with the Morse function after very minor topological simplification. The red points are the maxima, the white points are saddle points and the blue points are the energy minima. Finally, image (d) shows the Morse-Smale complex after severe topological simplification, leaving only one minimum.}
\label{alanine}
\end{figure}

We then did the same calculation for the free energy landscape. The results are displayed in Figure \ref{freeEalanine}. The obvious conclusion is that the free energy landscape is significantly less noisy. Another difference is the change in the number of minima. There are far more local maxima and saddle points in the potential energy landscape than in the free energy landscape.


\begin{figure}
\centering
\begin{subfigure}{.245\textwidth}
\includegraphics[width=\linewidth]{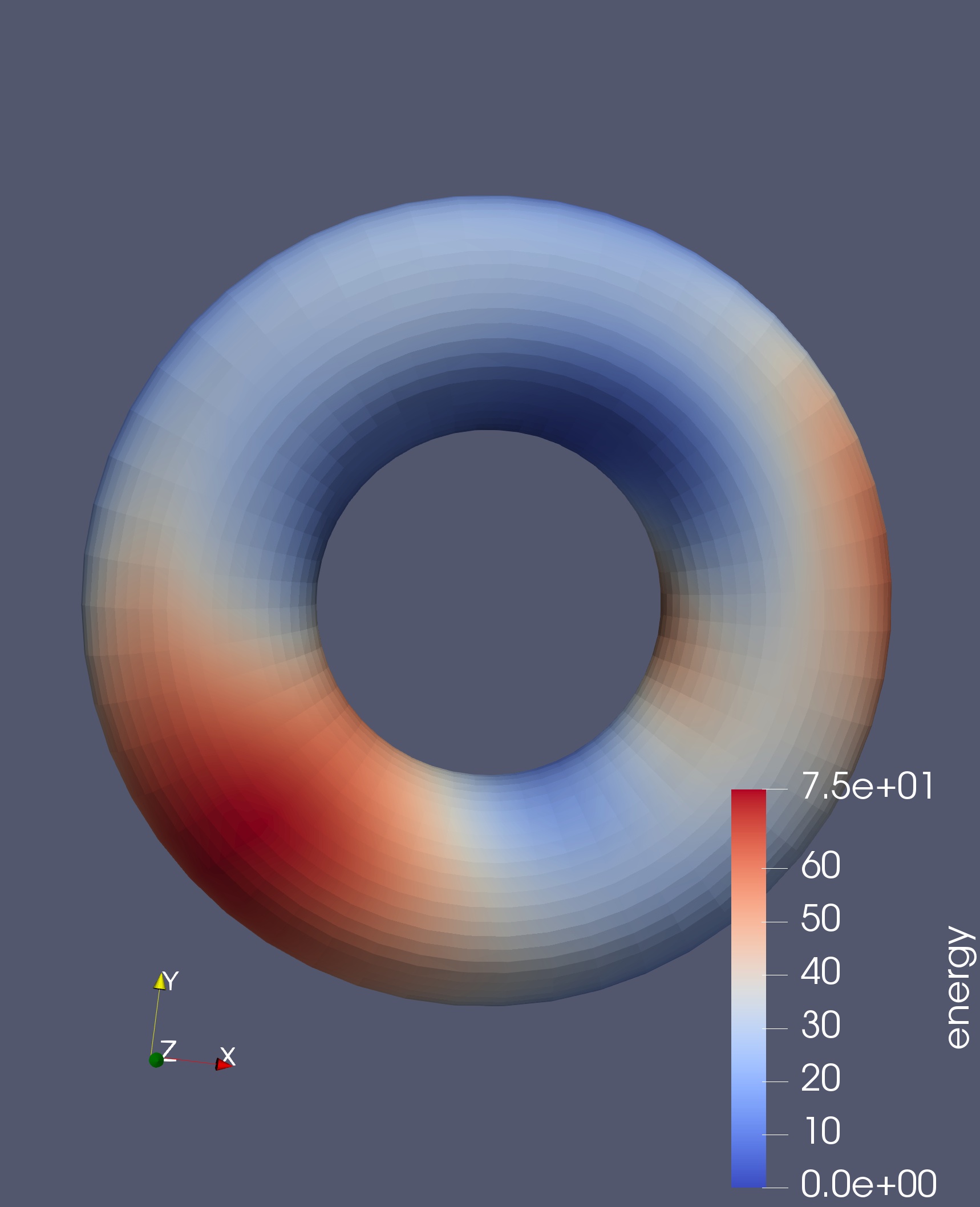}
\caption{}
\end{subfigure}
\begin{subfigure}{.245\textwidth}
\includegraphics[width=\linewidth]{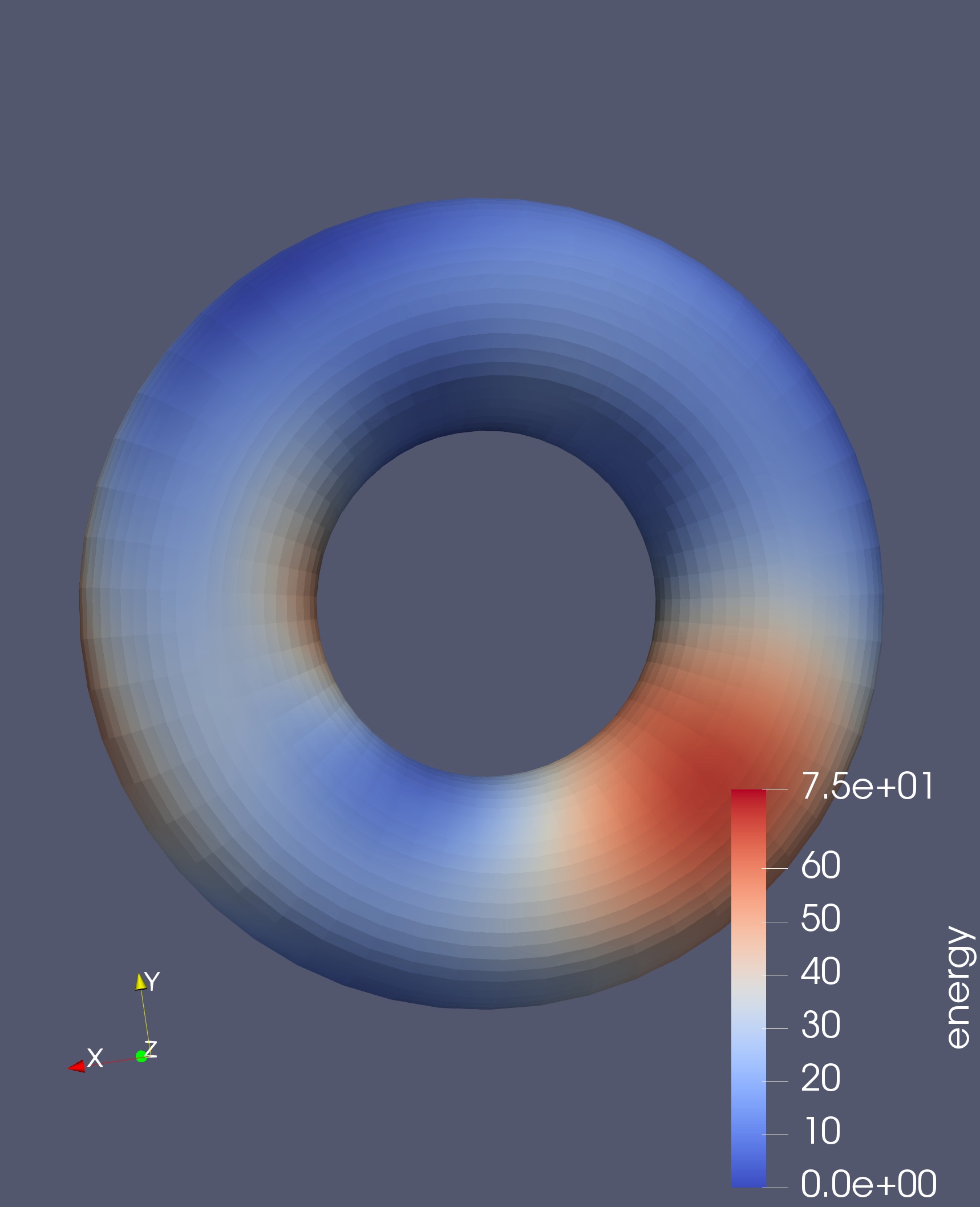}
\caption{}
\end{subfigure}
\begin{subfigure}{.245\textwidth}
\includegraphics[width=\linewidth]{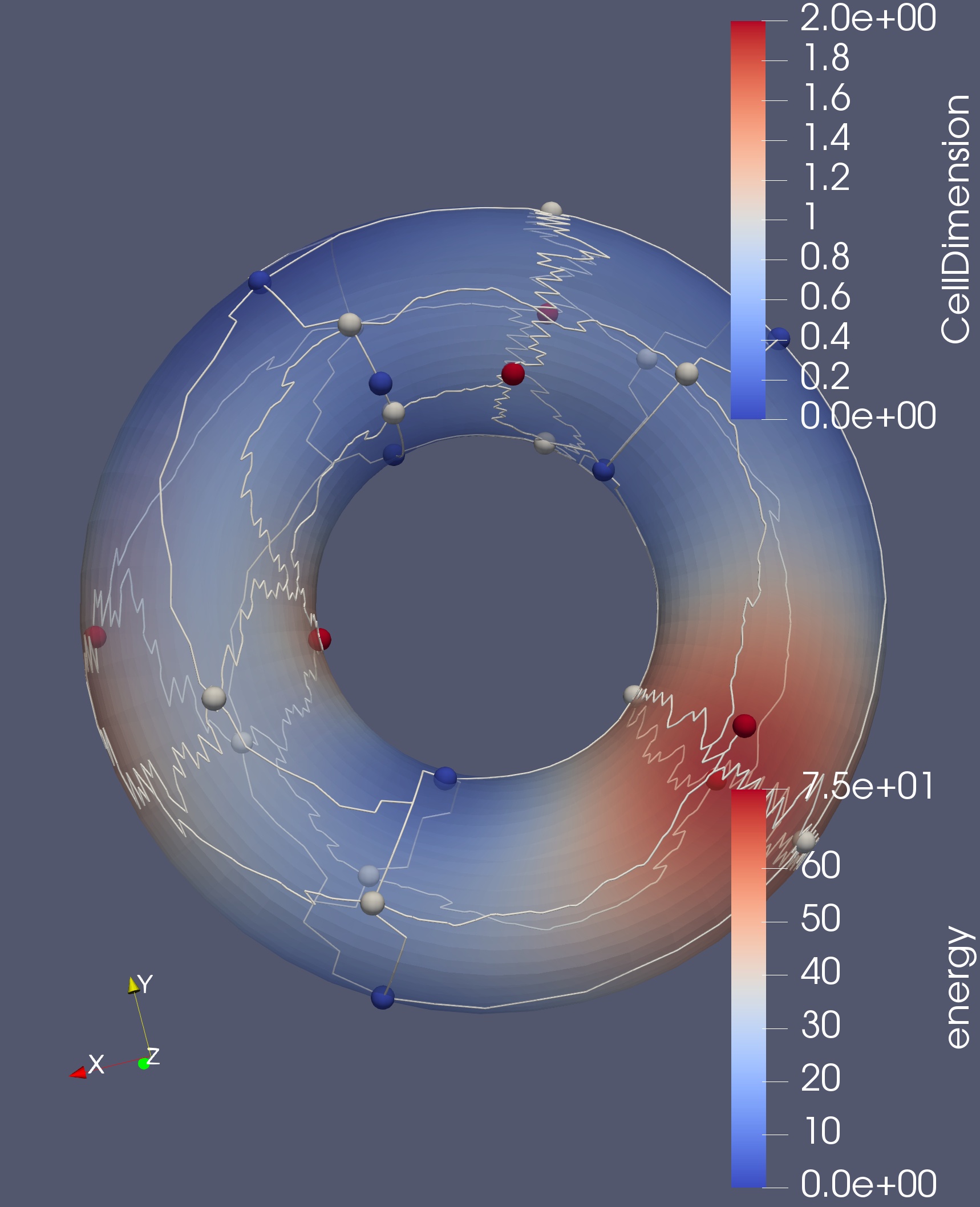}
\caption{}
\end{subfigure}
\begin{subfigure}{.245\textwidth}
\includegraphics[width=\textwidth]{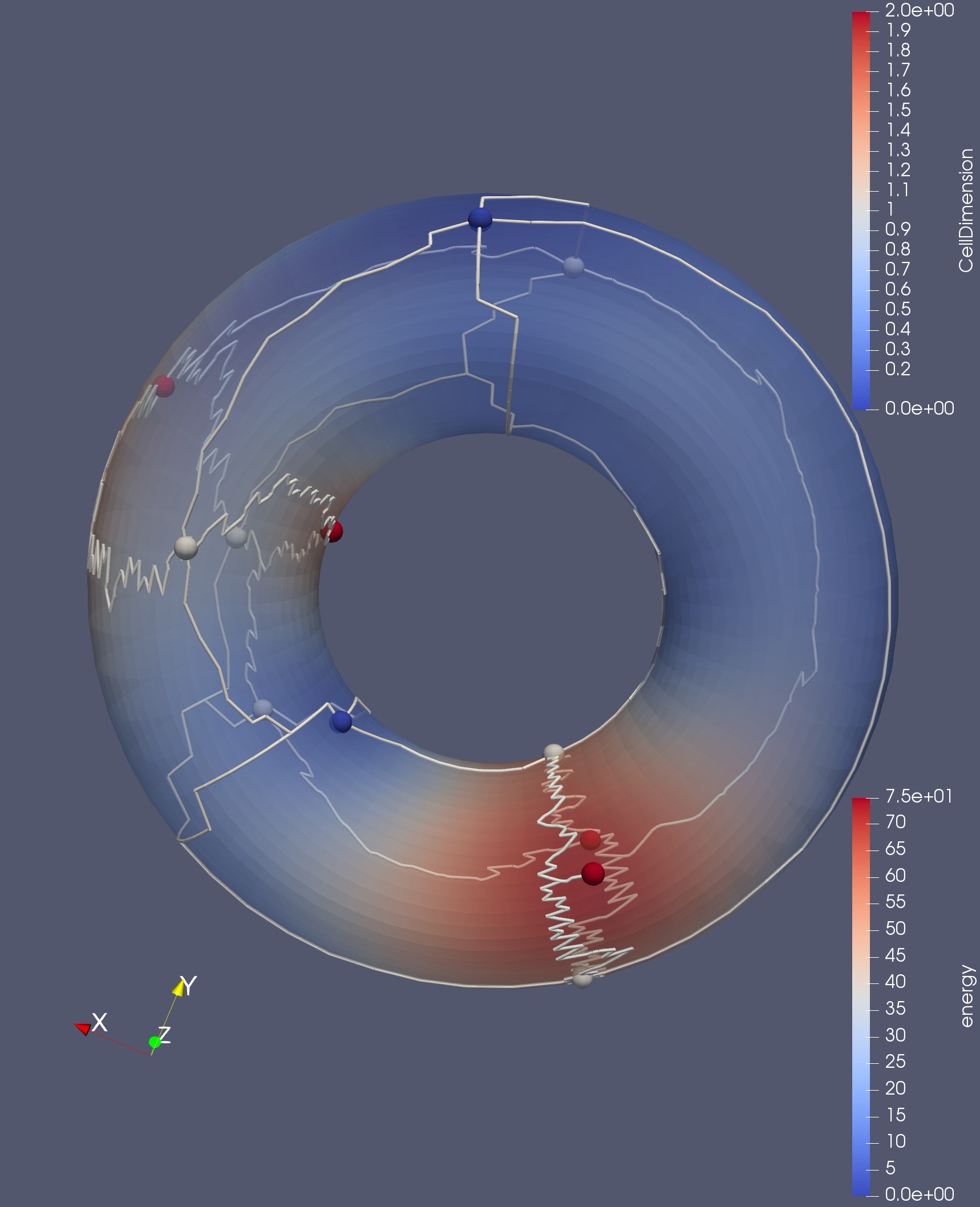}
\caption{}
\end{subfigure}
\caption{The analysis of the free energy surface for alanine dipeptide, computed using the \emph{ttk} software. The first two figures (a) and (b) 
are showing the energy function from both sides of the surface of the torus, while (c) and (d) are showing the Morse-Smale complex, together with the energy function, before and after topological simplification.}
\label{freeEalanine}
\end{figure}

\

For pentane, the conformational space is a torus as well. The potential energy surface was analysed in the same fashion as for alanine dipeptide. In Figure \ref{pentane}, the results of this analysis are depicted. The symmetry of the molecule can be observed in the energy landscape as well, in contrast to the energy landscape of alanine dipeptide. After thresholding, only one minimum remains.
\begin{figure}
\centering
\begin{subfigure}{.32\textwidth}
\includegraphics[width=\linewidth]{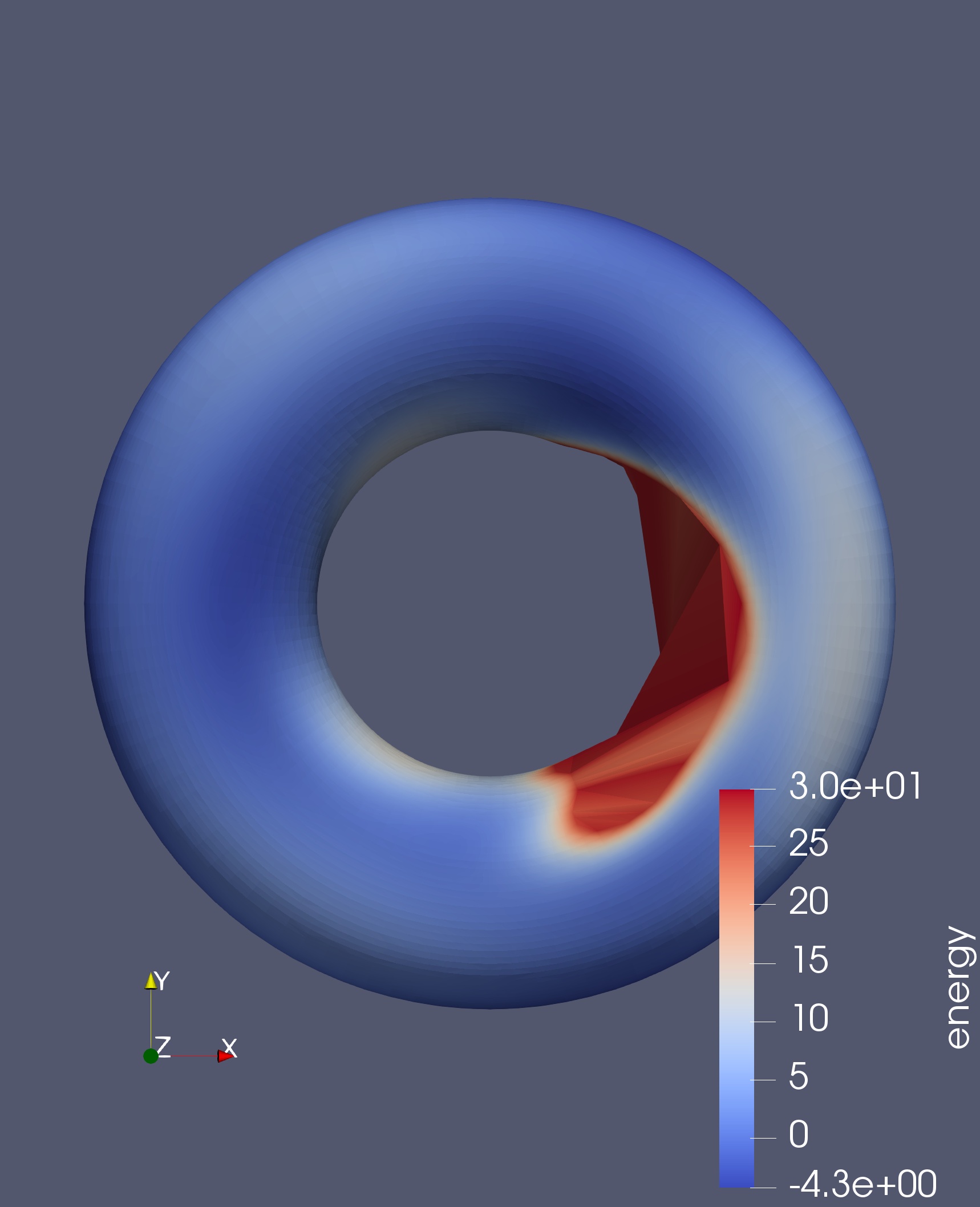}
\caption{}
\end{subfigure}
\begin{subfigure}{.32\textwidth}
\includegraphics[width=\linewidth]{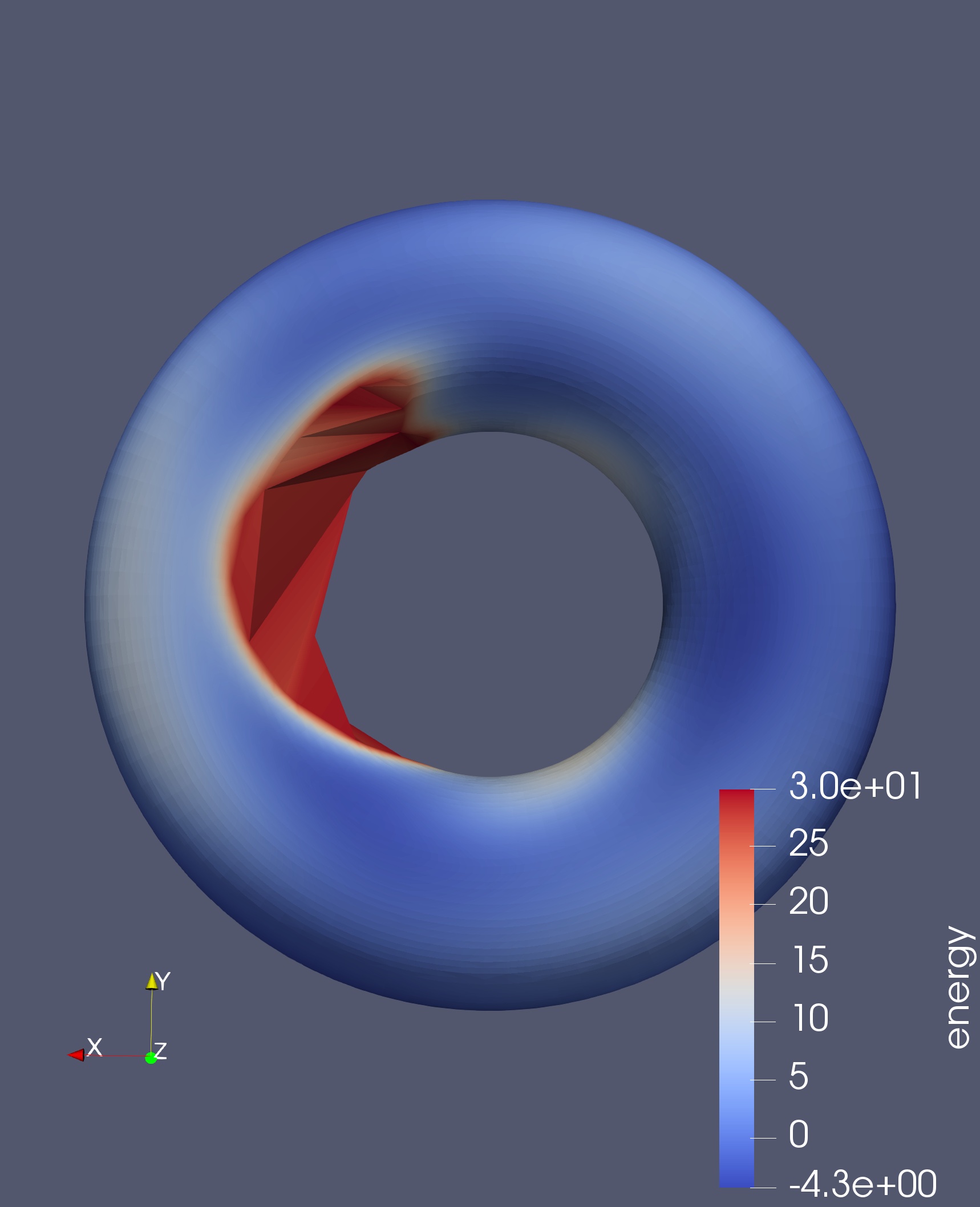}
\caption{}
\end{subfigure}
\begin{subfigure}{.32\textwidth}
\includegraphics[width=\textwidth]{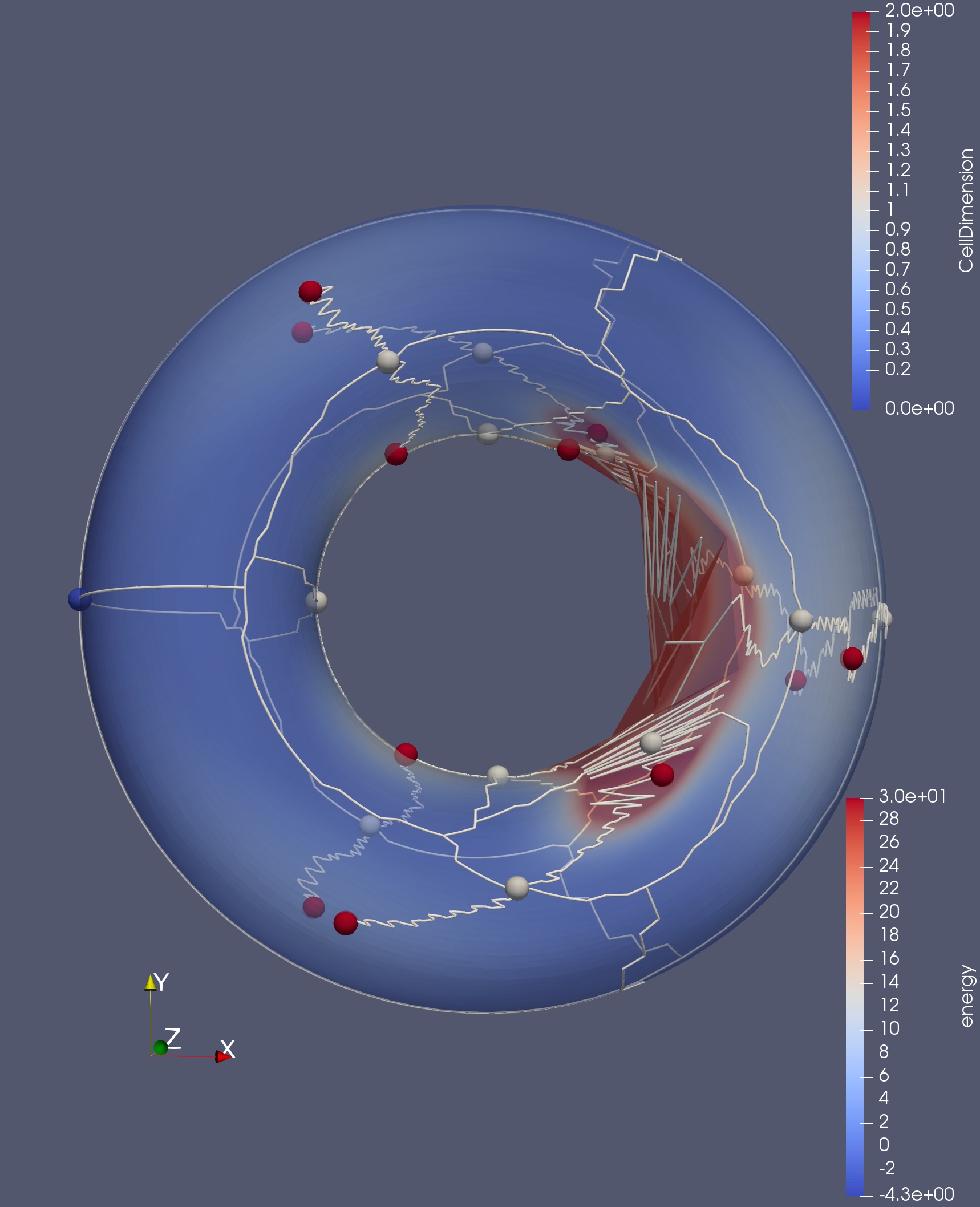}
\caption{}
\end{subfigure}
\caption{The analysis of the free energy surface for pentane, computed using the \emph{ttk} software. The first two figures (a) and (b) are showing the energy function from both sides of the surface of the torus, while (c) is showing the simplified Morse-Smale complex, together with the energy function.}
\label{pentane}
\end{figure}

Fluoropentane is a molecule that is very similar to pentane. It has the exact same underlying graph with a different vertex colouring. 
Naturally, this implies that the conformation spaces of the two molecules are identical. This means we can have a direct comparison of the differences in the energy landscapes.
\begin{figure}
\centering
\begin{subfigure}{0.32\textwidth}
\includegraphics[width=\textwidth]{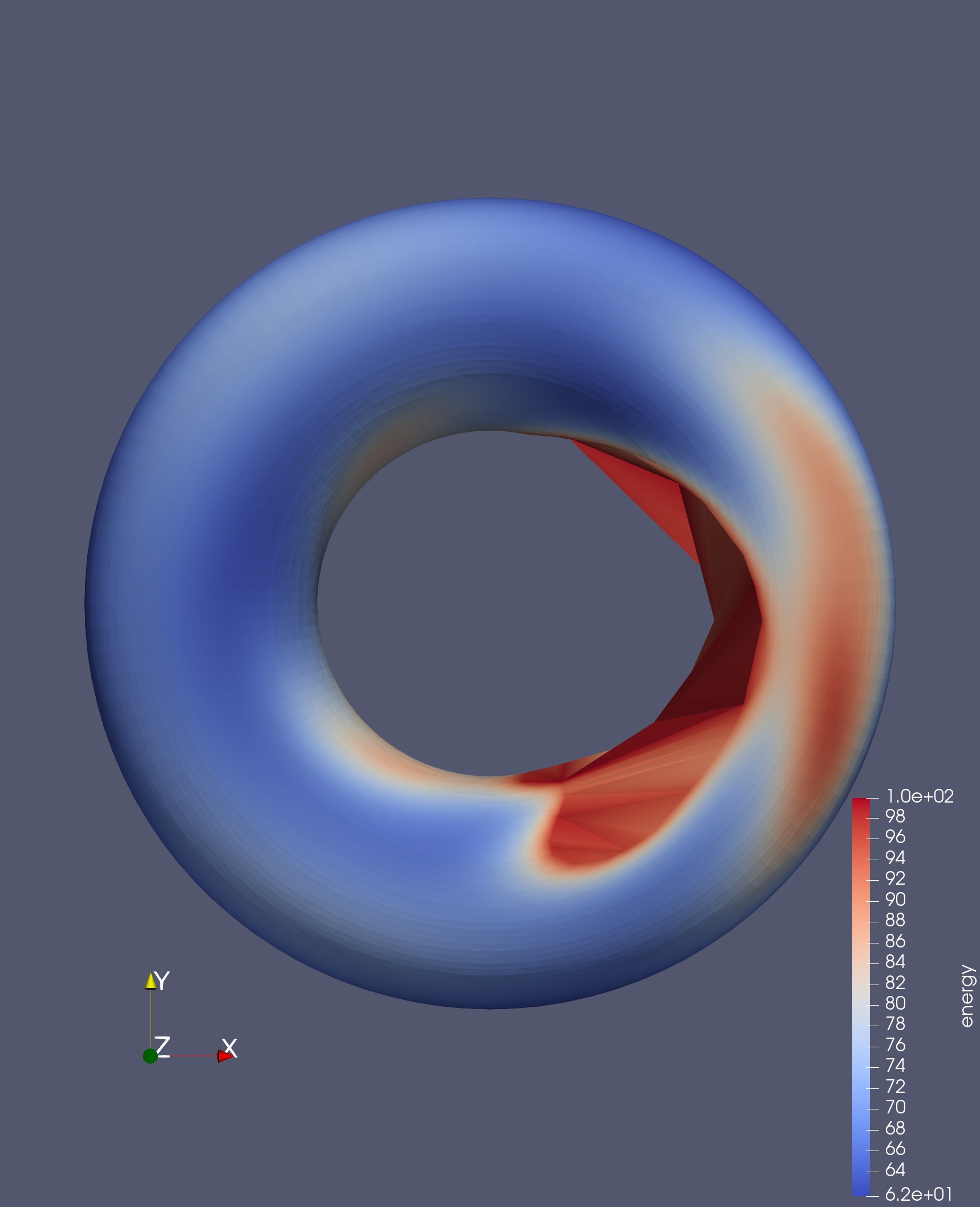}
\caption{}
\end{subfigure}
\begin{subfigure}{.32\textwidth}
\includegraphics[width=\textwidth]{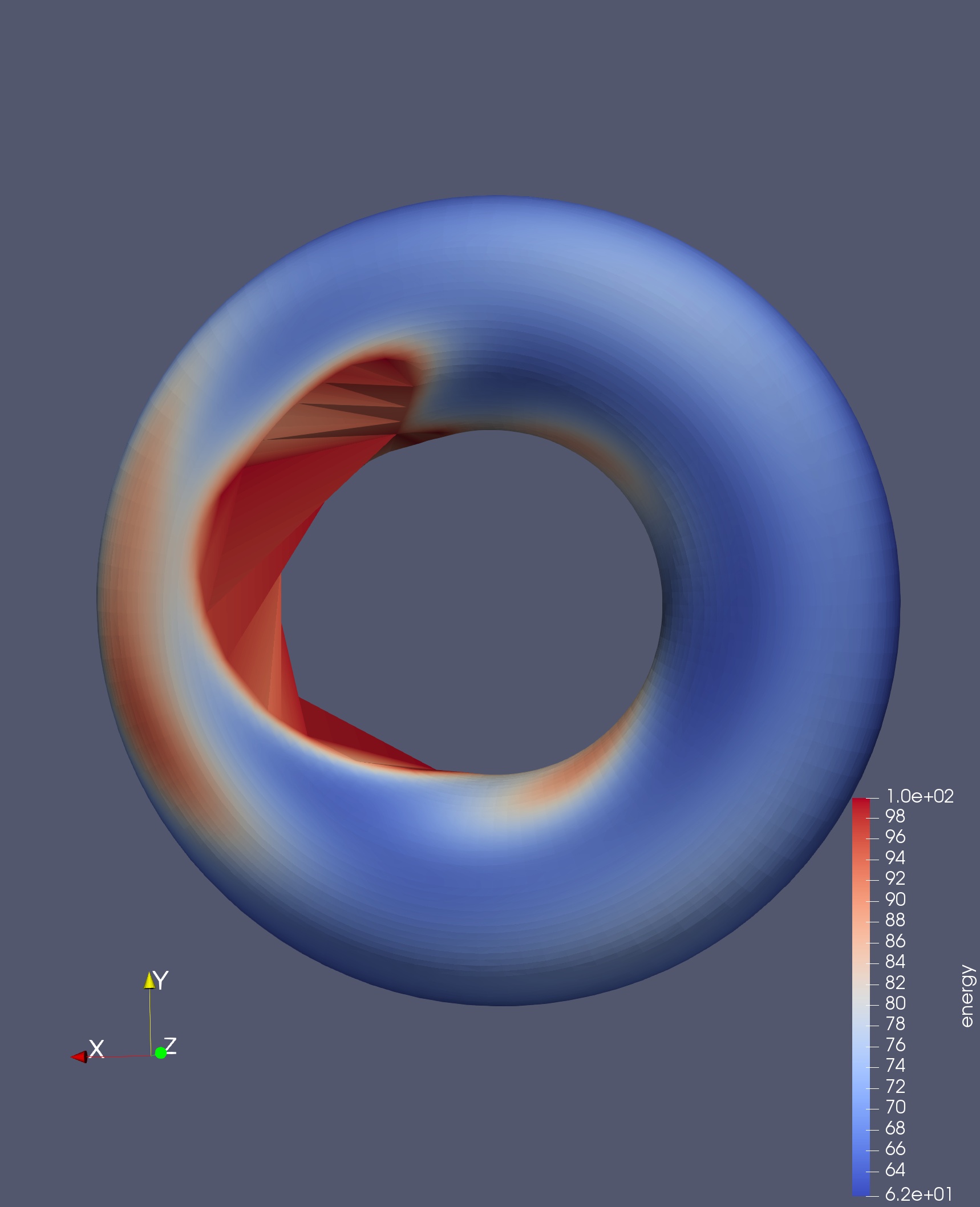}
\caption{}
\end{subfigure}
\begin{subfigure}{.32\textwidth}
\includegraphics[width=\textwidth]{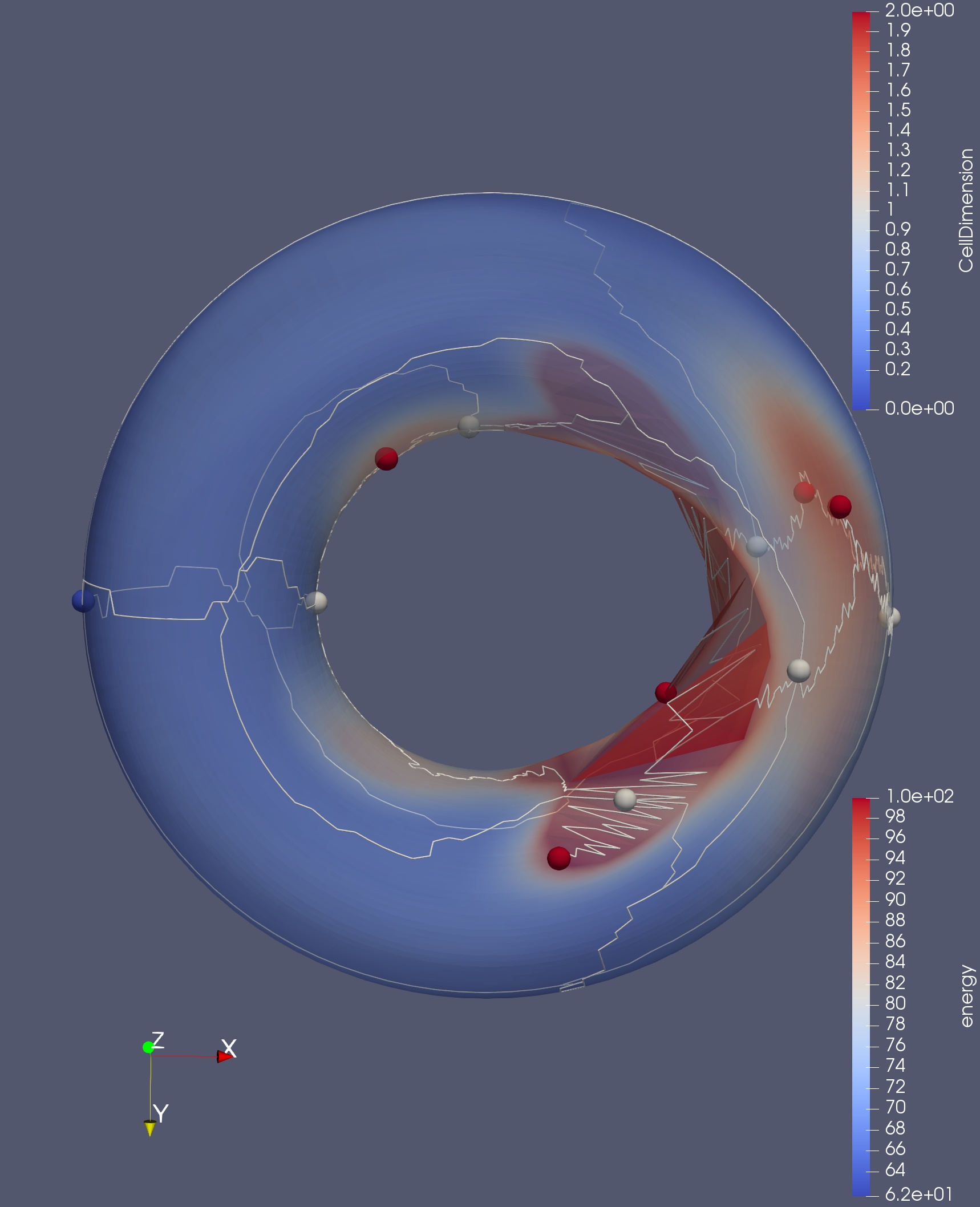}
\caption{}
\end{subfigure}
\caption{The analysis of the energy landscape for fluoropentane, computed using the \emph{ttk} software. The first two figures (a) and (b) are showing the energy function from both sides of the surface of the torus, while (c) is showing the simplified Morse-Smale complex, together with the energy function.}
\label{pentane}
\end{figure}

Finally, we are able to analyse the energy landscape for the Klein bottle component of cyclooctane. As the Klein bottle is not an orientable manifold, and also cannot be embedded into $\mathbb{R}^3$, we are unable to use \emph{ttk}. However, due to the link between Morse theory and persistent homology, we can use persistent homology to find the values of the extrema.

Firstly, we must create a simplicial complex. It can be seen from the persistence diagram of the Rips filtration of the Klein bottle component that a filtration value of $0.6$ leads to the correct topology. The MMFF94 energy of each simplex is then found, and the persistent homology of the energy function is calculated. This can be seen in Figure \ref{fig:CO_KB_energy}.

The infinite bars correspond to the topology of the space itself (i.e. the Klein bottle). Of interest, however, are the features that have a death value. Those correspond to the local critical values of the energy function. The zero-dimensional points are born at local minima and die at saddle points, while the one-dimensional points are born at saddle points and die at local maxima. We speculate that this methodology, of using persistence to find simplicial complexes and then critical values of complex energy landscapes could prove very fruitful. As an example, we propose a similarity measure in the chemical space making use of the topological properties of the energy landscapes. 

\begin{figure}[h]
\centering
\includegraphics[width=0.49\textwidth]{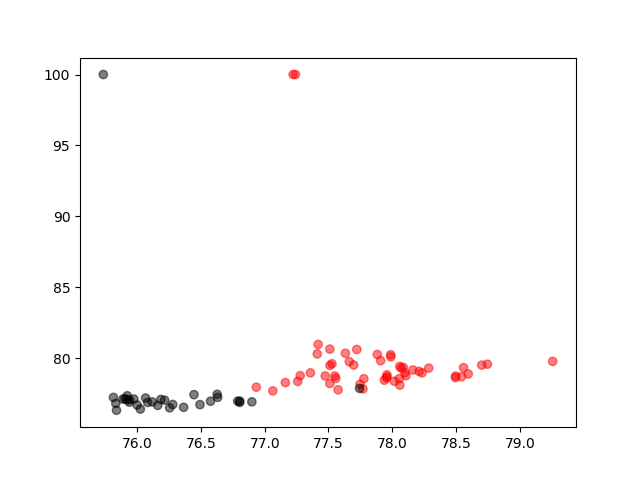}
\caption{Persistence of the sublevel sets of the MMFF94 energy function defined on the Klein bottle component. Calculated with coefficients in $\mathbb{Z}_2$.}
\label{fig:CO_KB_energy}
\end{figure}
\begin{figure}[h]
\centering
\includegraphics[width=0.49\textwidth]{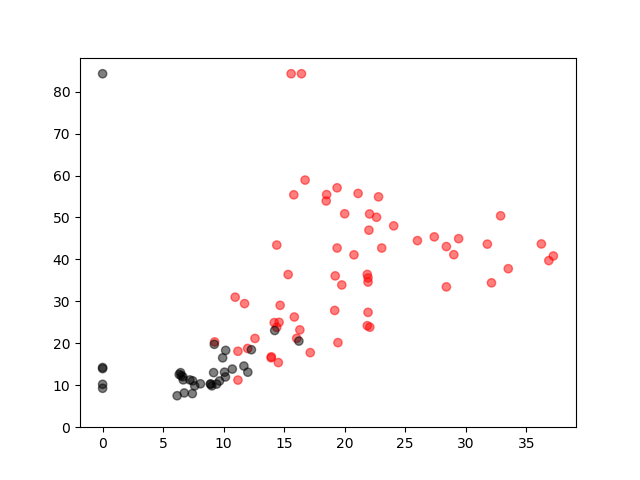}
\caption{Persistence of the superlevel sets of the MMFF94 energy function defined on the Klein bottle component. Calculated with coefficients in $\mathbb{Z}_2$.}
\label{fig:CO_KB_energy}
\end{figure}

The polynomial associated to a molecular graph $\mathcal M$ is defined as  
\begin{equation}
\mathcal P_{\mathcal M}(t):=\sum_{p\in CP} t^{\lambda(p)}
\end{equation}
where $CP$ is the set of critical points of the PES and $\lambda(p)$ is the index of the critical point $p$ (and $c(p)$ is the critical value at $p$.  Let $\mathcal P_1$ and $\mathcal P_2$ be two Morse polynomials associated to the molecules $\mathcal M_1$ and $\mathcal M_2$, respectively. We define the potential similarity measure $S:Chem\times Chem  \to \mathbb R$ in the space of molecular graphs as follows  
\begin{equation}
    S(\mathcal M_1,\mathcal M_2):=\int_{0}^1|\mathcal P_{\mathcal M_1}(t)-\mathcal P_{\mathcal M_2}(t)|^2dt
\end{equation}

\section{Conclusions}
We have developed a data-driven approach to understanding molecular conformational spaces and energy landscapes. We have used this method to demonstrate that conformational spaces of linear molecules match chemical intuition, and are namely products of circles caused by torsional flexibility. Further, bond stretching and bending do not change the homology groups of the conformational space as they lead to spaces related via a retraction. By using this method on different commonly used representations of conformational spaces (namely vector and metric spaces), we demonstrated that it is only the metric space representation that can consistently recreate the expected conformational space. Furthermore, we have demonstrated that the conformational space analysis still holds when molecular symmetry is taken into account.

Spaces of conformers of molecules play a fundamental role in Chemistry. Understanding these spaces might lead us to understand the relationship between the structure and the activity of biomolecules, drugs and other important classes of molecules. For many years chemists have modelled molecules using combinatorial objects, such as graphs. Associating both algebraic and geometric objects to molecules seems to provide new ways to study and to understand their chemical properties. In this paper we have shown that there exists a rich variety of algebraic, geometric and topological tools that can be used to model molecules and their conformational spaces. Symmetry groups of molecules are closely related to the topology of the conformational spaces. Methods developed in geometrical and topological data analysis seems to provide a good source of tools to analyse conformational spaces and functions defined on them such as (free) energy landscapes.

\appendix\section{Local PCA}\label{a:lpca}

Principal component analysis (PCA) is a mathematical method to transform a set of possible correlated variables into a set of linearly uncorrelated variables by means of orthogonal transformations. This method is used for dimensionality reduction of large data sets. The local version of PCA can be used to determine local dimensions of spaces using samples of points. In local PCA, a neighborhood of a point in a data set is transformed to a new coordinate system using orthogonal transformations. 

Let $S=\{A_i\}_{i=1}^N$ be a data set of points sampled from a topological space $X\subset R^d$, with $1\leq i\leq n$, $d\geq 1$. Given a data point $A_i$, we can reorder the data set $A_{i_1},A_{i_2},\dots,A_{i_{N}}$ with the order given by $\|A_i-A_{i_1}\|\leq \|A_i-A_{i_2}\|\leq\cdots\leq \|A_i-A_{i_N}\|$.  The $k$ nearest neighbours of $A_i$ are the first $k$-th data points of ordered set. Here we assume that $k\ll N$. The result of local $PCA$ on the set set $\mathcal K(A_i)$ of $k$ nearest neighbours of $A_i$. We try to estimate the local dimension at the point $A_i$ for the singular values of the mean shifted local data matrix $M(A_i)=[A_{i_1}-\mu A_{i_1}-\mu\cdots A_{i_N}-\mu]$, where $\mu=\frac{1}{k}\sum_{j=1}^kA_{i_j}$. Let $s_i^1\geq s_i^2\geq\cdots\geq s_i^k$ be the singular values of $M(A_i)$. If $X$ is a manifold of dimension $p$, the local dimension at any point $A_i\in X$ is well defined and indeed it is equal to $p$. In this case the local dimension of $X$ at the point $A_i$ is estimated as the number of non-vanishing singular values. When working with large data sets embbeded in high dimensional euclidean spaces it is common to estimate the local dimension as the number of singular values with the highest percentage of the variability of the data. For that it is needed to choose a threshold, that is, a number $\alpha$ between 0 and 1, and the dimension is estimated as the smallest integer $ld$ such that
$$\frac{\sum_{j=1}^{ld}s_j^2}{\sum_{j=1}^ms_j^2}>\alpha$$
where $m$ is the minimum $m=min\{k,d\}$.
\section{Principal bundles and orbifolds}\label{pgborbi}
We give a brief introduction to fibre bundles and orbifolds. Standard references for the theory of fibre bundles and orbifolds are \cite{Hsm} and \cite{ALR}. A (left) action of a topological group $G$ on a topological space $X$ is a map $G\times X\to X$, $(g,x)\mapsto gx$ such that $1x=x$ and $g(hx)=(gh)x$ for all $x\in X$ and for all $g,h\in G$.  Given $x\in X$ the \textit{stabilizer of x} is the group of elements $g\in G$ such that $gx=x$. The action of $G$ on $X$ is called free if for all $x\in X$, $G_x=\{1\}$; the action is almost free if the groups $G_x$ are finite. The \textit{orbit of x} is the set $\mathcal O_x=\{gx\in X\mid g\in G\}$. An action is effective if $gx=x$ for all $x$ then $g=1$. The action of $G$ induces a partition of $X$ into disjoint subsets called orbits. The \textit{orbit space}, denoted $X/G$, is the set of all orbits of $X$. The orbit space is endowed with the quotient topology induced by the quotient map $X\to X/G$.  

Let $B$, $E$ be topological spaces. A \emph{bundle} is a triple $(B,E,\pi)$, where $\pi:E\rightarrow B$ is a map. The space $B$ is the \emph{base space}, the space $E$ is the \emph{total space}, and the map $\pi$ is the \emph{projection} of the bundle. For each $b\in B$, the space $\pi^{-1}(b)$ is called the \emph{fibre} of the bundle over $b$.
A \emph{cross section} of a bundle $(E,B,\pi)$ is a map $s:B\rightarrow E$ such that $p\circ s=1_B$.

A \emph{principal $G$-bundle}, denoted is a bundle $(B,E,\pi)$ and an action $E\times G\to E$, $(x,g)\mapsto xg$, such that the following hold:
\begin{enumerate}[(1)]
\item the map $E\times G\to E\times E$ is given by $$(x,g)\mapsto (x,xg)\qquad x\in E,\;\; g\in G$$ is a homeomorphism onto its image;
\item $B=E/G$ and the projection $p$ is the quotient map;
\item for all $b\in B$ there exists an open neighborhood $V$ together with a homeomorphism $\phi:V\times G\to \pi^{-1}(V)$ such that the diagram
\begin{equation*}
\xymatrix{
V_{}\times G\ar[rr]^{\phi_{}}\ar[rd]_-{p_1}&&\pi^{-1}(V_{})\ar[ld]^-{\pi}\\
&V_{}
}
\end{equation*}
commutes, and for all $x\in \pi^{-1}(V)$ and $g\in G$, $\phi^{-1}(xg)=\phi^{-1}(x)g$, where the action on $V\times G$ is given by $(x,g)g'=(x,gg')$. 
\end{enumerate}

We say that the sequence
$$G\to E\xrightarrow{\pi} B$$ is a \emph{principal $G$-bundle} over $B$.

Let  $X$ be a topological space. An $n$-dimensional orbifold chart on  $X$ is a  tuple $(\tilde U,G,\varphi)$, where $\tilde U\subset \mathbb R^n$ is open, $G$ is a finite group of smooth automorphisms of $\tilde U$ and $\varphi:\tilde U\to X$ is a  map such that the diagram
\begin{equation}
\xymatrix{
\tilde U\ar[r]^-{\varphi}\ar[d]_-{q}& X\\
\tilde U/G\ar[ru]_-{\tilde\varphi},
}
\end{equation}
commutes, where $q$ is the quotient map and $\tilde\varphi $ is a homeomorphism onto its image. Two orbifold charts $(\tilde U_1,G_1,\varphi_1)$ and $(\tilde U_2,G_2,\varphi_2)$ on $X$ are compatible if for every $x\in U_1\cap U_2\subset X$ there exists a neighbourhood $W$ of $x$ and an orbifold chart  $(\tilde W,H, \psi)$, with $\psi(\tilde W)=W$, such that there are smooth embeddings $\lambda_i:\tilde W\to U_i$ and $\psi=\varphi_i\circ\lambda_i$, for $i=1,2$. An  \textit{orbifold atlas} on $X$ is a collection $\mathcal U=\{U_\alpha,G_\alpha,p_\alpha\}_{\alpha\in I}$ of compatible $n$-dimensional charts that cover $X$.  An $n$-dimensional \textit{orbifold} $\mathcal X$ is a paracompact Hausdorff space $X$ with an orbifold atlas of $n$-dimensional charts on $X$. We additionally will require for each chart $(\tilde U,G,\varphi)$, the action of $G$ on $\tilde U$ to be effective. Since every smooth action is locally smooth, any orbifold has an atlas of the form $(\mathbb R^n,G,\varphi)$ where $G\subset O(n)$ acts on $\mathbb R^n$ via orthogonal representation. CITE ADEM
Given a point $x\in X$ and a local chart $(\tilde U,G,\varphi)$ around $x$, the \textit{local group} at $x=\varphi(y)$, denoted $G_x$, is the stabiliser of $y$. The set of singularities $\Sigma(\mathcal X)$ or singular locus of an orbifold $\mathcal X=(X,\mathcal U)$ is defined by $$\Sigma(\mathcal X)=\{x\in X\mid G_x\neq\{1\}\}.$$

The set $\Sigma(\mathcal X)$ may have several connected components and if $\Sigma \mathcal X=\emptyset$ then  $\mathcal X$ is a manifold. Orbifolds arise in a natural way via actions of groups of transformations on manifolds. A quotient orbifold $\mathcal X$ is the one obtained as the quotient compact Lie group acting smoothly, effectively and almost freely on a smooth manifold $M$.

\section{Persistent homology}
The persistence modules most commonly used in topological data analysis arise from filtered simplicial complexes, whose combinatorial nature is very suitable for computations.

A simplicial complex $K$ with vertex set $S$ is a family of nonempty, finite subsets of $S$. Subsets of $S$ of $p+1$ elements are called $p$-simplices. A $p$-simplex is represented as a list of its vertices $[v_0,\cdots ,v_p]$. In a simplicial complex $K$, one requires that all elements $v$ of $S$ form $0$-simplices $[v]$ in $K$, and if $\sigma\in K$ and $\emptyset\neq\tau\subset \sigma$, then $\tau\in K$. We usually consider the case when $S$ is finite. A simplicial complex $K$ has the associated space $| K | $, called the geometric realisation, which can be regarded as a triangulated polyhedron in an appropriate Euclidean space.
\

We need a metric space to make use of persistent homology. Let $S$ be the set of points sampled from the conformational space. For any two points $z,w\in S$, the distance $d_S(z,w)$ is given by the Euclidean distance between $z$ and $w$.

In general, to a metric space $(X,d)$ equipped with a function $f:X\rightarrow \mathbb{R}$ one can assign a persistence module as follows: First, we define the \emph{sub-level set of $X$ with respect to $\alpha\in \mathbb{R}$} by
$$X^\alpha:=\{x\in X \ | \ f(x)\leq \alpha \}.$$
If  $\alpha_1 < \alpha_2$ we have the inclusion $X^{\alpha_1}\subseteq X^{\alpha_2}$.

For $p=1,2,\cdots$, the $p^{th}$ homology gives information about the $p$-dimensional \emph{holes}: For $p=0$, this refers to connected components, for $p=1$, it refers to loops, for $p=2$, it is cavities or voids, etc.
The number $\alpha$ is called a $p$-critical value of $f$ if the number of $p$-dimensional holes of $X^{\alpha-\epsilon}$ and $X^{\alpha+\epsilon}$ changes for all small $\epsilon >0$. For $p=0$, by $0$-dimensional holes we mean connected components, for $p=1$, we mean standard holes, for $p=2$ we are talking about voids or cavities, etc.

\

For the set $S$ of points sampled from a conformational space this works as follows. We define $\Delta^S$ to be the simplex with the elements of $S$ as its vertices. Then we define a function $f:\Delta^S\rightarrow \mathbb{R}$, which assigns to each point $s\in S$ the value $0$, and to each higher simplex in $\Delta^S$ the maximal pairwise distance between the vertices it contains.

We take all $p$-critical values $\alpha_1<\alpha_2<\cdots <\alpha_n$. Then the sub-level sets connected by natural inclusion maps give rise to a filtration:
$$X^{\alpha_1}\subseteq X^{\alpha_2}\subseteq \cdots\subseteq X^{\alpha_n}=X.$$

\

The zeroth persistence diagram $Dg_0(f)$ captures the connected components that were born or died passing through a critical point. It consists of a set of points in the plane $\{(a,b)\in \mathbb{R}^2 \ | \ a<b\}$. Each point can occur more than once. The coordinates $a$ and $b$ of a point indicate the birth and death times of the connected components. The multiplicity of the point indicates the number of connected components that were born at time $a$ and died at time $b$. The first persistence diagram $Dg_1(f)$ does the same for loops instead of connected components. Similarly for higher dimensional persistence diagrams.

\section{Morse theory}

Morse theory addresses the relationship between the properties of a manifold and the properties of its real-valued functions. In particular, it aims to relate the topological properties of the $n$-dimensional manifold $M$ with analytical properties of smooth functions $f:M\rightarrow \mathbb{R}$.

For an in-depth discussion of Morse theory, we refer the reader to, e.g. \cite{Milnor}. Here we only recall the concepts that we are interested in when generalising to the discrete setting.

Morse theory captures the relationship between a function $f$ and the manifold $M$ by relying on the gradient $\nabla_x f(x)=df /dx (x)$ and its flow. The gradient defines a preferential direction at every point (the direction of steepest ascent) except where it vanishes (i.e. where $\nabla_xf=0$). Those particular points are called critical points and can be classified according to the sign of the Hessian matrix, the $n\times n$ matrix of the second derivatives $H_f (x) = d^2f/dx_idx_j (x)$.

Let $M$ be a (smooth) manifold, and let $f:M\rightarrow \mathbb{R}$ be a smooth function. A point $m\in M$ is called a critical point of $f$ if $\nabla_f(m)= 0$. The index of $m$ is defined to be the number of negative eigenvalues of the Hessian matrix $H_f(m)$.

Note that according to this definition, the index of a critical point is defined by the sign of the eigenvalues of the Hessian, which must therefore be non-null. This condition is essential to Morse theory: a function $f$ which obeys Morse theory must necessarily satisfy this constraint. Conversely, such functions are called Morse functions:

Let $m$ be a critical point of $f:M\rightarrow \mathbb{R}$. We say that $m$ is a non-degenerate critical point if and only if the nullity of $H_f(m)$, i.e. the dimension of the $0$-eigenspace of $H_f(m)$, is zero. We say that $f$ is a Morse function if and only if every critical point of $f$ is non-degenerate.

On an $n$-dimensional manifold $M$, the index of any local maximum point is $n$, and the index of any local minimum point is $0$.

At the location of any non-critical point, one can define specific lines, the integral lines, by following the preferred direction of the gradient flow.

Integral lines represent the flow along the gradient between critical points. They have the following properties:
\begin{itemize}
\item Two integral lines are either disjoint or the same.
\item Integral lines cover all of $M$.
\item The origin and destination of an integral line are critical points of $f$ (except at boundary).
\item In a gradient vector field, integral lines are monotonic, i.e. the origin is distinct from the destination.
\end{itemize}

Let $m$ be a critical point of $f:M\rightarrow \mathbb{R}$. The ascending manifold of $m$ is the set of points belonging to integral lines whose origin is $m$. The descending manifold of $m$ is the set of points belonging to integral lines whose destination is $m$. Note that ascending and descending manifolds are also referred to as unstable and stable manifolds.

For a Morse function $f:M\rightarrow \mathbb{R}$, the complex of the descending manifold of $f$ is called the Morse complex.

A Morse function $f$ is \emph{Morse-Smale} if the ascending and descending manifolds intersect only transversally. Intuitively, an intersection of two manifolds is transversal when they are not 'parallel' at their intersection. A pair of critical points that are the origin and destination of an integral line in the Morse-Smale function cannot have the same index. Furthermore, the index of the critical point at the origin is less than the index of the critical point at the destination.

Given a Morse-Smale function $f$, the Morse-Smale complex of $f$ is the complex formed by the intersection of the Morse complex of $f$ and the Morse complex of $-f$.

For a $2$-dimensional manifold $M$, any Morse-Smale function will give rise to a Morse-Smale complex with the following combinatorial properties:
\begin{itemize}
\item The nodes of the complex are exactly the critical points of the Morse function.
\item The saddle points have exactly four arcs incident to them.
\item All regions are quadrangles.
\item The boundary of a region alternates between saddle points and extrema.
\end{itemize}

\subsection{Discrete Morse Theory}
Discrete Morse Theory was defined by Robin Forman in 1995 \cite{Forman1,Forman2}. It is, as the name suggests, a combinatorial analogue of classical Morse Theory, providing combinatorial equivalents of several core concepts of classical Morse theory, such as Morse functions, gradient vector fields, critical points, and a cancellation theorem for the elimination of pairs of critical points from a vector field. The discrete theory maintains the intuition of its classical counterpart while enabling simple and explicit constructions that are considerably more involved in the smooth setting.

Let $M$ be any finite simplicial complex, $K$ the set of simplices of $M$, and $K_p$ the simplices of dimension $p$. A discrete Morse function on $M$ will actually be a function on $K$. That is, we assign a single real number to each simplex in $M$. Write $\sigma^{(p)}$ if $\sigma$ has dimension $p$, and $\tau>\sigma$ if $\sigma$ lies in the boundary of $\tau$. We say a function $f:K\rightarrow \mathbb{R}$ is a discrete Morse function if it satisfies the following two properties:
\begin{enumerate}
\item For any $\tau\in K$, the number of $\sigma<\tau$ for which $f(\tau)\leq f(\sigma)$ is at most one.
\item For any $\sigma\in K$, the number of $\sigma<\tau$ for which $f(\tau)\leq f(\sigma)$ is at most one.
\end{enumerate}

We say $\sigma^{(p)}$ is critical (with index $p$) if 
\begin{enumerate}
\item there is no $\tau < \sigma^{(p)}$ with $f(\tau)\geq f(\sigma)$,
\item there is no $\tau > \sigma^{(p)}$ with $f(\tau)\leq f(\sigma)$.
\end{enumerate}

So in the condition of a Morse function above, the conditions can be broken for one simplex each, but if the properties are satisfied for all face simplices or all coface simplices, we have a critical simplex. This definition provides a discrete analogue of the smooth notion of a critical point of index $p$. Note that in the discrete case, we have a simplex of dimension $p$ rather than a point.

A discrete Morse function $f$ over $K$ defines a discrete gradient vector field by coupling simplices in gradient arrows (also called gradient pairs):
\begin{enumerate}
\item if a simplex $\sigma_p$ has exactly one lower valued coface $\sigma_{p+1}$, then [$\sigma_p, \sigma_{p+1}$] form a gradient arrow,
\item if a simplex $\sigma_p$ has exactly one higher valued facet $\sigma_{p - 1}$, then [$\sigma_{p-1}, \sigma_p$] form a gradient arrow,
\item if a simplex $\sigma_p$ is critical, it does not belong to a gradient arrow.
\end{enumerate}

So a discrete vector field $V$ on a simplicial complex $K$ is a set of pairs of simplices $(\sigma,\tau )$, with $\sigma$ a facet of $\tau$, such that each simplex of $K$ is contained in at most one pair of $V$. A simplex $\sigma\in K$ is critical with respect to $V$ if $\sigma$ is not contained in any pair of $V$. The dimension of a critical simplex is also called its index.

A pair $(\sigma,\tau)$ in a discrete vector field $V$ can be visualised by an arrow from $\sigma$ to $\tau$. We consider an important subclass of vector fields in which the arrows do not form closed paths. This can be made precise using the concept of $V$-paths.

Given a discrete vector field $V$, a $V$-path is a sequence $\tau_0,\sigma_1,\tau_1,\cdots,\sigma_l,\tau_l,\sigma_{l+1}$, where $(\sigma_i,\tau_i) \in V$ for every $i = 1,\cdots,l$, and each $\sigma_{i+1}$ is a facet of $\tau_i$ for each $i = 0,\cdots,l$. If $l = 0$, the $V$-path is trivial. This $V$-path is cyclic if $l > 0$ and $(\sigma_{l+1},\tau_0)\in V$; otherwise, it is acyclic, in which case we call this $V$-path a gradient path. We say that a gradient path is a vertex-edge gradient path if dimension $(\sigma_i) = 0$, implying that dimension $(\tau_i) = 1$. Similarly, it is an edge-triangle gradient path if dimension $(\sigma_i) = 1$.

A discrete vector field $V$ becomes a discrete gradient vector field if there are no cyclic $V$-paths induced by $V$.

Intuitively, a gradient path $\tau_0,\sigma_1,\tau_1,\cdots,\sigma_l,\tau_l,\sigma_{l+1}$ is the analogue of an integral line in the smooth setting. Different from the smooth setting, a maximal gradient path may not start or end at critical simplices. However, those that do (i.e, when $\tau_0$ and $\sigma_{k+1}$ are critical simplices) are analogous to maximal integral lines in the smooth setting which start and end at critical points, and for convenience one can think of critical $k$-simplices in the discrete Morse setting as index-$k$ critical points in the smooth setting.

For example, for a function on a two-dimensional simplicial complex, critical $0$-, $1$- and $2$-simplices in the discrete Morse setting correspond to minima, saddles and maxima in the smooth setting, respectively.

For a critical edge $e$, we define its stable (or descending) manifold to be the union of edge-triangle gradient paths that ends at $e$. Its unstable (or ascending) manifold is defined to be the union of vertex-edge gradient paths that begins with $e$.

\subsection{Morse theory and Persistent homology for topological simplification}
From the perspective of persistence for a Morse real-valued function, the infinite persistence barcodes determine the Betti numbers of the underlying manifold and the finite barcodes give information about the multitude of visible trajectories between the critical points.

The ends of any barcode for $f$ are among the critical values of $f$, the values $t$ for which the homology of the level of $f$ at $t$ differs from the homology of levels at values in an arbitrary neighborhood of $t$.

Given a scalar function $f$, such as the energy function, defined on a topological space, we can talk about topology-preserving simplification of the scalar function \cite{Bauer2012}. By this we mean that topological features with persistence greater than a specified level are guaranteed to remain.

In the discrete case, our topological space gets replaced by a simplicial complex, while the scalar function is required to be a piecewise linear function defined on the simplices of the simplicial complex, a discrete Morse function as defined above. The topological features are then the pairings of critical cells, which correspond to the persistent bars in the persistence barcode.

There exist special configurations where two critical points are linked by two or more different paths. Those particular configurations cannot be cancelled, as applying a discrete gradient reversal would result in the formation of a $V$-path that loops onto itself. However, for cancellable pairs, we specify a threshold, and cancel all critical pairs that correspond to a persistence bar below that threshold.

This is, in essence, a way of denoising the scalar function, while having a tight control of the size of noise.

\section*{Acknowledgements}
This research was supported by the EPSRC grants EP/N014189/1 Joining the dots: from data to insight, and EP/L015722/1 Centre for Doctoral Training in Theory and Modelling in Chemical Sciences. LS thanks Khaled Abdel-Maksoud for assistance in running metadynamics simulations.

\bibliographystyle{amsplain}
\bibliography{Conf_spaces.bib}
\end{document}